\definecolor{amaranth}{rgb}{0.9, 0.17, 0.31}
\definecolor{americanrose}{rgb}{1.0, 0.01, 0.24}
\definecolor{darkgreen}{rgb}{0.0, 0.2, 0.13}
\definecolor{darkpastelgreen}{rgb}{0.01, 0.75, 0.24}
\newcounter{sarrow}
\newcommand{\mult}{\mathtt{mult}}
\newcommand{\resset}{\mathtt{RES}_{\mathrm{set}}}
\newcommand{\resbag}{\mathtt{RES}_{\mathrm{bag}}}
\newcommand{\datarule}{{\,:\!\!-\,}} %
\newcommand{\calN}{\mathcal{N}}	
\newcommand{\calL}{\mathcal{L}}
\newcommand{\calH}{\mathcal{H}}
\newcommand\adom{\mathrm{Adom}}
\newcommand\red{\mathtt{IF}}
\newcommand{\query}{Q}
\newcommand{\introparagraph}[1]{\textbf{#1.}} %
\newenvironment{psmallmatrix}
  {\left(\begin{smallmatrix}}
  {\end{smallmatrix}\right)}
\definecolor{orcidlogocol}{HTML}{A6CE39}
\tikzset{
  orcidlogo/.pic={
    \fill[orcidlogocol] svg{M256,128c0,70.7-57.3,128-128,128C57.3,256,0,198.7,0,128C0,57.3,57.3,0,128,0C198.7,0,256,57.3,256,128z};
    \fill[white] svg{M86.3,186.2H70.9V79.1h15.4v48.4V186.2z}
                 svg{M108.9,79.1h41.6c39.6,0,57,28.3,57,53.6c0,27.5-21.5,53.6-56.8,53.6h-41.8V79.1z M124.3,172.4h24.5c34.9,0,42.9-26.5,42.9-39.7c0-21.5-13.7-39.7-43.7-39.7h-23.7V172.4z}
                 svg{M88.7,56.8c0,5.5-4.5,10.1-10.1,10.1c-5.6,0-10.1-4.6-10.1-10.1c0-5.6,4.5-10.1,10.1-10.1C84.2,46.7,88.7,51.3,88.7,56.8z};
  }
}
\DeclareRobustCommand\orcidicon[1]{\href{https://orcid.org/#1}{\mbox{\scalerel*{
\begin{tikzpicture}[yscale=-1, transform shape]
    \pic{orcidlogo};
\end{tikzpicture}
}{|}}}}
\newtheoremstyle{mystyle}%
  {}%
  {}%
  {\itshape}%
  {}%
  {\bfseries}%
  {.}%
  { }%
  {}%
\theoremstyle{mystyle}
    \let\Cref\crtCref
    \let\cref\crtcref
\newcommand{\myparagraph}[1]{\paragraph{\textbf{#1}}}
\newcommand{\mirror}[1]{#1^{\mathrm{R}}}
\newcommand{\start}{\mathrm{start}}
\newcommand{\eend}{\mathrm{end}}
\newcommand{\ii}{\textup{in}}
\newcommand{\oo}{\textup{out}}
\newcommand{\LL}{\calL}
\newcommand*\circled[1]{\tikz[baseline=(char.base)]{
            \node[shape=circle,fill=black,draw,text=white, inner sep=0.7pt] (char) 
			{\textup{#1}};}
			}
\begin{document}

\title{Resilience for Regular Path Queries: Towards a Complexity
Classification}

\author{Antoine Amarilli}
\orcid{0000-0002-7977-4441}
\affiliation{%
    \orcidicon{0000-0002-7977-4441}
    \institution{Univ. Lille, Inria, CNRS, Centrale Lille, UMR 9189
    CRIStAL}\city{F-59000 Lille}\country{France}}
\email{antoine.a.amarilli@inria.fr}

\author{Wolfgang	Gatterbauer}
\orcid{0000-0002-9614-0504}
\affiliation{%
    \orcidicon{0000-0002-9614-0504}
	Northeastern University\country{USA}}
\email{w.gatterbauer@northeastern.edu}

\author{Neha Makhija}
\orcid{0000-0003-0221-6836}
\affiliation{%
    \orcidicon{0000-0003-0221-6836}
	University of Massachusetts, Amherst\country{USA}}
\authornote{Part of this research was done while the author was at Northeastern University.}
\email{nehamakhija@umass.edu}

\author{Mikaël Monet}
\orcid{0000-0002-6158-4607}
\affiliation{%
\orcidicon{0000-0002-6158-4607}
  \institution{Univ. Lille, Inria, CNRS, Centrale Lille, UMR 9189 CRIStAL, F-59000 Lille}\country{France}}
\email{mikael.monet@inria.fr}

\author{Martín Muñoz}
\orcid{0009-0003-3294-6159}
\affiliation{%
\orcidicon{0009-0003-3294-6159}
  \institution{Univ. Artois, CNRS, UMR 8188, Centre de Recherche en Informatique de Lens (CRIL), F-62300 Lens}\country{France};
  \institution{Millennium Institute for Foundational Research on Data (IMFD), Santiago}\country{Chile}}
\email{munoz@cril.fr}

\renewcommand{\appendixprelim}{}

\begin{abstract}
  The \emph{resilience problem} for a query and an input set
  or bag database is to compute the minimum number of facts to remove from the
  database to make the query false.
  In this paper, we study how to compute the resilience of 
  Regular Path Queries (RPQs) over graph databases.
  Our goal is to characterize the
  regular languages $L$ for which it is tractable to compute the resilience of
  the existentially-quantified RPQ built from~$L$.
    
    We show that computing the resilience in this sense is tractable (even in combined complexity) for all RPQs defined
    from so-called \emph{local languages}. 
    By contrast, we show hardness in data complexity for RPQs defined from the following language classes
    (after reducing the languages to eliminate redundant words):
    all finite languages featuring a word containing a repeated
    letter,  and
    all languages featuring a specific kind of counterexample to being
    local (which we call
    \emph{four-legged} languages). The latter include in particular all
    languages that are
    not \emph{star-free}. Our results also imply hardness for all non-local languages with a so-called
    \emph{neutral letter}.
    We last show tractability for some classes of non-local languages, namely
    the so-called \emph{bipartite chain languages} and \emph{one-dangling
    languages}, and 
    highlight some remaining obstacles towards a full
    dichotomy.
\end{abstract}

\maketitle

\begin{toappendix}
  \crefalias{section}{appendix}
  \crefalias{subsection}{appendix}
\end{toappendix}

\section{Introduction}
\begin{figure}[t]
  {
\crefname{theorem}{\textbf{Thm.}}{\textbf{Thms.}}
\crefname{proposition}{\textbf{Prp.}}{\textbf{Prps.}}
\crefname{lemma}{\textbf{Lem.}}{\textbf{Lems.}}
    \centering
    \begin{tikzpicture}[xscale=1.05]
    \tikzstyle{every node}=[font=\footnotesize]
      \fill[fill=blue!10!white] (-6.3, .3) rectangle (-.6,-5.5);
      \fill[fill=yellow!20!white] (-.6, .3) rectangle (1,-5.5);
      \fill[fill=orange!20!white] (1, .3) rectangle (6,-5.5);

      \node (l1) at (-3.5, -.1) {PTIME};
      \node[text width=2cm,align=center] (l2) at (.2, -.1)
      {Unclassified\\[-.2em]languages};
      \node (l3) at (3.5, -.1) {NP-hard};

      \draw (-6.5, -.5) -- (6.25, -.5);
      \draw (-6.5, -3.5) -- (6.25, -3.5);

      \node[rotate=90] (l1) at (-6.5, -2) {Infinite};
      \node[rotate=90] (l2) at (-6.5, -4.5) {Finite};

      \node(u4) at (.2, -4) {$abc|bcd$};
      \node(u5) at (.2, -5) {$abc|bef$};

      \draw[rounded corners=2pt] (-6.2,-.6) rectangle (-2.5,-4.1); %
      \draw[rounded corners=2pt] (-5.0,-3.6) rectangle (-2.6,-5.4); %
      
      \draw[rounded corners=2pt] (1.1,-.6) rectangle (5.9,-4.1); %
      \draw[rounded corners=2pt] (1.2,-3.6) rectangle (3.5,-5.4); %

      \draw[rounded corners=2pt] (-2.4,-2.8) -- (-2.4,-4.2) -- (-3.6,-4.2) -- (-3.6,-4.6) -- (-2.4,-4.6) -- 
      (-2.4,-5.4) -- (-0.75,-5.4) -- (-0.75,-2.8) -- cycle; %

      \node (abcabd) at (-5.6, -3.85) {$abc|abd$};
      \node (abadcd) at (-3.8, -3.85) {$ab|ad|cd$};
      
      \node (aaaa) at (2.4, -3.85) {$aaaa$};
      \node (axbcxd) at (4.8, -3.85) {$axb|cxd$};

      \node (axxb) at (-4.35, -2) {$ax^*b$};

      \node (local) at (-4.3, -1) {Local languages (\cref{prp:ro-ptime})};
      \node (fourlegged) at (3.5, -1) {Four-legged languages (\cref{prp:four-legged})};

      \node (axxbcxd) at (3.5, -1.75) {$ax^*b|cxd$};
      \draw[rounded corners=2pt] (1.5,-2.25) rectangle (5.5,-3.25); %
      \node (nonstarfree) at (3.5, -2.5) {Non-star-free (\cref{lemma:not aperiodic are hard})};
      \node (badbad) at (3.5, -3) {$b (aa)^* d$};
      
      \node (abbc) at (-3.1, -4.4) {$ab|bc$};
      \node (axbbyc) at (-4.35, -4.4) {$axb|byc$};
      \node[align=center] (bcl) at (-3.8, -5.1) {Bipartite Chain\\[-.2em] Languages
      (\cref{prp:rpn})};

      \node (u1) at (-1.6, -3.2) {$ax^*b | xd$};
      \node(abcbe) at (-1.6, -3.7) {$abc|be$};
      \node(abcdce) at (-1.6, -4) {$abcd|ce$};
      \node (u2) at (-1.6, -4.3) {$abcd|be$};
      \node[text width=2cm, align=center] (submod) at (-1.6, -4.95)
      {One-dangling\\[-.2em]languages\\[-.2em](\cref{prp:submod})};

      \node(abbcca) at (4.8, -4.35) {$ab|bc|ca$ (\cref{prp:abbcca})};
      \node(abcdab2) at (4.8, -4.7) {$abcd|be|ef$};
      \node(abcdab3) at (4.8, -4.95) {$abcd|bef$};
      \node(abcdab4) at (4.8, -5.25) {(\cref{prp:otherhard})};

      \node (aa) at (2.4, -4.3) {$aa$};
      \node (abcacab) at (2.4, -4.55) {$abca|cab$};
      \node[align=center,text width=3cm] (bcl) at (2.4, -5.1) {Finite,
      repeated\\[-.1em] letter
      (\cref{thm:repeated-letter})};

            \node (inf1) at (.2, -1.5) {$a b^*c | ba$};
            \node (inf2) at (.2, -2.5) {$a b^*d | a c^* d | bc$};
    \end{tikzpicture}
  }

    \caption{Overview of our results on the complexity of resilience (all
    results apply both to bag and set semantics), with some examples of
    languages. All languages are infix-free. For legibility, we omit
    \cref{prp:dicho}.
    }
    \label{fig:results}
\end{figure}

This paper studies the \emph{resilience problem} which asks, for a
Boolean query
$Q$ and a given database~$D$ (in set or bag semantics)
what is a
minimum-cardinality set of tuples to remove from~$D$ so 
that~$Q$ is no longer satisfied.
(In particular: if 
$Q$ does not hold on~$D$
then resilience is
zero; if 
$Q$ holds on~$D$ and all subinstances of~$D$
then resilience is
infinite.) The notion of resilience is motivated by the need to identify how
``robust'' 
a query
is, especially when the facts of the input database may be incorrect.
Specifically, resilience quantifies how many tuples of~$D$ must be removed
before 
$Q$ no longer holds, so that higher resilience values mean that
$Q$ is less sensitive to tuples of $D$ being removed.

The resilience problem has been introduced for Boolean 
conjunctive queries and mostly
studied for such queries
\cite{DBLP:journals/pvldb/FreireGIM15,DBLP:conf/pods/FreireGIM20,bodirsky2024complexity,makhija2024unified}.
The key challenge in this line of work is to understand the \emph{data complexity} of the
problem, i.e., its complexity for a fixed query~$Q$, as a function of the input database~$D$. More
precisely, the resilience problem can be tractably solved for some queries
(e.g., trivial queries), and is known to be NP-hard for some other
queries (see, e.g., \cite{makhija2024unified}).
Membership in NP always holds whenever query evaluation for $Q$ can be done in
polynomial data complexity.

This leads to a natural question: 
\emph{Can one give a
characterization of the queries for which resilience can be tractably computed}?
Further, can one show a dichotomy result to establish that resilience is always
either in PTIME or NP-complete, depending on the query?
This question has remained elusive, already in the case of conjunctive
queries (CQs).

In this paper, we study the resilience problem from a different
perspective: we focus on graph databases (i.e., databases on an arity-two
signature), and on queries which are defined by regular languages,
namely, Boolean Regular Path Queries (Boolean RPQs), with no constants and with
existentially quantified endpoints. In other words, the query asks whether the
database contains a walk whose edge labels forms a word of the regular language,
and resilience asks how many database facts must be removed so that no such
walk exists.
To our knowledge, the only other work which studies this problem is the recent paper of Bodirsky et al.~\cite{bodirsky2024complexity}, in which the authors use CSP techniques.
Their results apply to much more general languages than CQs, and in
particular imply a dichotomy on the resilience problem for Boolean RPQs (and also for
the broader class of so-called \emph{two-way RPQs} aka 2RPQs).
However, their lower bounds do not apply to databases in set semantics. In
addition, their dichotomy criterion, while decidable, does not give an intuitive
language-theoretic characterization of the tractable cases.
We discuss this in more detail at the end
of \cref{sec:prelims}.

In contrast with CQs, RPQs are in certain ways harder to analyze, and 
in other ways easier. On the one hand, RPQs feature the union operator and the
recursion operator; and they inherently correspond to queries with self-joins,
because the same relation symbols (corresponding to alphabet letters) may occur
multiple times in query matches. On the other hand, RPQs do not allow all
features of CQs: they ask for a simple pattern shape (namely, a walk, which
importantly is always
directed), and they are posed over graph databases (in which all facts have
arity two).

Our study of resilience for RPQs is motivated by database applications, but it is
also motivated by connections to classical network flow problems.
Consider for instance the classical MinCut
problem of finding the minimum-cost cut in a flow network.
The MinCut problem, allowing multiple sources and sinks without loss of
generality,
is easily seen to be equivalent to the resilience problem in bag semantics
for the RPQ $a x^* b$ on the alphabet $\{a, b, x\}$,
with $a$-facts of the database
corresponding to sources, $b$-facts to sinks, and $x$-facts to edges of the
network. Hence, the resilience problem for RPQs corresponds to a
generalization of MinCut, where the edges of the flow network are labeled and we must cut edges to ensure that no walk in a specific language remains.
In the other direction, for many RPQs, we can devise tractable algorithms for
resilience by reducing to a MinCut problem on a suitably-constructed flow
network. However, resilience is NP-hard for some RPQs, e.g., the finite RPQ
$aa$~\cite{DBLP:conf/pods/FreireGIM20}, so these methods do not always apply.
Our results aim at
determining what are the RPQs for which resilience can be tractably computed
(in particular via such methods), and what are the RPQs for which it is
intractable.

\myparagraph{Contributions and outline.}
After giving preliminary definitions and stating the problem in
\cref{sec:prelims}, we start in \Cref{sec:epsro} by giving our main tractability
result (\cref{prp:ro-ptime}): resilience is tractable for RPQs defined from so-called \emph{local
languages}~\cite{mpri,sakarovitch2009elements}. Those languages are intuitively
those that are defined by specifying the possible initial labels of a path,
the possible final
labels of a path, and the consecutive pairs of labels that are allowed.
We show that, for RPQs defined with a local language~$L$, we can compute
resilience by reducing  to the MinCut problem via a
natural product construction. The construction applies to queries both in set and bag semantics,
and it is also tractable in combined complexity, running in time
$\tilde{O}(|A|\times |\Sigma| \times |D|)$ where $A$ is an automaton
representing $L$, where $\Sigma$ is the signature, and where $D$ is the input
database.

We then turn to intractability results in
\cref{sec:hard,sec:fourlegged,sec:repeated}.
In all these results, we assume that we focus on the \emph{infix-free}
sublanguage where redundant words have been eliminated, e.g., resilience for $abbc|bb$ and $bb$ is the same
because the corresponding (existentially-quantified) Boolean RPQs are equivalent.

We define in \cref{sec:hard} the
notion of a \emph{hardness gadget} for a language~$L$, namely, a specific
database to be used as a building block in a reduction.
These gadgets are the analogue, in our setting, of the Independent Join Paths
(IJPs) used in earlier work~\cite{makhija2024unified} to show hardness for CQs.
When an RPQ~$Q$ has a hardness gadget~$\Gamma$, 
we can reduce the minimum vertex cover problem on
an undirected graph~$G$ to a resilience computation (in set semantics)
for~$Q$ on a
database built using~$\Gamma$, by replacing the edges of~$G$ by a copy
of~$\Gamma$. We illustrate in \cref{sec:hard} how our
gadgets
can be used to show hardness for two specific RPQs: $axb|cxd$
(\cref{prp:axbcxd}), and $aa$ (\cref{prp:aa}).

We then generalize 
these two hardness results in turn to apply to a
broader class of languages. First, in \cref{sec:fourlegged}, we introduce the notion of
a \emph{four-legged language}, as a language featuring a specific kind of
counterexample to being local. More precisely, 
the local languages of \cref{sec:epsro} have the following
known equivalent characterization~\cite{sakarovitch2009elements}: for
any letter $x$ and words $\alpha, \beta, \gamma, \delta$, if a language $L$
contains the words
$\alpha x \beta$ and $\gamma x \delta$ but does not contain the
``cross-product'' word $\alpha x \delta$, then $L$ is not local. A four-legged
language is a non-local language that features a counterexample of this form in
which all the words $\alpha,
\beta, \gamma, \delta$ are required to be non-empty.
This covers, for instance, $axb|cxd$,
but not $ab|bc$.
We show (\cref{prp:four-legged}) that resilience is intractable for any
four-legged language. This result implies hardness for many languages: it
covers all \emph{non-star-free} languages (\cref{lemma:not aperiodic are hard}), a well-known class of infinite
regular languages; and, along with the tractability of local languages, it implies 
a dichotomy for languages having a 
\emph{neutral letter} (\cref{prp:dicho}).

Second, in \cref{sec:repeated}, we show that resilience is hard for any finite
infix-free
language containing a word with a repeated letter (\cref{thm:repeated-letter}).
This generalizes the case of $aa$, and shows that self-joins always imply hardness 
in
the following sense: finite RPQs (those equivalent to UCQs), once made
infix-free, have intractable
resilience as soon as one of their constituent CQs features a self-join.
This contrasts with the setting of CQs, where characterizing the tractable
queries with self-joins is still open.

Note that our two hardness results (\cref{prp:four-legged} and
\cref{thm:repeated-letter}) are incomparable:
languages with a repeated letter are non-local, but they are not all 
four-legged (e.g., $aa$); conversely, some four-legged
languages have no words with a repeated letter (e.g., $axb|cxd$). Assuming finiteness
in \cref{thm:repeated-letter} is also crucial, as $a x^* b$ features
words with repeated letters and still enjoys tractable resilience. The proof of
\cref{thm:repeated-letter} is our main technical achievement: it
relies on \cref{prp:four-legged} and uses different
gadgets
depending on which additional words the language contains.

\Cref{sec:other} then explores how our results do not classify all RPQs. We
mostly focus
on finite RPQs: the remaining open cases are then the RPQs which have no words with
repeated letters, are not local, but are not four-legged. We introduce the class
of chain languages (e.g., $axb|byc$) to cover some of these RPQs,
and show tractability (and reducibility to MinCut) assuming
a bipartiteness condition. We then introduce the class of one-dangling languages
(e.g., $abc|be$), for which we show tractability (via a more elaborate
reducibility to MinCut).
We finish the section by showing that some other unclassified RPQs are intractable.
We conclude the paper in \cref{sec:conclusion}.

\introparagraph{Conference version} 
This paper is an extended version of an earlier conference 
paper~\cite{amarilli2025resilience}.
  Compared to the conference version, \emph{all our tractability results in the current paper use a unified reduction to flow}. Some of the new proofs are considerably simpler, 
and we currently do not have any example of a query with tractable resilience for which
tractability does not follow by reducing to an appropriate network flow problem.
Our current results support our hypothesis that flow is all one needs for tractable cases, even those that are currently open.
In more detail, we
have improved \cref{prp:submod} (compared to
\cite[Prp.~7.7]{amarilli2025resilience})
by showing a tractability result that
covers a larger family of languages (dubbed \emph{one-dangling languages})
and uses a reduction to a flow
problem. This yields tractability in combined complexity and gives a better exponent than an earlier proof via submodular optimization
from~\cite{amarilli2025resilience}.
An overview of our results in the
present paper is in \cref{fig:results}. 
Compared with
\cite[Figure~1]{amarilli2025resilience},
the languages $abcd|be$ and $a x^* b |
xd$ have now been classified as tractable thanks to the new \cref{prp:submod}, 
and new unclassified languages 
$a b^*c | ba$ and $a b^*d | a c^* d | bc$
were added.
The present paper also includes the complete proofs for all results, which had
been omitted from the conference article.

\section{Preliminaries and Problem Statement}
\label{sec:prelims}
\myparagraph{Formal languages.}
Let $\Sigma$ be an \emph{alphabet}, consisting of \emph{letters}. A
\emph{word} over $\Sigma$ is a finite
sequence of
elements of $\Sigma$. We write $\epsilon$ for the empty word. 
We write $\Sigma^*$ for the set of words over $\Sigma$, and write $\Sigma^+ \coloneq 
\Sigma^* \setminus \{\epsilon\}$.
We use lowercase
Latin letters such as $a,b,x,y$ to denote letters in $\Sigma$, and
lowercase Greek letters like $\alpha, \beta, \gamma, \tau$ to
denote words in~$\Sigma^*$. The \emph{concatenation} of two words $\alpha,\beta$
is written
$\alpha \beta$. We write $|\alpha|$ for the
length of a word. A word~$\alpha$ is an \emph{infix} of another
word~$\beta$ if $\beta$ is of the form $\delta \alpha \gamma$, and
it is a \emph{strict infix} if moreover $\delta \gamma \neq
\epsilon$. If $\beta = \alpha \gamma$ then $\alpha$ is a
\emph{prefix} of~$\beta$, and a \emph{strict prefix} if
$\gamma\neq\epsilon$. Likewise if $\beta = \delta \alpha$ then
$\alpha$ is a \emph{suffix} of~$\beta$, and a \emph{strict suffix}
if $\delta \neq \epsilon$.

A \emph{language}
$L$ is a (possibly infinite) subset of~$\Sigma^*$. For $L, L'$ two
languages, we write their \emph{concatenation} as $LL' \coloneq  \{\alpha \beta  \mid
\alpha \in L, \beta \in L'\}$.
We mainly focus in this work on
\emph{regular languages}, also known as \emph{rational languages}, which can be represented by \emph{finite state
automata}. We use the following automaton formalisms, which recognize precisely the
regular languages~\cite{sakarovitch2009elements}: \emph{deterministic finite automata} (DFA),
\emph{nondeterministic finite automata} (NFA), and \emph{NFAs with
$\epsilon$-transitions} ($\epsilon$NFA, also called \emph{automata with
spontaneous transitions}).

An $\epsilon$NFA $A = (S,I,F,\Delta)$ 
is a finite set $S$
of \emph{states}, a set $I\subseteq S$ of \emph{initial states}, a
set $F\subseteq S$ of \emph{final states}, and a \emph{transition
relation} $\Delta \subseteq S \times (\Sigma \cup \{\epsilon\})
\times S$. Let $\alpha = a_1\cdots a_m$ be a word. A
\emph{configuration} for~$A$ on~$\alpha$ is an ordered pair $(s,
i)$ for $s \in S$ and $0 \leq i \leq m$. 
A \emph{run} of~$A$
on~$\alpha$ is a sequence of configurations $(s_0, i_0), \ldots,
(s_\ell, i_\ell)$ with $\ell \geq m$
such that $s_0 \in I$ is initial, $i_0 =
0$, $i_\ell = m$, and for each $0 \leq j < \ell$ we have either
$i_{j+1} = i_j$ and $(s_j, \epsilon, s_{j+1}) \in \Delta$ (an
$\epsilon$-transition) or $i_{j+1} = i_j+1$ and $(s_j, a_{i_j+1},
s_{j+1}) \in \Delta$ (a letter transition).
The run
is \emph{accepting} if $s_\ell \in F$. 
The \emph{language of $A$},
denoted $\LL(A)$,
is the set of words having an accepting run: we say that  $A$
\emph{recognizes}
$\LL(A)$.
An example $\epsilon$NFA $A_3$ is depicted in \cref{fig:1c}, and an example
accepting run of~$A_3$ on
the word $ad$ is
  $(s_1, 0), (s_2, 1), (s_4, 1), (s_5, 2)$.
The \emph{size}~$|A|$ of the $\epsilon$NFA $A=(S,I,F,\Delta)$ 
is $|S| + |\Delta|$,
its total number of states and transitions. 

An NFA is an $\epsilon$NFA which does not feature $\epsilon$-transitions, and a
DFA is an NFA having exactly one initial state and  having
at most one transition $(s,a,s')$ in $\Delta$ for each
$s\in S$ and $a \in \Sigma$.

We also use \emph{regular expressions} to denote regular languages:
we omit their formal definition,
see~\cite{sakarovitch2009elements}. 
We write $|$ for the union
operator and $*$ for the Kleene star operator.

\myparagraph{Databases, queries, resilience.}
A \emph{graph database} on the alphabet $\Sigma$, or simply \emph{database}, is
a set of $\Sigma$-labeled \emph{edges} of the form $D\subseteq
V\times \Sigma \times V$, where $V$ is the set of \emph{nodes} of $D$: we also
call $V$ the \emph{domain} and write it as $\adom(D)$.
Notice that a graph database can equivalently be seen as a relational database
on an arity-two signature that directly corresponds to the alphabet $\Sigma$.
We also call \emph{facts} the edges $(v,a,v')$ of $D$, and write them as $v \xrightarrow{a} v'$. 
We call $v$ the \emph{source} (or \emph{tail}) of $v \xrightarrow{a} v'$ and
$v'$ its \emph{target} (or \emph{head}).
A \emph{Boolean query} is a function~$\query$ that takes as input a
database~$D$ and outputs~$\query(D) \in \{0,1\}$. We write
$D\models \query$, and say that $\query$ is \emph{satisfied by
$D$}, when $\query(D)=1$, and write $D\not\models \query$
otherwise. 
In this paper we study the \emph{resilience} of queries:

\begin{definition}[Resilience]
  \label{def:resilience}
  Let $D$ be a database and $\query$ a Boolean query. The
  \emph{resilience of $\query$ on~$D$}, denoted
  $\resset(\query,D)$, is defined\footnote{By convention, if
  $\query$ is satisfied by all subsets of $D$, then the
minimum is defined to be $+\infty$.} as $\resset(\query,D) \coloneq
  \min_{\substack{D'\subseteq D\\\text{s.t. }D \setminus
  D'\not\models \query}} |D'|$.
\end{definition}

In other words, the resilience is the minimum number of facts to
remove from $D$ so that $\query$ is not satisfied. A subset of
facts $D'\subseteq D$ such that $D\setminus D' \not\models \query$
is called a \emph{cut} (or a \emph{contingency set}) \emph{of~$D$
for~$\query$}. 
Notice that 
if $D\not\models \query$, then
$\emptyset$ is a cut and
we have\footnote{We deviate from the original definition
in~\cite{DBLP:journals/pvldb/FreireGIM15} which left resilience
undefined for the case of $D\not\models \query$.}
$\resset(\query,D)=0$.

\myparagraph{Regular Path Queries.}
From a language $L$ over $\Sigma$ we now define a query over graph
databases that tests the existence of walks labeled by words of $L$.
Given a database $D$, a \emph{walk} 
in~$D$ is a (possibly empty) sequence of consecutive
edges
$v_1 \xrightarrow{a_1} v_2$, 
$v_2 \xrightarrow{a_2} v_3$, $\ldots$, $v_k \xrightarrow{a_k} v_{k+1}$
in~$D$. Note that the $a_i$ and~$v_i$ are not
necessarily pairwise distinct.
We say that the walk is \emph{labeled} by the word $\alpha = a_1 \cdots
a_k$, and call it an \emph{$\alpha$-walk}. We also say that the walk
\emph{contains} a word $\beta \in \Sigma^*$ if $\beta$ 
is an infix of~$\alpha$.
An \emph{$L$-walk} in~$D$ is then an
$\alpha$-walk for some $\alpha \in L$.
We define the
Boolean query $\query_L$ as: $\query_L(D) = 1$ iff
there is some $L$-walk in~$D$. In particular, if $\epsilon\in L$
then $\query_L(D) = 1$ on every 
database~$D$ (including  $D = \emptyset$).
A \emph{regular path query} (RPQ)
is then a Boolean query of the form
$\query_L$ for $L$ a regular language.
Note that, in all of the present paper, we evaluate RPQs with walk semantics
(i.e., arbitrary paths with repetitions), as opposed to other works such
as~\cite{martens2019dichotomies} which evaluate them with simple path or trail
semantics: we leave open the question of studying resilience for RPQs in simple
path or trail semantics.

Observe that $\query_L$ does not change if we remove from $L$ all
words $\beta$ such that some strict infix~$\alpha$ of $\beta$ is
also in~$L$: indeed if $D$ contains a $\beta$-walk then it also contains
an $\alpha$-walk. 
Formally, the \emph{infix-free language
of~$L$} is $\red(L) \coloneq  \{\alpha \in L \mid \text{no strict infix
of } \alpha \text{ is in }L\}$. We then have that~$\query_L$
and~$\query_{\red(L)}$ are the exact same queries. The infix-free language
of $L$ can also
be called the \emph{factor-free sublanguage of $L$} in
\cite{barcelo2019boundedness}, and is called the \emph{reduced sublanguage} in
our earlier conference
paper~\cite{amarilli2025resilience}.
It is easy to show that if $L$ is regular then so is $\red(L)$. We say that~$L$
is \emph{infix-free} if
$L = \red(L)$; this is also called being an \emph{infix
code}~\cite{ito1991outfix}.

\myparagraph{Bag semantics.}
We also consider the resilience problems on \emph{bag graph databases},
as opposed to \emph{set graph databases} which are the databases
defined so far. A \emph{bag (graph) database} 
consists of a
database $D$ together with a \emph{multiplicity function}
$\mult:D\to \mathbb{N}_{>0}$; it can equivalently be seen as a multiset of
facts.
In the input, the multiplicities are expressed as integers written in binary.
A Boolean query $Q$ is satisfied by a bag database $(D, \mult)$ iff $D$
satisfies~$Q$, i.e., the multiplicities of facts are not seen by queries.
A contingency set of~$D$ for $\query$ is 
  then again a subset $D' \subseteq D$ such that $D \setminus D' \not\models Q$, and
  the \emph{resilience of $\query$ on~$D$} is: 
$\resbag(\query,D) \coloneq \min_{\substack{D'\subseteq D\\\text{s.t. }D
\setminus D'\not\models \query}} \sum_{e \in D'} \mult(e)$, i.e.,
fact multiplicities represent costs.

\myparagraph{Networks and cuts.}
Our resilience algorithms work by reducing to the problem of finding a
\emph{minimum cut of a flow network}~\cite{Cormen:2009dz}. 
Recall that
a \emph{flow network} 
consists of a directed graph $\calN =
(V,t_{\text{source}},t_{\text{target}},E,c)$ 
with 
$V$ being the vertices, $t_{\text{source}}, t_{\text{target}}\in V$ being special vertices 
called the \emph{source} and the \emph{target} respectively,
$E\subseteq V\times V$, and $c:E \to
\mathbb{R}_+ \cup \{+\infty\}$ giving the \emph{capacity} of each
edge. For $E'\subseteq E$, we define the \emph{cost} of~$E'$ as $c(E') \coloneq 
\sum_{e\in E'} c(e)$,
and denote $\calN \setminus E'$ as the network in which all edges from $E'$ are removed. 
A \emph{cut} is a set of edges
$E'\subseteq E$ such that there is no directed path from $t_{\text{source}}$ to
$t_{\text{target}}$
in $\calN \setminus E'$. The \emph{min-cut problem}, written MinCut, takes as input a
flow network and returns the minimum cost of a cut. 
From the \emph{max-flow min-cut theorem}~\cite{ford1956maximal} and Menger's
theorem~\cite{menger:1927}, we know that the min-cut problem
can be solved in PTIME.
More precisely, 
practical algorithms can run in
$O(|V|^2 \sqrt{|E|})$~\cite{ahuja1997computational},
and recent theoretical algorithms
run in $\tilde{O}(|\calN|)$~\cite{henzinger2024deterministic}, where $|\calN|
\coloneq
|V| + |E|$ and where $\tilde{O}$ hides polylogarithmic factors.

\myparagraph{Problem statement.}
In this paper, we study the \emph{resilience problem} for 
queries~$\query_L$ defined by languages~$L$; most of our results are for 
regular languages.
The \emph{resilience problem} for~$L$, written $\resset(L)$, is
the following:
given as input a
database $D$ and $k\in \mathbb{N}$, decide if $\resset(\query_L,D)
\leq k$. 
The problem $\resbag(L)$ in bag semantics is defined analogously.
  Note that $\resset(L)$ 
  easily reduces to
  $\resbag(L)$ %
  by giving multiplicity of 1 to each fact; 
  however, in the setting of CQs, some conjunctive queries $\query_0$ are known for which
  $\resset(\query_0)$ is in PTIME, but $\resbag(\query_0)$ is NP-complete
  \cite[Table~1]{makhija2024unified} (even when the multiplicities of the input
  bag database are written in unary).
We also note that the definition given here corresponds to a \emph{decision
problem} of determining whether the resilience is no greater than the value~$k$.
However, thanks to self-reducibility, the \emph{function
problem} of computing a witness contingency set reduces in polynomial time to
this decision problem: see \cite[Remark 8.4]{bodirsky2025complexity} for a
self-contained argument.

Our definitions correspond to the \emph{data complexity}
perspective, where we have one computational problem per
language~$L$
and where the complexity is measured as a function
of $|D|$ only. We mostly focus on data complexity, but most of our upper bounds also
apply in \emph{combined complexity}, i.e., when both the database $D$ and an
automaton $A$ representing $L$ 
are given as input.

It is easy to see that, for any regular language~$L$, the problem
$\resbag(L)$, hence $\resset(L)$, is in NP. Indeed, given a bag
database~$D$, it suffices to guess a subdatabase $D' \subseteq D$,
and verify in PTIME that~$\sum_{f\in D'} \mult(f)$ is small enough
and that~$D'$ is a contingency set, i.e., $D \setminus D'
\not\models \query$. The latter point, i.e., the nonexistence of an
$L$-walk in $D \setminus D'$, can easily be checked in PTIME by a
product construction followed by a reachability test (see, e.g.,
\cite[Lemma 3.1]{DBLP:journals/siamcomp/MendelzonW95}).

However, the NP upper bound is not always tight:
there are languages~$L$ (e.g.,
$L = a|b$)
such that $\resbag(L)$ is in PTIME. The goal of our
paper is to characterize the languages $L$ for which
$\resset(L)$ and $\resbag(L)$ are in PTIME, and give efficient
algorithms for them.

\myparagraph{Connection to VCSPs.}
The recent work
of Bodirsky et al.~\cite{bodirsky2024complexity} uses earlier results on Valued
Constraint Satisfaction Problems (VCSP) to obtain the following dichotomy on $\resbag$:

\begin{theorem}[{\cite[Remark 8.12]{bodirsky2024complexity}}]
  \label{thm:vcsp}
  Let $L$ be a regular language. The problem $\resbag(L)$ is either in PTIME or
  it is NP-complete.
\end{theorem}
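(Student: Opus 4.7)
The plan is to derive this dichotomy by reducing $\resbag(L)$ to (and from) a Valued Constraint Satisfaction Problem over a finite-domain template $\Gamma_L$ that depends on~$L$, and then invoking the general-valued CSP dichotomy---combining Bulatov's CSP dichotomy with the finite-valued VCSP dichotomy of Kolmogorov, Krokhin and Rolínek---which states that $\mathrm{VCSP}(\Gamma)$ is either in PTIME or NP-complete for every finite-domain template~$\Gamma$. Since $\resbag(L)$ is in NP for every regular~$L$, PTIME-versus-NP-complete is the only possible split.

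For the forward direction, I would fix a DFA $A = (S, \{s_0\}, F, \delta)$ recognizing $\red(L)$ and, given a bag database $(D, \mult)$, consider the product graph on vertex set $S \times \adom(D)$ with an edge from $(s, v)$ to $(\delta(s,a), v')$ for each fact $v \xrightarrow{a} v'$ of~$D$. Then $L$-walks in~$D$ correspond bijectively to directed paths from some ``source'' $(s_0, v)$ to some ``target'' $(s_f, v')$ with $s_f \in F$, and $\resbag(\query_L, D)$ equals the minimum total multiplicity of a set of $D$-facts whose deletion disconnects all sources from all targets in the product (note that deleting one fact in~$D$ removes a whole group of product-graph edges, one per state~$s$, which is the reason the ensuing cut problem is not in general a plain MinCut). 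By standard max-flow/min-cut duality, this cut problem is naturally phrased as a finite-domain VCSP whose Boolean variables include a ``kept-fact'' indicator $y_f$ per fact (carrying a soft unary cost $\mult(f)(1 - y_f)$) together with cut-side labels $d_{s,v}$, and whose crisp constraints are unary pin-downs at sources and targets plus the ternary clause ``a kept fact-edge cannot cross the cut''; these cost functions form a finite palette determined by~$A$, yielding an instance of $\mathrm{VCSP}(\Gamma_L)$ whose optimum equals $\resbag(\query_L, D)$.

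The main obstacle is the converse reduction, needed to transfer the hardness side of the VCSP dichotomy back to $\resbag(L)$: one must realize arbitrary $\mathrm{VCSP}(\Gamma_L)$-instances as resilience queries on bag databases with only polynomial blowup. Identifying bag resilience with the full expressive power of the template extracted from~$A$ is precisely the technical content of Remark~8.12 of~\cite{bodirsky2024complexity}, and would be done by building specific gadget databases that implement each cost function of~$\Gamma_L$; the key subtlety is that one has access only to ``group deletions'' of product-graph edges via fact deletions, so gadgets must be designed so that the restricted deletion operations on bag databases simulate unrestricted VCSP cost functions. Once both directions are in place, invoking the VCSP dichotomy on~$\Gamma_L$ gives the claimed PTIME-versus-NP-complete dichotomy for $\resbag(L)$.
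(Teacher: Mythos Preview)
The paper does not prove this theorem; it is imported wholesale from \cite[Remark~8.12]{bodirsky2024complexity}, so there is no in-paper argument to compare your sketch against. That said, your proposal has a real gap.

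You explicitly defer the converse reduction---from $\mathrm{VCSP}(\Gamma_L)$ back to $\resbag(L)$---to the very reference being cited. That is circular: the converse is precisely where the hardness side of the dichotomy lives, and without it your argument only yields ``if $\mathrm{VCSP}(\Gamma_L)$ is in PTIME then so is $\resbag(L)$'', which is the easy half.

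Worse, the specific template $\Gamma_L$ you construct makes the converse genuinely problematic, not merely postponed. Your encoding introduces auxiliary cut-side variables $d_{s,v}$ together with crisp pin-down and ternary ``no kept edge crosses the cut'' constraints. An arbitrary instance of $\mathrm{VCSP}(\Gamma_L)$ over this template may apply those constraints to variable tuples in patterns that do not arise from the product-graph structure of any database, so there is no reason to expect such an instance can be realized as $\resbag(\query_L,D)$ for some bag database~$D$. The ``gadget databases'' you allude to would have to simulate independent assignments to the $d_{s,v}$ variables, but those variables have no database-side counterpart.

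The route that actually works (and is the one underlying~\cite{bodirsky2024complexity}) avoids auxiliary variables altogether: one sets up a template whose instances are in \emph{bijection} with bag databases on~$\Sigma$---variables are database elements, each relation symbol contributes one cost function, and a fact with multiplicity~$m$ becomes $m$ copies of the corresponding constraint---arranged so that the VCSP optimum equals the resilience by construction. Under such an identification both reductions are tautological and the general-valued dichotomy transfers immediately. Your product-graph encoding is a perfectly valid forward reduction, but by destroying this bijection it leaves the hard direction undone.
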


\noindent
We point out that this result can be extended to so-called \emph{two-way RPQs}, and to the case where some relations are declared \emph{exogenous} (i.e., they cannot be part of a contingency set).
Hence, as discussed in the introduction, \cref{thm:vcsp} gives a complete dichotomy across all regular languages. However, it also has four important
shortcomings:
\circled{\normalsize{1}}~The
VCSP approach gives no explicit characterization of the class $L$ of tractable languages,
or efficient bounds on the complexity of testing whether a given~$L$ is PTIME or
not: the criterion in~\cite{bodirsky2024complexity}, using~\cite{kolmogorov2019testing}, 
can be computationally checked but takes time exponential in a representation of the so-called
\emph{dual} of~$L$.
\circled{\normalsize{2}}~In
tractable cases, the VCSP approach does not give good bounds on the
degree of the polynomial (unlike, e.g., explicit reductions 
to MinCut).
We remark that the first two difficulties 
also occur in works studying other problems where VCSP
implies a general dichotomy but without explicit characterizations or tight 
bounds, for instance in~\cite{hirai2021minimum}.
\circled{\normalsize{3}}~The 
VCSP approach does not shed light on the
\emph{combined complexity}, i.e., it does not identify cases that are
tractable when both the query and
the database are given as input.
\circled{\normalsize{4}}~The 
VCSP approach inherently only applies
to bag semantics and not to set semantics, so there is no known analogous
dichotomy for $\resset$.
Our results improve on some of these shortcomings.
Specifically, all our intractability results already apply to the $\resset$
problem, they apply to combined
complexity (when the query is given as an automaton along with the database), and
they feature a polynomial complexity with a low degree
(and in fact up to polylogarithmic factors the data complexity is always linear except
for one result, \cref{prp:rpn}, where it is quadratic).
However, unlike 
\cite{bodirsky2024complexity}, our results do not establish a full dichotomy.

\section{Tractability for Local Languages}
\label{sec:epsro}
We now start the presentation of our results on the resilience
problem by giving our main tractability result in this section.
The result applies to 
\emph{local languages}\footnote{We note that local languages were also
recently observed to correspond to a tractable class of RPQs for which
probabilistic query evaluation is equivalent to source-to-target reliability,
see~\cite{amarilli2025approximating}.
}, a well-known subclass of the regular
languages~\cite{mpri,sakarovitch2009elements}, whose definition we recall
below.
We show that the resilience problem is tractable for local
languages, also in combined complexity, using a specific notion of automata
and a product construction to reduce resilience to MinCut.
We stress that local languages do not cover all cases of
regular languages with tractable resilience: we will show other
tractable cases in \cref{sec:other}. However, in \cref{sec:fourlegged} we will give
a partial converse: resilience is hard for the so-called \emph{four-legged}
languages, which have a specific form of counterexample to being local.

The remainder of this section is structured as follows: in
\cref{subsec:local-defs} we define what are local languages and provide
an equivalent characterization that we will use later. We then show in
\cref{subsec:local-ptime} that resilience is PTIME for local
languages.

\subsection{Local Languages}
\label{subsec:local-defs}

We define here local languages and show some useful properties. Intuitively, a
language~$L$ is \emph{local} if membership to~$L$ is described by which
letter pairs can occur consecutively and which letters can start or can end
words. To define them formally, we use 
the notion of so-called
\emph{local DFAs}~\cite[Chapter III, Section 5.1]{mpri}:

\begin{definition}[Local DFA]
  \label{def:local}
  A DFA is \emph{local} if, for every letter $a \in \Sigma$, all $a$-transitions
  in the automaton have the same target state.
  A language $L$ is \emph{local} if there exists a local DFA $A$ that
  recognizes~$L$.
\end{definition}

\begin{example}%
  \label{expl:locallang}
  The languages $ax^*b$ and $ab|ad|cd$ are local, with local DFAs
  in
  \cref{fig:1a} and \cref{fig:1b}.
\end{example}

\begin{figure}
  \begin{subfigure}[t]{0.3\columnwidth}
    \begin{tikzpicture}[shorten >=1pt,node distance=1.3cm,on grid,auto,thick]

  \node[state,initial, initial text=]  (s_1)                      {$s_1$};
  \node[state]          (s_2) [right=of s_1] {$s_2$};
  \node[state,accepting](s_3) [right=of s_2] {$s_3$};

  \path[->] (s_1) edge              node        {$a$} (s_2)
            (s_2) edge [loop above] node       {$x$} ()
            (s_2) edge              node {$b$} (s_3);
\end{tikzpicture}
    \caption{Local DFA $A_1$ for 
    $ax^*b$} \label{fig:1a}
  \end{subfigure}%
  \hspace*{\fill}   %
  \begin{subfigure}[t]{0.3\columnwidth}
    \begin{tikzpicture}[shorten >=1pt,node distance=1.2cm and 1.3cm,on grid,auto,thick]

  \node[state,initial, initial text=]  (s_1)                      {$s_1$};
  \node[state]          (s_2) [right=of s_1] {$s_2$};
  \node[state,accepting]          (s_4) [right=of s_2] {$s_3$};
  \node[state]          (s_6) [below=of s_2] {$s_4$};
  \node[state,accepting]          (s_8) [right=of s_6] {$s_5$};

  \path[->] (s_1) edge              node        {$a$} (s_2)
            (s_2) edge              node {$b$} (s_4)
            (s_2) edge              node[above] {~$d$} (s_8)
            (s_1) edge              node[above]        {$c$} (s_6)
            (s_6) edge              node {$d$} (s_8);
\end{tikzpicture}
    \caption{Local DFA $A_2$ for 
  $ab|ad|cd$ 
    } \label{fig:1b}
  \end{subfigure}%
   \hspace*{\fill}   %
  \begin{subfigure}[t]{0.3\columnwidth}
    \begin{tikzpicture}[shorten >=1pt,node distance=1.2cm and 1.3cm,on grid,auto,thick]

  \node[state,initial, initial text=]  (s_1)                      {$s_1$};
  \node[state]          (s_2) [right=of s_1] {$s_2$};
  \node[state,accepting]          (s_4) [right=of s_2] {$s_3$};
  \node[state]          (s_6) [below=of s_2] {$s_4$};
  \node[state,accepting]          (s_8) [right=of s_6] {$s_5$};

  \path[->] (s_1) edge              node        {$a$} (s_2)
            (s_2) edge              node {$b$} (s_4)
            (s_1) edge              node[above]        {$c$} (s_6)
            (s_6) edge              node {$d$} (s_8)
            (s_2) edge              node       {$\epsilon$} (s_6);
\end{tikzpicture}
    \caption{RO-$\epsilon$NFA $A_3$ for 
    $ab|ad|cd$
    } \label{fig:1c}
  \end{subfigure}%
  \caption{Examples of local automata (\cref{expl:locallang}) and RO-$\epsilon$NFAs (see
  \cref{expl:locallang} and \cref{expl:lang}). Initial
  states are indicated by incoming arrows, final states are doubly circled.}
  \label{fig:locallang}
\end{figure}

\myparagraph{Letter-Cartesian languages.}
We give a known equivalent language-theoretic characterization of local
languages, that we call being \emph{letter-Cartesian} and that will be useful
later:

\begin{definition}[Letter-Cartesian]
  \label{def:letter-Cartesian}
  A language $L$ is \emph{letter-Cartesian}
  if the following implication holds: for every  letter $x \in \Sigma$
  and words $\alpha,
  \beta, \gamma, \delta \in \Sigma^*$, if $\alpha x \beta \in
  L$ and $\gamma x \delta \in L$, then $\alpha x \delta \in L$.
\end{definition}

Note that, by exchanging the roles of $\alpha, \beta$ and $\gamma, \delta$, if
$\alpha x \beta \in L$ and $\gamma x \delta \in L$ then also $\gamma x \beta
\in L$. Hence the name \emph{letter-Cartesian}: for each letter $x \in \Sigma$,
the words of $L$ containing an $x$ can be expressed as a Cartesian product
between what can appear left of an~$x$ and what can appear right of an~$x$.
Formally, we have $L = \bigcup_{x\in \Sigma} L_x x R_x$ for some $L_x, R_x
\subseteq \Sigma^*$. 

\begin{example}
  \label{ex:non-local}
  The local languages of \cref{expl:locallang} are
  letter-Cartesian.   
  Take now $L = aa$. With $x\coloneq a$, $\alpha \coloneq  a$, $\beta
  \coloneq  \epsilon$,
  $\gamma \coloneq  \epsilon$, $\delta \coloneq  a$, we see that $L$ is not letter-Cartesian (as $aaa\notin L$),
  and hence, as we will see, it is not local.
\end{example}

It is known that letter-Cartesian languages are precisely the local languages:

\begin{proposition}
  \label{prp:carac}
  A language is local if, and only if,
  it is letter-Cartesian.
\end{proposition}

This result has been claimed in \cite[Chp.\ II, Ex.\ 1.8,
(d)]{sakarovitch2009elements} without proof (as an exercise). For completeness, and
to provide more intuition on local languages to the reader,
we provide a proof here. One direction is easy to show: local languages are
always letter-Cartesian.

  \begin{claim}
    \label{clm:local-is-cart}
    If $L$ is local then it is letter-Cartesian.
  \end{claim}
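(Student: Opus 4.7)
The plan is to exploit the characterization of local languages via local DFAs (\cref{def:local}). Let $A = (S, I, F, \Delta)$ be a local DFA with $\LL(A) = L$. The letter-Cartesian property (as hinted at in the introduction via the four-legged discussion) asserts that for every letter $x \in \Sigma$ and all words $\alpha, \beta, \gamma, \delta \in \Sigma^*$, if $\alpha x \beta \in L$ and $\gamma x \delta \in L$, then $\alpha x \delta \in L$ (and, symmetrically, $\gamma x \beta \in L$). I would prove this directly from the defining property of~$A$.

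The key observation is that because $A$ is local, all $x$-transitions in $\Delta$ share a common target state; call it $s_x$. Consequently, whenever an accepting run of $A$ reads a letter $x$, the state reached immediately after that letter must be exactly $s_x$, independently of which prefix was consumed. Thus the suffix behavior of $A$ after reading any word ending in $x$ depends only on~$x$, not on the prefix.

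Given the hypotheses, since $A$ is a DFA, the unique run on $\alpha x \beta$ is accepting, so it passes through $s_x$ after consuming $\alpha x$ and then, reading $\beta$ from $s_x$, reaches a final state; likewise the run on $\gamma x \delta$ passes through $s_x$ after $\gamma x$ and reaches a final state while reading $\delta$. I would then splice these two runs: follow the unique run of $A$ on $\alpha x$, which ends at $s_x$, and then continue from $s_x$ along the $\delta$-portion of the second run, which ends in a final state. This produces a valid accepting run of $A$ on $\alpha x \delta$, proving $\alpha x \delta \in L$. The symmetric statement $\gamma x \beta \in L$ follows by exchanging the roles of the two hypothesized words.

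I do not anticipate a significant obstacle: the whole argument is essentially a one-line consequence of the fact that in a local DFA the state reached after any occurrence of a letter $x$ is determined by $x$ alone. The only care needed is to phrase the letter-Cartesian condition in terms of runs of $A$ and to check that the splicing is well-defined, which is immediate because $A$ is deterministic and the two partial runs meet at the same state~$s_x$.
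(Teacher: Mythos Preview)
Your proposal is correct and matches the paper's own proof essentially line for line: both take a local DFA for $L$, observe that after reading any occurrence of $x$ the automaton is in a fixed state determined by~$x$, and splice the prefix run on $\alpha x$ with the suffix run on $\delta$ to obtain an accepting run on $\alpha x \delta$.
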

  \begin{proof}
  Let $A$ be a local DFA recognizing $L$. Let $x \in \Sigma$, and let
  $\alpha,\beta,\gamma,\delta\in \Sigma^*$ such that $\alpha x \beta \in L$ and
  $\gamma x \delta \in L$, and let us show that $\alpha x \delta$ is in $L$ as
  well. Because $L$ is local, the state $q$ that is reached in~$A$ after reading
  $\alpha x$ in the accepting run of $\alpha x \beta$ is the same state that is
  reached after reading $\gamma x$ in the accepting run of $\gamma x \delta$.
  We can thus compose a run from the initial state to $q$ reading $\alpha x$,
  with the run from $q$ to some final state reading $\delta$, to form an
  accepting run for $\alpha x \delta$. This concludes the proof.
  \end{proof}

  We now prove the converse, thus establishing \cref{prp:carac}:
  \begin{claim}
    \label{clm:cart-is-local}
    If $L$ is letter-Cartesian then it is local.
  \end{claim}

To establish this claim, 
we use a construction from \cite[Proposition~2.1]{berstel1996local}.
We define the \emph{local overapproximation} of a
language $L$ as the local DFA constructed as follows:

\begin{definition}
  \label{def:localoverapprox}
    Let $L$ be a 
    language. 
    We start by defining the following:
    \begin{itemize}
      \item The set $\Sigma_\start$ of letters that can start words of~$L$,
        i.e., the letters $a \in \Sigma$ such that there exists a word $a \alpha \in
        L$ for some $\alpha \in \Sigma^*$. 
      \item The set $\Sigma_\eend$ of letters that can end words of~$L$, i.e.,
        letters $a \in \Sigma$ such that $\alpha a \in L$ for some $\alpha
        \in \Sigma^*$. 
      \item The set $\Pi \subseteq \Sigma^2$ of pairs of letters that can occur
        consecutively in a word of~$L$, formally, the pairs $(a,b)$ such
        that there exist $\alpha, \beta \in \Sigma^*$ such that $\alpha a b
        \beta \in L$. 
    \end{itemize}
    We now define the DFA $A$ in the following way. We have an initial state
    $q_0$, which is final iff $\epsilon \in L$. For every $a \in \Sigma$, we
    have a state $q_a$, which is final iff $a\in \Sigma_\eend$. Then, 
  for every $a\in \Sigma_\start$ we add an $a$-transition from $q_0$ to $q_a$,
  and
  for every $(a,b) \in \Pi$ we add a $b$-transition from $q_a$ to $q_b$.
\end{definition}

The DFA $A$ built from $L$ is local by definition, and the following is
immediate:

\begin{claim}
  \label{clm:reallyimmediate}
  We have $\LL(A) \supseteq L$.
\end{claim}

We now claim that, for letter-Cartesian languages, the local overapproximation
is in fact equal to the original language, namely:

\begin{claim}
  \label{clm:notimmediateatall}
  If $L$ is letter-Cartesian, then $\LL(A) = L$.
\end{claim}

This immediately implies \cref{clm:cart-is-local}: if $L$ is letter-Cartesian,
then it is local, because $L$ is equal to $\LL(A)$ for $A$ the local
overapproximation, and $A$ is a local DFA. So all that remains is to show 
\cref{clm:notimmediateatall}.

  \begin{proof}[Proof of \cref{clm:notimmediateatall}]
    We must show that if $L$ is letter-Cartesian then $\LL(A) = L$. 
    By \cref{clm:reallyimmediate}, it suffices to
    show that $\LL(A) \subseteq L$.

Take $\alpha$
in $\LL(A)$, and let us show that it is in $L$. If $\alpha$ is the empty word then
it is clear by construction that $\alpha$ is also in $L$.
So assume $\alpha = a_1 \cdots a_m$ with $m\geq 1$.
By construction of~$A$, the fact that $\alpha$ is accepted witnesses that:

\begin{itemize}
  \item $a_1 \in \Sigma_\start$, hence there exists $\alpha'' \in \Sigma^*$ such
    that $a_1 \alpha'' \in L$;
  \item $a_m \in \Sigma_\eend$ hence there exists $\alpha''' \in \Sigma^*$ such
    that $\alpha''' a_m \in L$;
  \item For each $1 \leq i < m$ we have $(a_i,a_{i+1}) \in \Pi$, hence there
    exist $\alpha_i', \alpha_i'' \in \Sigma^*$ such that $\alpha_i' a_i a_{i+1} \alpha_i''$ is in~$L$.
\end{itemize}

We now use the fact that $L$ is letter-Cartesian to prove the following
claim (*):
$\alpha_1' \alpha \alpha''_{m-1} \in L$.

To do this, we show by finite induction that $\alpha_1' a_1 \cdots
a_{i+1} \alpha''_i \in L$ for each $1 \leq i < m$. The base case of $\alpha'_1
a_1 a_2 \alpha''_1 \in L$ is by the definition of $\alpha_1'$ and $\alpha_1''$
above. The induction case for $1 < i < m-1$ is as follows:
from $\alpha_1' a_1 \cdots a_{i+1} \alpha''_i \in L$ (by induction hypothesis)
and $\alpha_{i+1}' a_{i+1} a_{i+2} \alpha''_{i+1} \in L$ (by definition of
$\alpha_{i+1}', \alpha_{i+1}''$ above), we deduce that 
$\alpha_1' a_1 \cdots a_{i+1} a_{i+2} \alpha''_{i+1} \in L$ because $L$ is
letter-Cartesian on $a_{i+1}$, which is what we wanted to show to
establish the induction case. Hence, the inductive claim holds and for $i=m-1$
we deduce that $\alpha_1' a_1 \cdots a_m \alpha''_{m-1} \in L$, which is claim
(*).
Now, using that $L$ is letter-Cartesian on~$a_1$, from (*) and the word
$a_1 \alpha'' \in L$ identified earlier,
we deduce that
$\alpha \alpha''_{m-1}$ is in $L$.
Symmetrically, from this and the word $\alpha''' a_m \in L$ identified
earlier, using the fact that $L$ is letter-Cartesian on~$a_m$ we
deduce that $\alpha \in L$, which concludes.
\end{proof}

We have concluded the proof of \cref{prp:carac}.

\myparagraph{Testing if a language is local.}
We can use the local overapproximation
(\cref{def:localoverapprox})
to show that it is PTIME to determine whether a language is local.
Indeed, the previous results imply:

\begin{claim}
\label{clm:notsoimmediate}
  Let $L$ a language, and $A$ its local overapproximation. Then we have $\LL(A) = L$ if and only if $L$ is local.
\end{claim}

\begin{proof}
  If $L$ is local, then $L$ is letter-Cartesian by \cref{prp:carac}
  so by 
  \cref{clm:notimmediateatall} we have
  $\LL(A) = L$. Conversely, if $\LL(A) = L$ then $L$ is recognized by a local
  DFA, so it is local.
\end{proof}

This result gives us a way to check whether the language recognized by an
automaton is local. First, in the case of DFA, we can check this tractably:

\begin{proposition}[Locality testing for DFAs]
  \label{prp:dfaepsro}
  Given a DFA $A$, it is in PTIME to determine if $\LL(A)$ is local.
\end{proposition}
\begin{proof}
  Let $L \coloneq \LL(A)$. Build the local overapproximation $A'$ of~$L$ using
  \cref{def:localoverapprox}, which is clearly in PTIME.
  By \cref{clm:notsoimmediate}, 
  it suffices to check whether $L = \LL(A')$.
  It is well-known that testing equivalence of two DFAs is PTIME, so this concludes the proof.
\end{proof}

Note that \cref{prp:dfaepsro} is about testing if the language $L = \LL(A)$
recognized by~$A$ is local (i.e., some local DFA for $L$ exists); it is not
about testing if the specific DFA $A$ is a local DFA (which is easy to do in linear
time simply by following the definition). It is also not about the complexity of
testing whether the \emph{infix-free language} of~$\LL(A)$ is local, which can
be the case even if $\LL(A)$ is not local. We can do this by building an automaton $A'$ representing
$\red(L)$ from a DFA $A$ representing~$L$ and then study~$A'$, but the
straightforward construction for~$A'$ entails an exponential blowup
(\cite[Proposition 6]{barcelo2019boundedness}). We leave the precise complexity
of this task open.

By contrast, if the input language~$L$ is given as
an NFA instead of a DFA, then testing if~$L$ is local is
PSPACE-complete~\cite{amarilli2025locality}.
This holds even under the assumption that $L$ is infix-free, so that the same lower bound
applies to the problem of testing whether the infix-free language of an input
NFA is local.

\subsection{Tractability of Resilience for Local Languages}
\label{subsec:local-ptime}

Having introduced local languages and presented their properties, we can now
turn to the main result of this section, namely, that $\resbag(L)$ is PTIME when $L$ is local:

\begin{theorem}[Resilience for local languages]
  \label{prp:ro-ptime}
  If $L$ is local then $\resbag(L)$ is in PTIME. Moreover, the problem is
  also tractable in combined complexity: given an $\epsilon$NFA $A$ that
  recognizes a local language $L$, and a database $D$, we can 
  compute $\resbag(\query_L,D)$
  in $\tilde{O}(|A| \times |D| \times |\Sigma|)$.
\end{theorem}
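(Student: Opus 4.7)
The plan is to reduce $\resbag(\query_L, D)$ to a MinCut computation on a flow network obtained by a product construction between the automaton and the database. The structural property that makes this reduction work is precisely the defining feature of local languages: all transitions reading a given letter $a$ share a common target state. In the product network, this will turn every database fact into a single finite-capacity bottleneck whose cost equals its multiplicity, so that minimum cuts correspond to minimum contingency sets.

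As a preprocessing step, I would convert the input $\epsilon$NFA $A$ into an equivalent automaton in which every letter $a \in \Sigma$ has a unique target state $t_a$ (an ``RO-$\epsilon$NFA'' in the style of \cref{fig:1c}). Starting from $A$, for each $a \in \Sigma$ introduce a fresh state $t_a$, reroute every $a$-transition $(s, a, s')$ to $(s, a, t_a)$, and add an $\epsilon$-transition $(t_a, \epsilon, s')$. This always yields an automaton with the unique-target property; the crucial claim, which uses that $L$ is local, is that the recognized language is unchanged. From now on I assume $A$ already has this property. I also handle $\epsilon \in L$ separately as the degenerate case $\resbag(\query_L,D)=+\infty$.

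Next, I would build a flow network $\calN$ with vertices: a source $t_{\text{source}}$, a target $t_{\text{target}}$, a pair-vertex $(s,v)$ for each state $s$ of $A$ and each $v \in \adom(D)$, and a fact-vertex $u_e$ for each fact $e = v \xrightarrow{a} v' \in D$. All edges have infinite capacity unless noted otherwise. The edges are: $t_{\text{source}} \to (s_0, v)$ for each initial state $s_0$ and every $v$; $(s_f, v) \to t_{\text{target}}$ for each final state $s_f$ and every $v$; $(s,v) \to (s',v)$ for each $\epsilon$-transition $(s,\epsilon,s')$ and every $v$; $(s,v) \to u_e$ for each fact $e = v \xrightarrow{a} v'$ and each state $s$ admitting an outgoing $a$-transition; and, crucially, a single finite edge $u_e \to (t_a, v')$ of capacity $\mult(e)$, where $t_a$ is the unique $a$-target. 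I claim that the minimum cut cost of $\calN$ equals $\resbag(\query_L, D)$.

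Correctness goes in two directions. For ($\leq$), any contingency set $D' \subseteq D$ yields the candidate cut $\{u_e \to (t_a, v') : e \in D'\}$ of cost $\sum_{e \in D'} \mult(e)$: a surviving $t_{\text{source}} \to t_{\text{target}}$ path would lift to an accepting run of $A$ on an $L$-walk in $D \setminus D'$, contradicting the assumption. For ($\geq$), an optimal cut can be taken to consist only of finite edges (infinite ones are never cut), and the associated set of facts is a contingency set: any leftover $L$-walk in $D \setminus D'$, together with an accepting run of $A$, lifts to a $t_{\text{source}} \to t_{\text{target}}$ path in the uncut network. The main obstacle lies in this $(\geq)$ direction, where the lift must be well-defined despite $\epsilon$-transitions and the nondeterministic choice of entry state into $u_e$; the unique-target property is exactly what funnels every such entry into the same downstream pair-vertex $(t_a, v')$, keeping the lift coherent. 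For complexity, $\calN$ has $O((|A|+|\Sigma|) \cdot |D|)$ vertices and edges; combining the $\tilde{O}(|\calN|)$-time MinCut algorithm of~\cite{henzinger2024deterministic} with the $O(|A| \cdot |\Sigma|)$ preprocessing yields the claimed $\tilde{O}(|A| \cdot |D| \cdot |\Sigma|)$ combined complexity.
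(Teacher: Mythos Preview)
Your approach is essentially the same as the paper's: normalize the automaton so that reading a letter $a$ lands in a canonical state, take the product with the database, and solve MinCut. The paper normalizes to a slightly stronger form (an RO-$\epsilon$NFA with a \emph{unique transition} per letter, via \cref{lem:localepsro}), so each database fact becomes a single finite-capacity product edge directly; your weaker unique-\emph{target} form compensates with the extra fact-vertex layer $u_e$, which is a harmless variant achieving the same one-finite-edge-per-fact invariant.

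One genuine gap in your preprocessing: the rerouting step does \emph{not} preserve $\LL(A)$ for an arbitrary $\epsilon$NFA recognizing a local $L$. Take $L=\{ab\}$ and an $\epsilon$NFA $A$ that contains, besides the obvious $q_0\xrightarrow{a}q_1\xrightarrow{b}q_2$ path, an unreachable transition $q_3\xrightarrow{a}q_4$ together with $q_4\xrightarrow{c}q_5$ where $q_5$ is final. This $A$ still recognizes $\{ab\}$, but after your rerouting both $q_1$ and $q_4$ become $\epsilon$-successors of $t_a$, so $ac$ is accepted. The fix is to trim $A$ first; once every state is accessible and co-accessible, the letter-Cartesian property of~$L$ (equivalently, locality) does force every ``cross-over'' word introduced by the rerouting to already lie in~$L$, and your crucial claim goes through. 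The paper's construction (\cref{lem:epsnfatoro}) also trims as a first step for exactly this reason.
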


Note that, in the combined complexity upper bound, the input $A$ is not
necessarily a local DFA:
it suffices to be given
an arbitrary $\epsilon$NFA $A$ recognizing~$L$, together with the promise that
$L$ is local.

Before proving \cref{prp:ro-ptime}, we remark that it implies a
PTIME bound on $\resbag(L)$ 
whenever $\red(L)$ is local (as $\resbag(L)$ is equivalent to
$\resbag(\red(L))$).
For instance, $\resbag(L_0)$ for $L_0=\{a,aa\}$ is in PTIME because
$\red(L_0) = \{a\}$ is local, even though $L_0$ is not.
In fact, to see if \cref{prp:ro-ptime} applies to a language~$L$, 
we can always focus on $\red(L)$, thanks to the following lemma (whose proof is
deferred to \cref{apx:qwerty}):

\begin{toappendix}
  \subsection{Proof of \cref{lemma:qwerty}: Infix-Free Sublanguages Preserve Locality}
  \label{apx:qwerty}
\end{toappendix}
\begin{lemmarep}[Infix-free sublanguages preserve locality]
  \label{lemma:qwerty}
  If $L$ is local then so is $\red(L)$.
\end{lemmarep}
\begin{proof}
  By \cref{prp:carac} we have that a language is
  local iff it is letter-Cartesian. So we show the equivalent claim: if $L$ is
  letter-Cartesian then so is $\red(L)$.
  Let $\alpha x \beta$ and $\gamma x \delta$
  be in $\red(L)$, and let us show that $\alpha x \delta \in \red(L)$ as well.
  We have $\alpha x \beta \in L$ and $\gamma x \delta \in L$ and
  $L$ is letter-Cartesian, so we have $\alpha x \delta \in L$. Assume by
  contradiction that $\alpha x \delta \notin \red(L)$. Then $\alpha x \delta$
  has a strict infix, call it $\tau$, that is in $L$. Notice that $\tau$
  cannot be an infix of $\alpha$, because $\alpha x \beta \in \red(L)$ and
  $\tau$ would then be a strict infix of $\alpha x \beta$; and
  similarly $\tau$ cannot be an infix of $\delta$. Therefore $\tau$ must be of
  the form $\tau = \alpha_1 x \delta_1 \in L$ with $\alpha = \alpha_2
  \alpha_1$ and $\delta = \delta_1 \delta_2$. We further have 
  one of $\alpha_2 \neq
  \epsilon$ and $\delta_2 \neq \epsilon$, because $\tau$ is a strict infix of $\alpha x \delta$.
  Assume
  that $\alpha_2 \neq \epsilon$ (the case of $\delta_2 \neq \epsilon$ is
  symmetrical). Then we have $\alpha_1 x \delta_1 \in L$, and $\alpha x \beta \in
  L$, so since $L$ is letter-Cartesian we have $\alpha_1 x
  \beta \in L$. But as $\alpha_2 \neq \epsilon$ we know that $\alpha_1 x
  \beta$ is a strict infix of~$\alpha x \beta$, contradicting the fact that $\alpha x \beta \in \red(L)$. This concludes
  the proof.
\end{proof}

The proof of \cref{prp:ro-ptime} is not difficult and is a simple reduction to
the MinCut problem. To present it,
we need a variant of
local automata, which we name \emph{Read-Once (RO) automata}.
Intuitively, RO automata strengthen the requirement of local automata by requiring
that for each letter $a \in \Sigma$ there is \emph{at most one transition} labeled
with~$a$; but in exchange RO automata are not required to be deterministic and allow for
(arbitrarily many) $\epsilon$-transitions. 
Formally:
 
\begin{definition}[RO-$\epsilon$NFA]
  A \emph{read-once $\epsilon$NFA} (RO-$\epsilon$NFA for short) is an $\epsilon$NFA $A = (S,I,F,\Delta)$ such that for every letter $a\in \Sigma$
  there is at most one transition of the form $(s,a,s')$ in
  $\Delta$.
\end{definition}

\begin{example}[\cref{expl:locallang} continued]
  \label{expl:lang}
  The local DFA $A_1$ in \cref{fig:1a} for $ax^*b$ is an
  RO-$\epsilon$NFA, but the local DFA $A_2$ in \cref{fig:1b} for $ab|ad|cd$ is
  not. An RO-$\epsilon$NFA for this language is $A_3$ 
  (\cref{fig:1c}).
\end{example}

We show that RO-$\epsilon$NFAs recognize
precisely the local languages (and the presence of $\epsilon$-transitions is
essential for this, see \cref{apx:roexpr}):

\begin{lemma}[RO-$\epsilon$NFA = local languages]
  \label{lem:localepsro}
  A language $L$ is local iff it is recognized by an RO-$\epsilon$NFA. 
  Moreover, given an $\epsilon$NFA $A$ recognizing a local language $L =
  \LL(A)$, we can
  build 
  in $O(|A|\times|\Sigma|)$ an RO-$\epsilon$NFA recognizing $L$. 
\end{lemma}
\begin{proof}
  We first prove the first sentence of the claim.

    We start by showing that a local DFA can be transformed into an equivalent
    RO-$\epsilon$NFA. Let $A = (S, I, F, \Delta)$ be the local DFA. We build
    from it an RO-$\epsilon$NFA $A' = (S', I, F, \Delta')$ defined as follows.
    We let $S' \coloneq S \cup \{s'_a \mid a \in \Sigma\}$ where the states of
    the form $s_a'$ are fresh. For every letter $a \in \Sigma$ that has a
    transition in~$A$, letting $s_a$ be the common target state of all
    $a$-transitions in~$A$, we create one $a$-transition $(s_a', a, s_a) \in
    \Delta'$, and add $\epsilon$-transitions $(s, \epsilon, s_a') \in \Delta'$
    from every state $s \in S$ that has an outgoing $a$-transition in~$A$. The
    resulting automaton $A'$ is an RO-$\epsilon$NFA by construction, and there
    is a clear bijection between the runs of~$A$ and those of~$A'$.
    
    Next, we show that an RO-$\epsilon$NFA can be transformed into an equivalent
    local DFA. Let $A$ be the RO-$\epsilon$NFA. We remove
    $\epsilon$-transitions in PTIME using standard techniques: first ensure
    that every state having an $\epsilon$-path to a final state is final; then
    for every letter $a \in \Sigma$ occurring in~$A$, for every state $s$
    having an $\epsilon$-path to the starting state $s_a'$ of the unique
    $a$-transition $(s_a', a, s_a)$ of~$A$, add an $a$-transition from $s$
    to~$s_a$; finally remove all $\epsilon$-transitions. It is clear that the
    resulting automaton $A'$ is a local DFA that recognizes the same language,
    which concludes the proof of the first sentence of the claim.

  The second sentence follows from the construction given in
  \cref{clm:cart-is-local}, noticing that the sets $\Sigma_\start$,
  $\Sigma_\eend$ and $\Pi$ can be computed in PTIME from $A$.
\end{proof}

The point of RO-$\epsilon$NFAs is that we can use them to solve resilience
with a product construction, and conclude the proof of our main tractability
result in this section (\cref{prp:ro-ptime}):

\begin{proof}[Proof of \cref{prp:ro-ptime}]
  Given an $\epsilon$NFA $A'$ for the local language $L$, we compute an RO-$\epsilon$NFA 
  $A = (S,I,F,\Delta)$ for~$L$ in $O(|A'| \times |\Sigma|)$ using
  \cref{lem:localepsro}.
  Let $D$ be the input bag database with fact multiplicities given by $\mult$,
  and let $V$ be the domain of~$D$.
  Define a network~$\calN_{D,A}$ as follows:
  \begin{itemize}
    \item The set of vertices is $(V \times S) \cup \{t_\text{source},t_{\text{target}}\}$,
      with the fresh vertices $t_{\text{source}}$ and $t_{\text{target}}$ being
      respectively the source and target of the network.
    \item There is an edge with capacity $\mult(v \xrightarrow{a} v')$ from $(v,s)$ 
      to $(v',s')$ for all $v,v' \in V$ and $s,s'\in S$ and $a\in
      \Sigma$ such that $v \xrightarrow{a} v' \in D$ 
      and $(s,a,s')\in \Delta$
    \item There is an edge with capacity $+\infty$ from $(v,s)$
      to $(v,s')$ for all $v \in V$ and $s,s' \in S$ such that $(s,\epsilon,s')\in
      \Delta$.
    \item There is an edge with capacity $+\infty$ from $t_{\text{source}}$ to every vertex
      of the form $(v,s)$ with $s\in I$, and from every vertex of the form $(v,s)$
      with $s\in F$ to~$t_{\text{target}}$.
  \end{itemize}
  Clearly $\calN_{D,A}$ is built in $O(|A|\times |D|)$
  from $A$ and $D$.
  Moreover, there is a one-to-one correspondence between the
  contingency sets of~$D$ for~$\query_L$ and the 
  cuts of $\calN_{D,A}$ that have finite cost: intuitively, selecting facts of~$D$ 
  in the
  contingency set corresponds to cutting edges in~$\calN_{D,A}$, and
  ensuring that the remaining facts do not satisfy~$\query_L$ corresponds to
  ensuring that no source to target paths remain.
  Crucially, as~$A$ is read-once,
  the correspondence between facts of $D$ and edges of
  $\calN_{D,A}$ with finite capacity is one-to-one.
  Further, the sum of multiplicities of the facts of a contingency set is equal to the cost of the
  corresponding cut, so minimum cuts correspond to minimum contingency sets.
  We can thus solve MinCut on $\calN_{D,A}$
  in $\tilde{O}(|\calN_{D,A}|)$~\cite{henzinger2024deterministic} to compute the
  answer to $\resbag(L)$ in~$D$.
\end{proof}

\begin{toappendix}
  \subsection{Expressiveness of RO-$\epsilon$NFAs Without $\epsilon$-Transitions}
\label{apx:roexpr}

We explain here why RO-$\epsilon$NFAs without $\epsilon$-transitions are strictly
less expressive than RO-$\epsilon$NFAs with $\epsilon$-transitions: this
contrasts with the setting of NFAs, in which $\epsilon$NFAs and NFAs have the
same expressive power (i.e., they both recognize all regular languages).

Let us define an \emph{RO-DFA} as an RO-$\epsilon$NFA without
$\epsilon$-transitions; indeed, note that an RO-$\epsilon$NFA without
$\epsilon$-transitions is automatically deterministic thanks to the read-once
condition. RO-DFAs are also a subclass of local DFAs (and a strictly less
expressive one, as we now show). Indeed:

\begin{lemma}
  There is no RO-DFA recognizing the language $ab|ad|cd$ from \cref{expl:lang}.
\end{lemma}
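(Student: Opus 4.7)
The plan is to prove this by contradiction: suppose there is an RO-DFA $A = (S, I, F, \Delta)$ with $\LL(A) = L = ab|ad|cd$. Since the letters $a, b, c, d$ all appear in some word of~$L$, each of them must label some transition of~$A$, and by the read-once condition each letter labels \emph{exactly one} transition. So I can name these transitions $(p_a, a, q_a)$, $(p_b, b, q_b)$, $(p_c, c, q_c)$, and $(p_d, d, q_d)$.

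Next, I would extract constraints from the three words of~$L$. Since $A$ has no $\epsilon$-transitions, the unique accepting run on any word $x_1\cdots x_k$ must use the transitions $(p_{x_1}, x_1, q_{x_1}), \ldots, (p_{x_k}, x_k, q_{x_k})$ consecutively, with $p_{x_1} \in I$, $q_{x_k} \in F$, and $q_{x_i} = p_{x_{i+1}}$ for $1 \leq i < k$. Applying this to $ab \in L$ gives $p_a \in I$, $q_a = p_b$, and $q_b \in F$. Applying it to $ad \in L$ gives $p_a \in I$, $q_a = p_d$, and $q_d \in F$. Applying it to $cd \in L$ gives $p_c \in I$, $q_c = p_d$, and $q_d \in F$.

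Now I combine these equalities. From $ab$ and $ad$ I obtain $p_b = q_a = p_d$; from $cd$ I obtain $q_c = p_d$. Hence $q_c = p_b$. Together with $p_c \in I$ and $q_b \in F$, the transitions $(p_c, c, q_c)$ and $(p_b, b, q_b)$ compose into a valid accepting run of~$A$ on the word $cb$. Therefore $cb \in \LL(A) = L$, which contradicts $L = ab|ad|cd$. This contradiction completes the proof. The argument is essentially routine; the only point to verify carefully is that the read-once property forces \emph{global} uniqueness of each letter-transition, so that the source and target states appearing in different accepting runs are literally the same state, which is what lets me glue together runs for different words to fabricate a spurious accepting run for~$cb$.
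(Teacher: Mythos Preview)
Your proof is correct and follows essentially the same approach as the paper's: both argue that the unique $a$-, $b$-, $c$-, and $d$-transitions must be arranged so that $q_a = p_b = p_d = q_c$, forcing an accepting run on $cb$. Your version is simply more explicit about the bookkeeping (naming all the source/target states and spelling out why the run on each word is unique), while the paper states the same chain of equalities in prose.
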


\begin{proof}
Let $A$ be an RO-DFA which accepts $ab$, $ad$, and $cd$. Then the target
state of the $a$-transition must be the same as the starting state of the
$b$-transition (because $A$ accepts $ab$), and it must also be the same as 
the starting state of the $d$-transition (because $A$ accepts $ad$), and the
latter state must also be the same as the target state of the
$c$-transition (because $A$ accepts $cd$). This implies that $A$ also accepts $cb$.
\end{proof}

We leave open the question of characterizing the precise subset of the local
languages that are accepted by RO-DFA~\cite{cstheoryepsro}.

\end{toappendix}

\section{Hardness Techniques and First Results}
\label{sec:hard}
In this section, we move on to cases where computing resilience is intractable. 
From now on we only study languages
that are infix-free, because showing hardness for an infix-free language $L$ is
sufficient to imply the same for any language $L'$ such that $\red(L') = L$.

We start the section by showing the NP-hardness of $\resset(aa)$, by an explicit
reduction from the minimum vertex cover problem. 
We then generalize this reduction and
define the notion of \emph{(hardness) gadgets} as databases that can be used
to encode graphs and reduce from minimum vertex cover.
We show that whenever a language~$L$ admits a gadget 
then $\resset(L)$ is
NP-hard.
We finally illustrate these techniques to show the
hardness of resilience for the specific language
$axb|cxd$.

The notion of gadgets introduced in this section will be used in the next two 
sections to show the hardness of entire language families. Specifically, we will
generalize in \cref{sec:fourlegged} the hardness of $\resset(axb|cxd)$ to that
of the so-called \emph{four-legged languages}; and generalize in \cref{sec:repeated} the hardness
of $\resset(aa)$ to that of all finite languages featuring a word with a
repeated letter.

\subsection{Hardness of $aa$}
\label{sec:hardness:aa}
Let us first show that
resilience is hard for the language $aa$. The result is 
known from the setting of CQs~\cite[Prop.~10]{DBLP:conf/pods/FreireGIM20}
by a direct reduction from 3-SAT. We give a self-contained proof where we reduce
instead
from minimum vertex cover: this proof serves as an intuitive presentation of
the more 
general constructions that come next.

\begin{proposition}
  \label{prp:aa}
  $\resset(aa)$ is NP-hard.
\end{proposition}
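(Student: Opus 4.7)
The plan is to reduce from the Minimum Vertex Cover problem (known to be NP-hard): given an undirected graph $G = (V, E)$ and an integer $k$, decide whether $G$ admits a vertex cover of size at most~$k$. I will construct a graph database $D_G$ in polynomial time such that $\resset(Q_{aa}, D_G) \leq k$ iff $G$ has a vertex cover of size at most~$k$; this suffices to prove the NP-hardness of $\resset(aa)$.

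The first idea is to introduce, for each vertex $v$ of $G$, a ``vertex fact'' $f_v$ in $D_G$ with fresh endpoint nodes $u_v, w_v$, so that removing $f_v$ corresponds to placing $v$ in the vertex cover. Each edge $\{u, v\}$ of $G$ is then encoded by an $aa$-walk using $f_u$ and $f_v$: after orienting the edges of $G$ (say by a total order on $V$), the identification $w_u = u_v$ in $D_G$ creates the walk $u_u \xrightarrow{a} w_u = u_v \xrightarrow{a} w_v$. Thus every edge of $G$ becomes an $aa$-witness in $D_G$ using exactly the two vertex facts of its endpoints, and can be killed only by removing $f_u$ or $f_v$.

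Given this setup, the two directions of the reduction are natural. For soundness, any vertex cover $S$ of $G$ of size $k$ yields the contingency set $\{f_v : v \in S\}$, which breaks every edge-walk because $S$ covers every edge. For completeness, a minimum contingency set projects to a subset $S \subseteq V$ which must cover every edge, since every edge-walk must be destroyed by cutting one of its two vertex facts.

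The main obstacle is that a naive chained-identification construction may generate ``phantom'' $aa$-walks whenever multiple edges of $G$ share a vertex: most problematically, a self-loop walk $f_v \cdot f_v$ can be induced (for instance by certain oriented triangles in $G$) and would force $f_v$ into every contingency set even when $v$ lies outside a minimum vertex cover. To obtain a tight reduction, the construction has to be refined by adding a small auxiliary structure around each vertex --- a ``vertex gadget'' of a few extra $a$-facts that decouples the identifications coming from distinct edges --- while keeping the cost of ``covering $v$'' equal to one. Designing and analyzing this gadget cleanly, and verifying that the resulting minimum contingency set of $D_G$ equals exactly the vertex cover number of $G$, is the technical heart of the proof. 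This construction also sets the template for the general notion of hardness gadgets introduced in the remainder of the section, which will lift the same strategy to broader classes of languages.
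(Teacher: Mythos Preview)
Your proposal correctly identifies the right source problem (Minimum Vertex Cover) and correctly diagnoses the obstacle with the naive identification-based encoding: cascading identifications such as $w_u = u_v = u_{v'} = w_v$ can collapse distinct vertex facts and create spurious $aa$-walks (self-loops, or walks between non-adjacent vertices). However, the proposal then stops precisely where the actual work begins. You assert that a ``vertex gadget'' consisting of ``a few extra $a$-facts'' will decouple the identifications while keeping the cost of covering $v$ equal to one, but you never describe this gadget, and it is not at all clear that such a thing exists. If you add auxiliary $a$-facts around each vertex, those facts themselves participate in new $aa$-walks, and you must argue both that no phantom walk survives and that removing a \emph{single} fact still suffices to kill every walk through~$v$. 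You explicitly call this ``the technical heart of the proof'' and then omit it; as written, the proposal is a plan, not a proof.

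The paper resolves the obstacle by a different and cleaner route. Rather than a vertex gadget, it uses an \emph{edge} gadget: for each oriented edge $e = (u,v)$ it inserts four fresh facts $t_u \xrightarrow{a} t_{e,1} \xrightarrow{a} t_{e,2} \xrightarrow{a} t_{e,3}$ and $t_v \xrightarrow{a} t_{e,2}$, with $t_{e,1}, t_{e,2}, t_{e,3}$ fresh per edge. Because the shared nodes $t_u, t_v$ only occur as \emph{tails} of gadget facts, no walk can cross from one edge-copy to another, so the graph of $aa$-matches of the resulting database is exactly a $5$-subdivision of~$G$. Crucially, the paper does \emph{not} insist that resilience equal the vertex cover number of~$G$; instead it invokes \cref{prp:subdivision} to show that the vertex cover number of an odd $\ell$-subdivision is $k + m(\ell-1)/2$, a known polynomial offset. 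This indirection is what makes the argument go through without any delicate per-vertex analysis, and it is the idea your proposal is missing.
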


In other words, given a
directed graph~$G$ and integer $k \in \mathbb{N}$, it is NP-hard to decide if we can remove $k$
edges of~$G$ such that $G$ no longer contains any walk of length 2.
We show \cref{prp:aa} by reducing from the \emph{minimum vertex cover problem}. 
Recall that a \emph{vertex cover} of an undirected graph $G = (V, E)$ is a
subset of vertices $V' \subseteq V$ such that every edge of $E$ is incident to a
vertex of~$V'$, i.e., for each $e \in E$, we have $e \cap V' \neq \emptyset$.
The \emph{vertex cover number} 
of~$G$ is the minimum cardinality of a vertex
cover of~$G$.
Given an undirected graph $G$ and
an integer $k \in \mathbb{N}$, 
the \emph{minimum vertex cover problem} asks if the vertex cover number of $G$ is at most~$k$.
It is an NP-complete problem~\cite{karp1972reductibility}.

Before showing the reduction, we show that the vertex cover number 
of an undirected graph~$G$ can be computed from the vertex
cover number of any odd-length subdivision of~$G$. 
Formally, for $\ell \in \mathbb{N}$, $\ell \geq 1$, an \emph{$\ell$-subdivision} of~$G$ 
is an undirected graph $G'$
obtained by replacing each edge of~$G$ by a path of length $\ell$ (i.e., with
$\ell-1$
fresh internal vertices, that are disjoint across edges).
We claim:

\begin{proposition}[Hardness of vertex cover on subdivisions]
  \label{prp:subdivision}
  Let $\ell \in \mathbb{N}$ be an odd integer. Then for any undirected graph $G =
  (V, E)$, letting $G'$ be an
  $\ell$-subdivision of~$G$, then the vertex cover number of~$G'$ is 
  $k + m(\ell-1)/2$, 
  where $k$ is the vertex cover number of~$G$ and $m$ is the
  number of edges of~$G$.
\end{proposition}

\begin{proof}
  It suffices to show that $G$ has a vertex cover of cardinality at most~$k$
  if and only if $G'$ has a vertex cover of cardinality at most $k+m(\ell-1)/2$.

  We first show the “only if” direction. Let $G = (V, E)$ and let $C$ be a vertex
  cover of~$V$ of cardinality at most~$k$. We pick a subset $C'$ of the vertices of $G'$
  by taking the vertices of $C$ and picking vertices of the subdivisions in the
  following way. For each edge $\{u,v\}$ of~$G$, letting $u - v_1
  - \cdots v_{\ell-1} - v$ be the corresponding path in~$G$,
  we distinguish three cases:
  \begin{itemize}
    \item If $u \in C$ and $v\notin C$ then we add to $C'$ the vertices $v_2, \ldots,
      v_{\ell-1}$
    \item If $v \in C$ and $u \notin C$ then we add to $C'$ the vertices $v_1, \ldots,
      v_{\ell-2}$.
    \item If $u \in C$ and $v \in C$ we do one of the above (it does not matter
      which).
  \end{itemize}
  The case where $u \notin C$ and $v \notin C$ is impossible because $C$ is a
  vertex cover of~$G$.

  Now, for each edge of~$G$ we select $(\ell-1)/2$ vertices in~$C'$, so in total $C'$
  has the prescribed cardinality. To see that it is a vertex cover, note that
  all edges of~$G'$ occurs in a path of the form $u - v_1
  - \cdots v_{\ell-1} - v$, and the vertices of~$C'$
  selected above form a vertex cover of the edges of such paths. Hence, $C'$ is
  a vertex cover of~$G'$.

  We next show the second direction, which is somewhat more challenging. Given a
  vertex cover $C'$ of~$G'$ of cardinality at most $k + m(\ell-1)/2$, we will
  first modify $C'$ without increasing its cardinality to guarantee the
  following property (*): for each edge $(u,v) \in E$, considering the
  corresponding path $u - v_1
  - \cdots v_{\ell-1} - v$ of~$G'$, then there are
  precisely $(\ell-1)/2$ of the intermediate vertices of the path that are
  in~$C'$, formally $|C' \cap \{v_1, \ldots, v_{\ell-1}\}| = (\ell-1)/2$.

  For this, we process each path $u = v_0 - v_1
  - \cdots v_{\ell-1} - v_\ell = v$, and consider each vertex $v_1,
  \ldots, v_{\ell-1}$ in order. If for $i>1$ we have that $v_i \in C'$ and the
  previous vertex $v_{i-1}$ was already in~$C'$, then we remove $v_i$ from $C'$
  and add instead $v_{i+1}$ to~$C'$. Note that the cardinality of $C'$ does not
  increase, and may strictly decrease if $v_{i+1}$ was already in~$C'$. At the
  end of this process, given that we start with a vertex cover $C'$ of~$G'$, we see
  that $C'$ contains precisely $(\ell-1)/2$ vertices of the path: if $v_1 \in
  C'$ then we keep the $v_i$'s for which $i$ is odd, and otherwise the $v_i$'s
  for which $i$ is even.

  We have processed $C'$ to still be a vertex cover, and have cardinality
  $m(\ell-1)/2 + k'$ for some $k' \leq k$, with $k' = |C' \cap V|$. Let us define
  $C \coloneq  C' \cap V$. It has cardinality at most~$k$, and we claim that~$C$ is a
  vertex cover. Indeed, assuming by contradiction that some edge $(u,v)$ of~$G$
  is not covered by~$C$, so that $u \notin C$ and $v \notin C$,
  and considering the path $u = v_0 - v_1
  - \cdots v_{\ell-1} - v_\ell = v$, we know that $C'$
  (after the processing of the previous paragraph) was a vertex cover of~$G'$,
  but if $u \notin C$ and $v \notin C$ then it cannot be the case that precisely
  $(\ell-1)/2$ of the $v_i$'s are in~$C'$ and cover all the edges of the path, a
  contradiction.

  We have thus shown both directions of the claimed equivalence, which concludes
  the proof.
\end{proof}

Let us now show the hardness of resilience for the language $aa$:

\begin{proof}[Proof of \cref{prp:aa}]
    \begin{figure}
      \centering
      \begin{subfigure}{.2\linewidth}
        \centering
        \begin{tikzpicture}[xscale=1,yscale=.7]
          \color{orange}          
          \node (v) at (1, 0) {$v$};
          \node (u) at (.33, 1) {$u$};
          \node (w) at (1.67, 1) {$w$};
          \node (v2) at (1, -2.5) {$v$};
          \node (u2) at (.33, -1.5) {$u$};
          \node (w2) at (1.67, -1.5) {$w$};
          \color{black}
          \draw (u) -- (v);
          \draw (v) -- (w);
          \draw (w) -- (u);
          \draw[->] (u2) -- (v2);
          \draw[->] (v2) -- (w2);
          \draw[->] (w2) -- (u2);
        \end{tikzpicture}
          \caption{Undirected graph $G$ (top) and arbitrary orientation $G'$ of~$G$ (bottom)}
          \label{fig:coding1}
      \end{subfigure}
      \hfill
  \null\begin{subfigure}{.13\linewidth}
    \centering
    \includegraphics[scale=0.35]{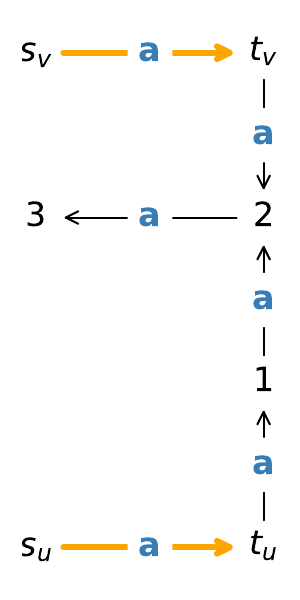}
    \caption{Completed gadget for $aa$
    }
    \label{fig:aagadget}
  \end{subfigure}
  \hfill
  \begin{subfigure}{.19\linewidth}
    \centering
  \includegraphics[scale=0.35]{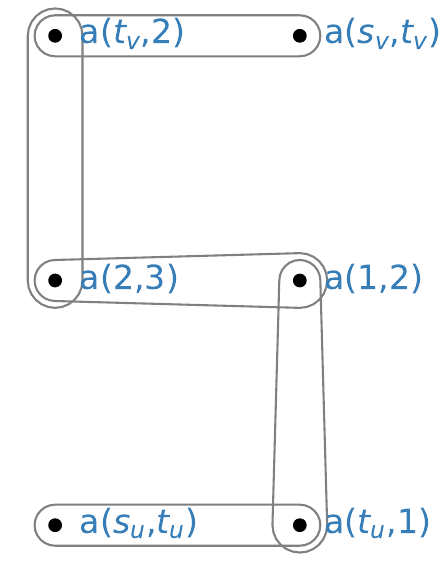}
    \caption{Graph of $aa$-matches of \cref{fig:aagadget}
    }
    \label{fig:aamatches}
  \end{subfigure}
      \hfill
      \begin{subfigure}{.35\linewidth}
        \centering
      \includegraphics[scale=0.4]{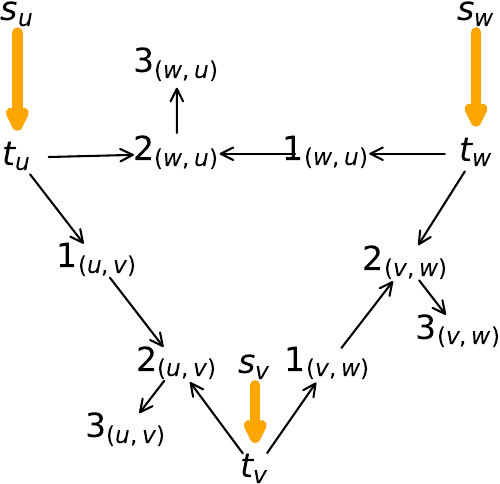} 
      \caption{Database $\Xi$ constructed to encode $G'$.
        All edges
        are labeled by~$a$.}

      \label{fig:aacoding2}
      \end{subfigure}
      \caption{Illustration of the proof of \cref{prp:aa}.   
      }
      \label{fig:aacoding}
    \end{figure}

\begin{figure}
  \begin{subfigure}{.25\linewidth}
    \centering
    \includegraphics[scale=0.4]{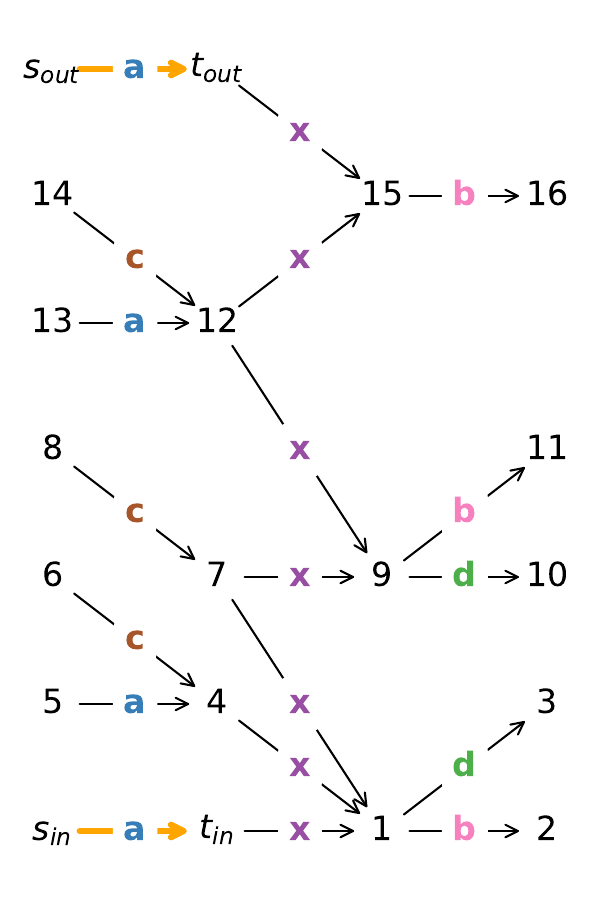}
    \caption{Completed gadget for $axb|cxd$
    }
    \label{fig:axbcxdgadget}
  \end{subfigure}
  \hfill
  \begin{subfigure}{.37\linewidth}
      \centering
    \includegraphics[scale=0.35]{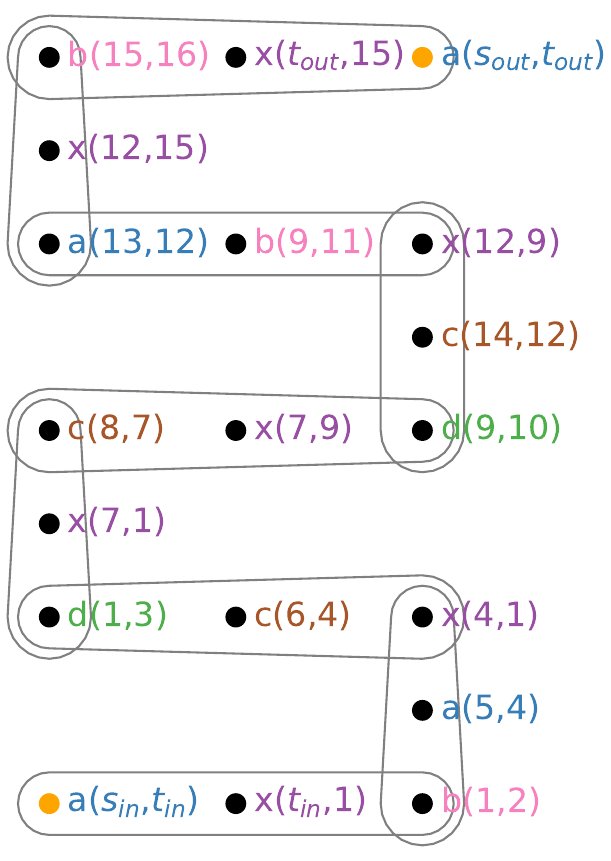} 
    \caption{Hypergraph of matches of $axb|cxd$ on the completed gadget in \cref{fig:axbcxdgadget}}
    \label{fig:axbcxdallmatches}
  \end{subfigure}
  \hfill
  \begin{subfigure}{.26\linewidth}
    \centering
  \includegraphics[scale=0.35]{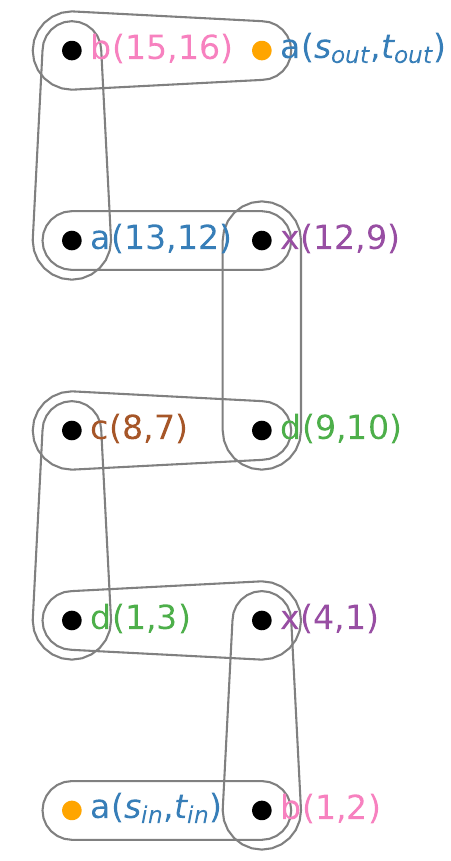}
    \caption{Condensed hypergraph of matches of \cref{fig:axbcxdallmatches}}
    \label{fig:axbcxdmatches}
  \end{subfigure}
  \caption{Illustration of hardness gadgets and corresponding condensed
  hypergraphs of matches for the proof of \cref{prp:axbcxd}.
  }
\end{figure}
  We reduce from the vertex cover problem. Let $G$ be an undirected graph and $k
  \in \mathbb{N}$ be an integer. We perform the following construction, 
  illustrated in \cref{fig:aacoding}.
  We start by picking an arbitrary orientation for each edge of~$G$, giving a
  directed graph $G' = (V, E)$: see \cref{fig:coding1}.
  We then construct a database $\Xi$ in the following way.
  First, for each vertex $u \in V$
  we create a fact $s_u \xrightarrow{a} t_u$. 
  Second, for each edge $e = (u,
  v)$ of~$E$
  we create the following set $D_e$ of facts:
  $t_u \xrightarrow{a} t_{e,1}$,
  $t_{e,1} \xrightarrow{a} t_{e,2}$,
  $t_{e,2} \xrightarrow{a} t_{e,3}$,
  $t_{v} \xrightarrow{a} t_{e,2}$. 
  Together with $s_u \xrightarrow{a} t_u$ and $s_v \xrightarrow{a} t_v$, the
  facts of~$D_e$ form an instance 
  that we call the \emph{completed gadget} $D_e'$
  for~$e$: we illustrate
  $D_e'$ in \cref{fig:aagadget}, abbreviating $t_{e,i}$
  as $i$ for brevity.
  We introduce the notations $F_{e,\ii} = s_u \xrightarrow{a} t_u$ 
  and $F_{e,\oo} = s_v
  \xrightarrow{a} t_v$ to refer to the two facts of $D_e' \setminus D_e$, called
  the 
  \emph{endpoint facts}: 
  they are 
  shown in bold orange in \cref{fig:aagadget}.
  The overall database $\Xi$ built from the orientation~$G'$ of~$G$ is 
  illustrated in \cref{fig:aacoding2}, where for brevity we 
  abbreviate $t_{e,i}$ as $i_e$.
  Building $\Xi$  from~$G$
is in $O(|G|)$ (as $D_e$ has constant size). 

  We then claim that the resilience of $aa$ on~$\Xi$ is equal to the vertex cover
  number of a 5-subdivision of~$G$. 
  Indeed,
  define the \emph{graph of $aa$-matches} on a database~$D$
  as the undirected graph having one vertex for each fact
  of~$D$
  and one edge for each pair of facts of~$D$ that forms an $aa$-walk.
  It is then clear that $\resset(aa,\Xi)$ is equal to the vertex cover number of the
  graph of $aa$-matches of $\Xi$.
  Moreover, the graph of $aa$-matches of each~$D_e'$
  is a path of length 5 from vertex 
  $F_{e,\ii}$ to vertex $F_{e,\oo}$ 
  (see \cref{fig:aamatches}).
  Thus, the graph of $aa$-matches of~$\Xi$ is (isomorphic to) a
  5-subdivision of~$G$,
   and we conclude by \cref{prp:subdivision}.
\end{proof}

\subsection{Hardness Gadgets}
To show the hardness of resilience for other languages than $aa$,
we will build \emph{hardness gadgets}, or \emph{gadgets} for
brevity.
Gadgets are used as building blocks to reduce from
the vertex cover problem, generalizing the idea of \cref{fig:aagadget}
from \cref{sec:hardness:aa}.
We first define \emph{pre-gadgets} as databases with two domain elements to
which endpoint facts will be attached. We then define the \emph{encoding} $\Xi$ of
a directed graph $G$ by gluing together 
copies of the pre-gadgets following~$G$, as 
in the proof of \cref{prp:aa}. We last define \emph{gadgets} as pre-gadgets
whose hypergraph of matches 
of the query~$\query$ forms an odd path. 
The vertex cover problem
on~$G$ then reduces to the resilience of~$\query$ on~$\Xi$.

\begin{definition}[Pre-gadget]%
  \label{def:pregadget}
  A \emph{pre-gadget} 
  is a  tuple $\Gamma = (D, t_\ii, t_\oo, a)$
  consisting of a database $D$, two distinguished domain elements $t_\ii \neq t_\oo$
  of~$D$
  respectively called the \emph{in-element} and \emph{out-element},
  and a label $a \in \Sigma$. We require that $t_\ii$ and
  $t_\oo$ never occur as heads of facts of~$D$.
  The \emph{completion}
  of~$\Gamma$ is the database $D'$ obtained from $D$ by adding two fresh distinct
  domain elements $s_\ii$ and $s_\oo$ and the two facts 
  $s_\ii \xrightarrow{a} t_\ii$ and 
  $s_\oo \xrightarrow{a} t_\oo$ using letter $a$.
\end{definition}

\begin{example}
  \cref{fig:aagadget} gives an example completion of a pre-gadget. The
  in-element and out-element of the pre-gadget are respectively $t_u$ and $t_v$, and
  the facts $s_u \xrightarrow{a} t_u$ and $s_v \xrightarrow{a} t_v$ are the
  facts added in the completion (so the pre-gadget has 4 edges and the
  completion has 6).
\end{example}

We next explain how we use gadgets to encode directed graphs (as in
\cref{fig:aacoding}):
\begin{definition}[Encoding a graph with a pre-gadget]
  \label{coding}
  Given a directed graph $G$
  and a pre-gadget $\Gamma = (D, t_\ii, t_\oo, a)$,
  the \emph{encoding} of~$G$ with~$\Gamma$ is the database $\Xi$ obtained
  from~$G$ as follows:
  \begin{enumerate}
  \item For each node $u$ in $G$, create a fact $s_u \xrightarrow{a} t_u$
    in~$\Xi$, where $s_u$ and $t_u$ are fresh domain elements.
  \item For each edge $e = (u,v)$ in $G$, add to~$\Xi$ a %
    copy $D_e$ of $D$ defined by the following isomorphism:
      \begin{itemize}
        \item domain element $t_\ii$ is renamed to~$t_u$ and domain
          element $t_\oo$ is renamed to~$t_v$
        \item all other elements are renamed to fresh elements
          called the \emph{internal elements} of~$D_e$.
      \end{itemize}
  \end{enumerate}
\end{definition}

Let us observe the following property about the encoding, which is the reason why
we require that in-elements and out-elements never occur as the head of
facts in pre-gadgets (before completion).

\begin{claim}[Walks in encodings]
  \label{composability}
  Let $G$ be a directed graph, %
  let $\Gamma = (D, t_\ii, t_\oo, a)$ be a pre-gadget, and let $\Xi$ be the encoding of $G$ with $\Gamma$.
  Then
  no walk of $\Xi$ passes through two internal elements 
  belonging to different copies of $D$. 
\end{claim}
\begin{proof}
  Consider a walk in~$\Xi$ and consider the first element used in the walk which
  is an internal element of a copy $D_e$ of~$D$. The only elements shared
  between $D_e$ and the rest of~$\Xi$ are the images of the elements $t_\ii$ and
  $t_\oo$, i.e., elements of the form $t_v$ for $v$ a vertex of~$D$. However,
  by our requirement on~$D$,
  there is no fact in~$D$ having $t_\ii$ or $t_\oo$ as head. Hence, the
  remainder of the walk can never reach the images of~$t_\ii$ and~$t_\oo$, so it
  can never reach an internal element of another copy.
\end{proof}

For now, our definition of pre-gadgets is independent of the query. Let us now
formalize what we require from query matches in gadgets, in order to reduce
from the vertex cover problem. We do this by introducing the \emph{hypergraph of
matches}, 
generalizing the graph of matches from the proof of \cref{prp:aa},
along with a \emph{condensation} operation to simplify it.

\subsection{Condensed Hypergraph of Matches}
\label{subsec:condensed}
Let $L$ be a language and $D$ a database.
A \emph{match} 
(or witness)
of~$L$ on~$D$ is a set of edges
corresponding to an 
$L$-walk (as defined in \cref{sec:prelims}): formally, each $L$-walk $e_1,
\ldots, e_m$ defines a match which is the set
$\{e_1, \ldots, e_m\}$, noting that several $L$-walks may define the same
match. Then:

\begin{definition}[Hypergraph of matches]
  The \emph{hypergraph of matches} of~$D$ and~$L$, denoted $\calH_{L,D}$, 
  is the hypergraph whose nodes are the
  facts of~$D$ and whose hyperedges are the matches of~$L$ on~$D$.
\end{definition}

Recall that a \emph{hitting set} (also called \emph{hypergraph
transversal}~\cite{gottlob2004hypergraph})
of a hypergraph $\calH = (V,E)$ is
a subset $V'\subseteq V$ such that $V' \cap e \neq \emptyset$ for
every $e\in E$. 
Thus, hitting sets are a generalization of vertex covers to hypergraphs.
It is clear from the definition of resilience that $\resset(Q_L,D)$
is equal to the minimum size of a hitting set of
the hypergraph of matches $\calH_{L,D}$.  We now introduce two
rewriting rules, called the \emph{condensation rules}, that can be applied to any hypergraph $\calH = (V,E)$ to simplify it
without affecting the minimum size of hitting sets.
For $v\in V$, write $E(v) = \{e\in E \mid v \in e\}$. 
The condensation rules are:

\begin{description}
  \item[Edge-domination.] If there are $e,e' \in E$ with $e
    \neq e'$ and $e\subseteq e'$, 
    we can remove $e'$ to obtain $\calH' =
    (V,E\setminus \{e'\})$.  Intuitively, hitting sets must intersect $e$ so they 
    automatically intersect~$e'$.
  \item[Node-domination.] If there are $v,v' \in V$ with $v \neq v'$ such that 
    $E(v) \subseteq E(v')$, 
    we can remove $v$ to obtain $\calH'=
    (V\setminus \{v\}, \{e\setminus \{v\} \mid e \in E\})$. Intuitively, we can
    always replace $v$ by~$v'$ 
    in hitting sets.
\end{description}

We say that $\calH'$ is a \emph{condensed hypergraph} of
$\calH$ if $\calH'$ can be obtained from $\calH$ by 
applying some sequence of condensation rules
(note that we do \emph{not} require that $\calH'$ cannot be further simplified
using the rules).
It can be remarked that the condensation rules are \emph{confluent} in the sense
that we always get isomorphic hypergraphs if we apply them to an input
hypergraph until it is
no longer possible, no matter the order in which we apply them:
see \cite{amarilli2025confluence}.
However, we will not need confluence for our purposes in the present paper.
One can easily see that the rules preserve the minimum size of a hitting set:

\begin{claim}[Condensation preserves minimum size of hitting sets]
  \label{simplifications}
  Let $\calH'$ be 
  a condensed hypergraph of~$\calH$.
  Then the minimum size of a hitting set of~$\calH$ and $\calH'$ are the same.
\end{claim}

\begin{proof}
  We show the claim by induction on the length on the sequence of applications
  of the condensation rules: the
  base case, corresponding to the empty sequence, is trivial. 
  
  For the induction
  case, let us first assume that  $\calH'$ is obtained from~$\calH = (V,
  E)$ by applying
  the edge-domination rule. We then immediately see 
 that any $V' \subseteq V$ is a hitting set of
    $\calH$ iff it is a hitting set of $\calH'$, so the minimal size of a
    hitting set in $\calH$ and~$\calH'$ is the same.

    Second, let us assume that $\calH' = (V', E')$ is obtained from~$\calH = (V,
    E)$ by
    applying the node-domination rule on two vertices $v$ and $v'$ with $E(v)
    \subseteq E(v')$.
    Write $V' = V \setminus \{v\}$ and $E' = \{e \setminus \{v\} \mid e \in
    E\}$.
    We show both directions of the claim. First, 
    any hitting set of $\calH'$ is also a hitting set of
    $\calH$, so the minimum size of a hitting set of~$\calH$ is at most the
    minimum
    size of a hitting set of~$\calH'$. Second, let us consider 
    a hitting set $S'$ of $\calH$. If
    $v\notin S'$ then $S'$ is a hitting set of $\calH'$. Otherwise, if $v
    \in S'$, then $(S' \setminus \{v\}) \cup \{v'\}$ is a
    hitting set of $\calH'$ of cardinality at most~$|V'|$. In both cases
    $\calH'$ has a hitting set of cardinality at most~$|S'|$, so that the
    minimum size of a hitting set of~$\calH'$ is at most the minimum set of a
    hitting set of of~$\calH$. Hence, the equality is proven, which concludes
    the induction step and finishes the proof.
\end{proof}

By \cref{simplifications}, we know that $\resset(Q_L,D)$ is equal to
the minimum size of a hitting set of %
any condensed hypergraph of~$\calH_{L,D}$, so we can always apply condensation rules in hardness proofs.
We can now define the requirement for a pre-gadget $\Gamma = (D, t_\ii, t_\oo,
a)$ to be a gadget\footnote{Note that a notion generalizing gadgets has been introduced under the name
of \emph{Independent Join Paths} in~\cite{makhija2024unified}, where it is used
  to show the hardness of resilience of conjunctive queries.}: we 
require that the hypergraph of matches $\calH_{L,D'}$ of~$L$ on the completion
$D'$ of~$\Gamma$ can be condensed into an \emph{odd path}. Formally,
let $\calH = (V, E)$ be a hypergraph and $u, v \in V$. We call $\calH$ an
  \emph{odd path from~$u$ to~$v$} if all hyperedges of~$\calH$ have size 2 and the
  corresponding graph is an odd path from~$u$ to~$v$; i.e.,
  there is a
  sequence $u = w_1, \ldots, w_{2k} = v$ such that 
  $E = \{\{w_i, w_{i+1}\} \mid 1 \leq i < 2k\}$.

\begin{definition}[Gadget]
  \label{def:gadget}

  A \emph{gadget}
  for language~$L$ is a pre-gadget
  $\Gamma = (D, t_\ii, t_\oo, a)$
  that satisfies the following condition:
  letting $D'$ be the completion of~$\Gamma$
  (see \cref{def:pregadget})
  where we add two $a$-facts $F_\ii$ and $F_\oo$ 
  from fresh elements to $t_\ii$
  and $t_\oo$ respectively, there exists a condensation of the hypergraph
  of matches $\calH_{L,D'}$ of~$L$ on~$D'$ which is an odd path from~$F_\ii$ to~$F_\oo$.
\end{definition}

\begin{example}
  The hardness pre-gadget whose completion is pictured in \cref{fig:aagadget} is
  a gadget for~$aa$, as can be seen on the hypergraph of matches
  of \cref{fig:aamatches} (with no condensation rules applied).
\end{example}

We claim that the existence of a gadget implies that the resilience problem is
hard, namely:

\begin{proposition}[Gadgets imply hardness]
  \label{prp:gadgethard}
  Let $L$ be an infix-free language. If there is a gadget~$\Gamma$ for~$L$, then
  $\resset(L)$ is NP-hard. %
\end{proposition}

All our NP-hardness results in the paper will be proved using
\cref{prp:gadgethard}, by building gadgets for the queries that we consider. We
note that we build the gadgets by hand, 
so it is possible that some queries may have shorter gadgets than the ones
that we exhibit.

We prove~\cref{prp:gadgethard} in the remaining of
\cref{subsec:condensed}, by reducing from
the vertex cover problem. Let $G'$ be an undirected graph, and let $G$ be an arbitrary orientation of $G$. We use the
gadget to encode the input graph. The idea is that the definitions that we impose on
gadgets are designed to ensure that the encoding of $G$ has a
condensed hypergraph of matches $\calH'$ which is an odd-length
subdivision of the original graph $G'$, so that by
\cref{prp:subdivision} and \cref{simplifications} computing a minimum vertex cover on~$G'$
reduces to computing a minimum hitting set of $\calH'$. Let us prove this formally.

  We call an \emph{odd subdivision} of an undirected graph an $\ell$-subdivision for
  some odd $\ell \in \mathbb{N}$.
  The technical claim used in the proof of \cref{prp:gadgethard} is the
following:

\begin{claim}
  \label{clm:gadgethypergraph}
  Let $G$ be a directed graph, let $\Gamma = (D, t_\ii, t_\oo, a)$ be a gadget for $L$, and let
  $\Xi$ be the encoding of~$G$ using~$\Gamma$. Then there exists a  condensed
  hypergraph of $\calH_{L,\Xi}$ that has only hyperedges of size $2$ and which is (when seen as an undirected graph) an odd subdivision of~$G'$, the undirected graph corresponding to $G$.
\end{claim}

\begin{proof}
  Thanks to \cref{composability}, one can check that the hypergraph
  of matches $\calH_{L,\Xi}$ is precisely the union of the
  hypergraphs of matches of the various copies $D_e$ of $D$, but where
  in- and out-elements have been merged as in
  \cref{coding}. Then, considering the sequence of condensation rules to be done to obtain the condensed
  hypergraph of matches of the completion $D'$ of $D$ that is an odd path as in
  \cref{def:gadget}, we can do this sequence of simplifications on
  every copy $D_e$ of the gadget $D$ (noticing that by
  \cref{def:gadget} the source and target facts are not affected by
  the condensation rules), thus indeed obtaining a condensed
  hypergraph of matches of $\calH_{L,\Xi}$ that is an odd subdivision of
  $G'$. 
\end{proof}

We can now prove \cref{prp:gadgethard} from \cref{clm:gadgethypergraph}:

\begin{proof}[Proof of \cref{prp:gadgethard}]
  We reduce from the vertex cover problem. Let $\Gamma$ be a gadget for
  $L$. Let $G'$ be an undirected graph, and pick an arbitrary
  orientation $G$ of $G'$. Construct the encoding $\Xi$ of $G$ by $\Gamma$:
  clearly this can be done in linear time. Use the oracle to
  compute the resilience of $\query_L$ on $\Xi$, written $r$, which is
  equal to the minimum size of a hitting set in $\calH_{L,\Xi}$. By
  \cref{clm:gadgethypergraph}, $\calH_{L,\Xi}$ has a condensed
  hypergraph~$\calH'$ that is an odd subdivision of $G'$ and moreover
  by \cref{simplifications} we have that the size of a minimum
  hitting set of $\calH'$ is also equal to $r$. We can then conclude
  using \cref{prp:subdivision} (noticing that the length of the
  odd subdivision is a constant that depends only on~$L$).
\end{proof}

Hence, from now on, all hardness results in the paper will be shown by
simply exhibiting hardness gadgets for the
languages in question. We also wrote an implementation~\cite{gadgetrepo} to
automatically perform some sanity checks on the gadgets that we construct. Namely,
our implementation takes as input a description of a gadget along with a query.
The query can be either explicit (e.g., $aa$, or $axb|cxd$), or it can be
implicitly defined via requirements (e.g., words that are known to satisfy the
query, or words for which it is known that no infix satisfies the query).
The implementation then checks that
the completion of the gadget satisfies the requirements outlined in
\cref{def:pregadget}. Further, it exhaustively enumerates the directed paths in
the gadget and checks which infixes of these paths are known to
satisfy the query. This can be used to construct the condensed hypergraph
of matches and check that it is indeed an odd path, to satisfy
\cref{def:gadget}. The implementation of~\cite{gadgetrepo} is also
used to prepare the gadget drawings and condensed hypergraph of
matches presented in the paper.

\subsection{Hardness for $axb|cxd$}

We illustrate the use of gadgets by 
building one for 
the language $axb|cxd$:

\begin{proposition}
  \label{prp:axbcxd}
  $\resset(axb|cxd)$ is NP-hard.
\end{proposition}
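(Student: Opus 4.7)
By \cref{prp:gadgethard}, it suffices to exhibit a hardness gadget for $L = axb \mid cxd$. Since the source vertices $s_\ii, s_\oo$ in a gadget completion are fresh and have no in-edges, each endpoint fact can only serve as the first edge of an $L$-match, which forces the pre-gadget's endpoint label to be one of the start-letters of words of $L$, namely $a$ or $c$. By the symmetry $a \leftrightarrow c$, $b \leftrightarrow d$, I take the label to be $a$, so that the boundary matches $M_\ii \ni F_\ii$ and $M_\oo \ni F_\oo$ at the ends of the intended chain are both $axb$-matches.

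The core task is to design the pre-gadget $D$ so that the matches of $D'$ condense to an odd path from $F_\ii$ to $F_\oo$. I would construct $D$ so that its matches form a linear alternating chain of $axb$- and $cxd$-walks linked by shared edges: two consecutive same-type matches share a $b$- or $d$-edge, while a transition between the two match types can only go through a shared $x$-edge, since $x$ is the unique letter common to both patterns. A natural blueprint is the chain
\[
  M_\ii\,(axb) - M_2\,(axb) - M_3\,(cxd) - M_4\,(cxd) - M_5\,(axb) - M_\oo\,(axb)
\]
of length five (and hence odd), with successive shared edges of labels $b, x, c \text{ or } d, x, b$. Concretely this uses two ``join'' vertices on the in- and out-sides carrying the shared $b$-edges of $\{M_\ii, M_2\}$ and $\{M_5, M_\oo\}$ respectively, and a bridge structure in the middle that links the two $cxd$-matches through a shared edge while routing the two middle $x$-edges to the appropriate join.

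After laying out $D'$ I would list all $axb$- and $cxd$-walks, verify that only the six intended matches arise (no $axd$- or $cxb$-walk is a match, and the vertex identifications suppress any spurious cross-combination), and apply the condensation rules of \cref{simplifications}: each boundary $x$-edge, each $d$-edge of the middle matches, and each ``leaf'' $a$- or $c$-edge appearing in a single match is node-dominated by a neighbouring spine edge and can be removed, so every size-3 hyperedge collapses to a size-2 edge and the hypergraph becomes an odd path from $F_\ii$ to $F_\oo$. \cref{prp:gadgethard} then yields NP-hardness. The main obstacle I foresee is controlling unintended matches: each $x$-edge of $D$ can in principle combine with any compatible $a$- or $c$-edge ending at its tail and any $b$- or $d$-edge starting at its head, so the vertex identifications in $D$ must be chosen tightly enough that the match hypergraph is exactly a linear chain and that the condensation actually reduces every hyperedge to size two, rather than getting stuck at a denser structure.
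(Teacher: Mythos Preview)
Your overall strategy—exhibit a hardness gadget for $L=axb\mid cxd$ and invoke \cref{prp:gadgethard}—is exactly the paper's approach, and you correctly identify that the entire difficulty lies in suppressing unintended matches. However, the specific length-five blueprint you describe does not survive that check.

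Trace through your chain with shared-edge labels $b,\,x,\,(c\text{ or }d),\,x,\,b$. Take the case where $M_3,M_4$ share their $c$-edge $c_3$ (the $d$-sharing case is worse, as it also forces $x_\ii$ and $x_\oo$ to converge). Since $M_2=\{a_2,e_2,e_1\}$ and $M_3=\{c_3,e_2,d_3\}$ share the $x$-edge $e_2$, both $a_2$ and $c_3$ terminate at the tail $u$ of $e_2$. Because $M_4=\{c_3,e_4,d_4\}$ reuses $c_3$, its $x$-edge $e_4$ has the same tail $u$; and because $M_5=\{a_5,e_4,e_5\}$ shares $e_4$, the edge $a_5$ also ends at~$u$. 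Now $u$ has two incoming $a$-edges and two outgoing $x$-edges, so you acquire the unintended $axb$-walks $a_5\,e_2\,e_1$ and $a_2\,e_4\,e_5$. These extra hyperedges are \emph{not} removed by condensation: after you dominate away the leaf facts ($x_\ii,x_\oo,d_3,d_4$), the facts $a_2$ and $a_5$ each still lie in two hyperedges with incomparable supports, and the resulting hypergraph is not a path from $F_\ii$ to~$F_\oo$.

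The paper sidesteps this by using a substantially larger explicit gadget (its completion has over a dozen internal vertices; see \cref{fig:axbcxdgadget}) and then verifying the full hypergraph of matches (\cref{fig:axbcxdallmatches}) before condensing to an odd path longer than five. A repair of your outline would need to insert additional $cxd$--$cxd$ links between the two $x$-share junctions so that no single vertex accumulates two $a$-edges feeding distinct shared $x$-edges—but carrying that out and re-auditing all walks is precisely the nontrivial verification that your proposal currently defers to ``tight vertex identifications'' without actually supplying them.
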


\begin{proof}
By \cref{prp:gadgethard}, it is enough to exhibit a gadget for
$axb|cxd$. We claim that the pre-gadget $\Gamma$ whose completion
$D'$ is depicted in \cref{fig:axbcxdgadget} is a gadget. First,
$\Gamma$ is clearly a pre-gadget, so we only need to show that it
is a gadget, i.e., that it satisfies \cref{def:gadget}. 

We show in \cref{fig:axbcxdallmatches} the hypergraph of matches
$\calH_{L,D'}$ of the gadget whose completion $D'$ is depicted in
  \cref{fig:axbcxdgadget}. We now explain why $\calH_{L,D'}$ can be
condensed to the hypergraph of \cref{fig:axbcxdmatches}, which is
an odd path.
In this case, it suffices to apply node-domination rules to
$\calH_{L,D'}$ to obtain $\calH'$ by eliminating all vertices
corresponding to facts that are not endpoint facts and occur in
only one hyperedge. For instance, we can first remove
$t_{out} \xrightarrow{x} 15$ because it is dominated by $15
\xrightarrow{b} 16$, then remove $12 \xrightarrow{x} 15 $ because
it is dominated by $13 \xrightarrow{x} 12$, and so on, until we
obtain $\calH'$. 

Thus, $\Gamma$ is indeed a gadget, which concludes the proof.
\end{proof}

\section{Hardness of Four-Legged Languages}
\label{sec:fourlegged}
In the previous section, we have presented 
hardness gadgets,
and 
shown the hardness of resilience
for the two specific languages $aa$ 
and
$axb|cxd$. 
In this section we generalize the latter result 
to a subclass of the 
infix-free non-local languages, called the \emph{four-legged languages}:
these 
languages intuitively feature a specific kind of counterexample to locality.
We show that every four-legged language admits a gadget, so that resilience for such
languages is always NP-hard by \cref{prp:gadgethard}.
We give two consequences of this result: it implies that resilience is hard for all
\emph{non-star-free} regular languages; and it also
implies a dichotomy on resilience for languages featuring a
\emph{neutral letter}.

\subsection{Four-Legged Languages}
\label{subsec:fourlegged}

Recall the notion of local languages (\cref{def:local}) from \cref{sec:epsro},
which hinges on recognizability
by certain automata
(namely, local DFAs), and which was equivalent to the notion of being
letter-Cartesian (\cref{prp:carac}).

If a language~$L$ is not letter-Cartesian, by definition there are
$x\in \Sigma$ and $\alpha,\beta,\gamma,\delta \in \Sigma^*$ such that
$\alpha x \beta \in L$ and $\gamma x \delta \in L$ but $\alpha x \delta \notin
L$.
We then define the \emph{four-legged languages} as those infix-free languages  that
feature such a violation 
in which none of the words $\alpha$, $\beta$, $\gamma$, and $\delta$ is empty. 
Formally:

\begin{definition}[Four-legged]
  A language $L$ is \emph{four-legged} if it is infix-free and
  has a letter
  $x \in \Sigma$ (the \emph{body}) and four non-empty words
  $\alpha, \beta, \gamma, \delta \in \Sigma^+$ (the \emph{legs}) with
  $\alpha x \beta \in L$ and $\gamma x \delta \in L$ but $\alpha x
  \delta \notin L$.
\end{definition}

\begin{example}
  \label{expl:four-legged}
  The languages $axb|cxd$ (from \cref{prp:axbcxd}) and $axb|cxd|cxb$ are non-local and
  four-legged. %
  The languages $aa$ (from \cref{prp:aa}) and
  $ab|bc$
  are non-local but
  not four-legged.
\end{example}

Note that four-legged languages are never local, because they are
not letter-Cartesian. The main result of this section is that $\resset(L)$ is always NP-hard when $L$ is four-legged;
formally:

\begin{theorem}[Hardness of four-legged languages]
  \label{prp:four-legged}
  If $L$ is four-legged then 
  $\resset(L)$ is NP-hard.
\end{theorem}

We prove \cref{prp:four-legged} in the remaining of \cref{subsec:fourlegged}.
To this end, we first show an intermediate lemma: for 
languages that are four-legged, we can assume without loss of
generality that the legs are \emph{stable}, i.e., no infix of $\alpha x \delta$
is in $L$:

\begin{definition}
  \label{def:stable}
  Let $L$ be a four-legged language, and let $x \in \Sigma$ be a body and let
  $\alpha, \beta, \gamma, \delta \in \Sigma^+$ be legs that witness that $L$ is four-legged, i.e., $\alpha x \beta \in L$
  and $\gamma x \delta \in L$ but  $\alpha x \delta \notin L$. We say that the
  legs $\alpha,\beta,\gamma, \delta$ are \emph{stable} if, in addition, there is
  no infix of $\alpha x \delta$ which is in~$L$.
\end{definition}

\begin{lemma}
  \label{lem:stable}
  Any infix-free language which is four-legged with body~$x \in \Sigma$ also has
  stable legs with body~$x$.
\end{lemma}
\begin{proof}
  Let $L$ be an infix-free language which is four-legged, and let $x\in \Sigma$
  be a body and $\alpha',\beta',\gamma',\delta' \in \Sigma^+$ be legs that
  witness it, i.e., $\alpha' x \beta' \in L$ and $\gamma' x \delta' \in L$ but
  $\eta' \coloneq  \alpha' x \delta' \notin L$.

  If $\alpha',\beta',\gamma',\delta'$ are already stable (i.e., no
  infix of $\eta'$ is in~$L$) then we are done.
  Otherwise, as $\eta' \notin L$, it means that there is a strict
  infix $\eta$ of $\eta'$ that is in $L$. We know that $\eta$ must have a non-trivial
  intersection with $\alpha'$, as otherwise $\eta \in L$ would be a strict infix of
  $\gamma' x \delta' \in L$, contradicting the fact that $L$ is infix-free.
  Similarly, $\eta$ has a non-trivial intersection with~$\delta'$, as otherwise
  $\eta \in L$ would be a strict infix of $\alpha' x \beta' \in L$.
  So let us write $\alpha' = \alpha_2
  \alpha_1$ and $\delta' = \delta_1 \delta_2$, with $\alpha_1$ and
  $\delta_1$ being non-empty, so that $\eta = \alpha_1 x \delta_1$ is a strict infix of
  $\alpha' x \delta'$ that is in $L$. Note that $\alpha_2$ and $\delta_2$ cannot
  both be empty, as otherwise we have $\eta = \eta'$ but we assumed that $\eta$ is a
  strict infix of~$\eta'$.

  We distinguish two cases:

  \begin{itemize}
    \item We have $\delta_2 \neq \epsilon$. Then we claim that
      $\alpha \coloneq  \gamma'$, $\beta \coloneq  \delta'$, $\gamma \coloneq  \alpha_1$
      and $\delta \coloneq \delta_1$ are stable legs. Indeed, observe
      first that none of $\alpha,\beta,\gamma,\delta$ is empty.
      Moreover we have $\alpha x \beta = \gamma' x \delta' \in L$
      and $\gamma x \delta = \alpha_1 x \delta_1 \in L$. Last, no
      infix of $\alpha x \delta = \gamma' x \delta_1$ can be in $L$
      because $\gamma' x \delta_1 \delta_2 = \gamma' x \delta'$ is
      in $L$ and $\delta_2 \neq \epsilon$ so any infix of $\alpha x
      \delta$ is a strict infix of a word of~$L$; but $L$ is
      infix-free so no strict infix of a word of~$L$ is a word of~$L$,
      hence no infix of $\alpha x \delta$ is a word of~$L$.
    \item We have $\alpha_2 \neq \epsilon$. Then we do the
      symmetric choice of taking $\alpha \coloneq  \alpha_1, \beta \coloneq 
      \delta_1$, $\gamma \coloneq  \alpha'$, and $\delta \coloneq  \beta'$ and
      claim that they are stable legs. Instead, none of $\alpha,
      \beta,\gamma, \delta$ is empty. Further, $\alpha x \beta =
      \alpha_1 x \delta_1$ and $\gamma x \delta = \alpha' x \beta'$
      are in~$L$. Moreover we have $\alpha x \delta = \alpha_1 x
      \beta'$ and all its infixes are infixes of $\alpha' x \beta'$
      which are strict because $\alpha_2 \neq \epsilon$; as
      $\alpha' x \beta' \in L$ and $L$ is infix-free, no infix of
      $\alpha x \delta$ is a word of~$L$.
  \end{itemize}

  Hence, in either case, we can define stable legs for~$L$ with body~$x$, which
  concludes the proof.
\end{proof}

\begin{figure}
  \begin{subfigure}[b]{0.45\linewidth}
    \centering
    \includegraphics[scale=0.4]{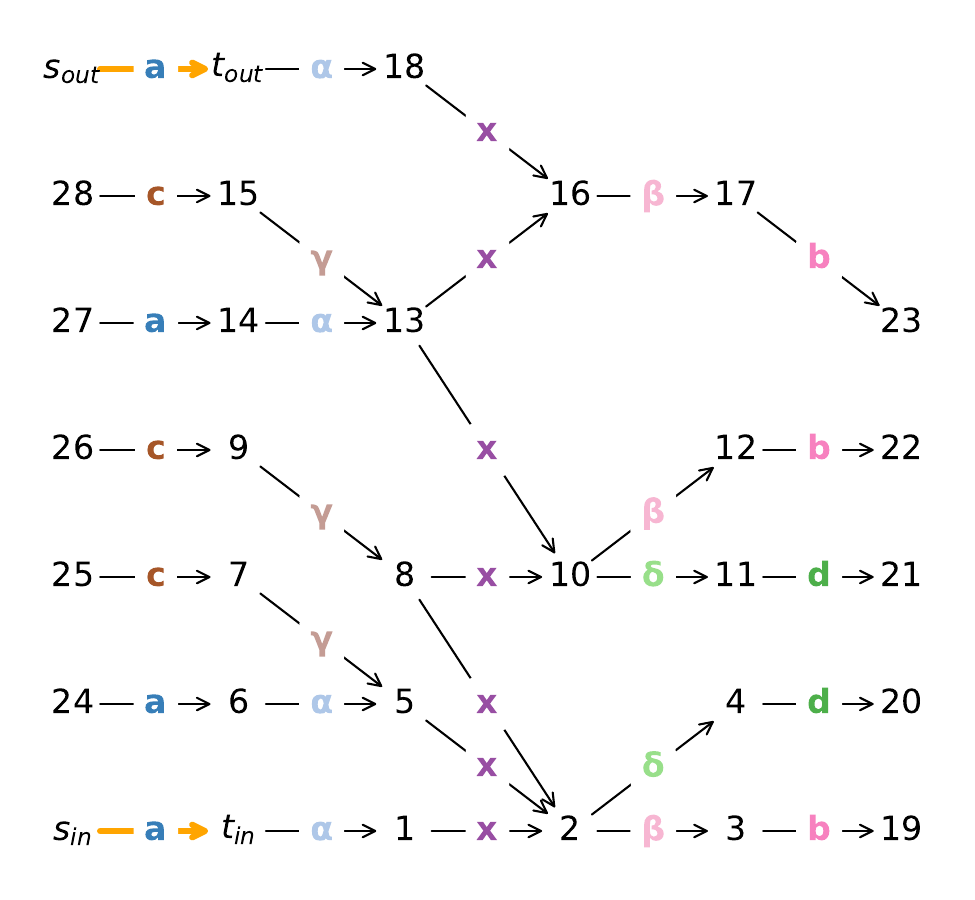}
  \end{subfigure}
  \begin{subfigure}[b]{0.4\linewidth}
    \centering
    \includegraphics[scale=0.4]{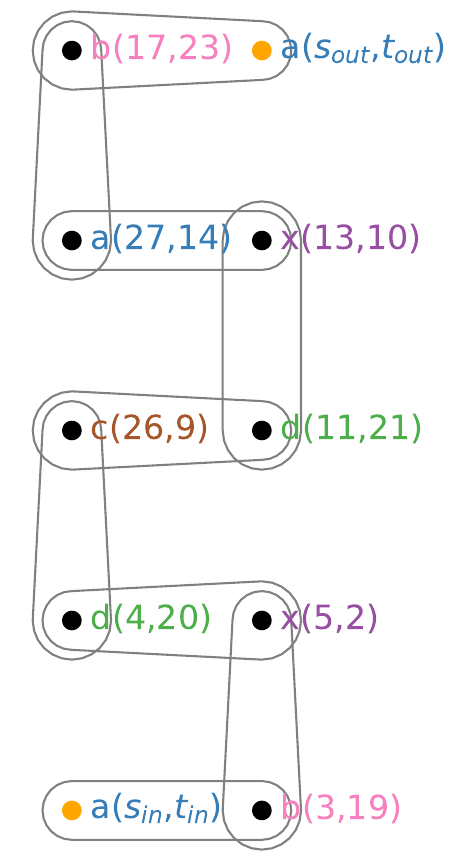}
  \end{subfigure}
  \caption{Completed gadget for Case 1 of \cref{prp:four-legged}.}
  \label{fig:case1}
\end{figure}

\begin{figure}
  \begin{subfigure}[b]{0.55\linewidth}
    \centering
    \includegraphics[scale=0.4]{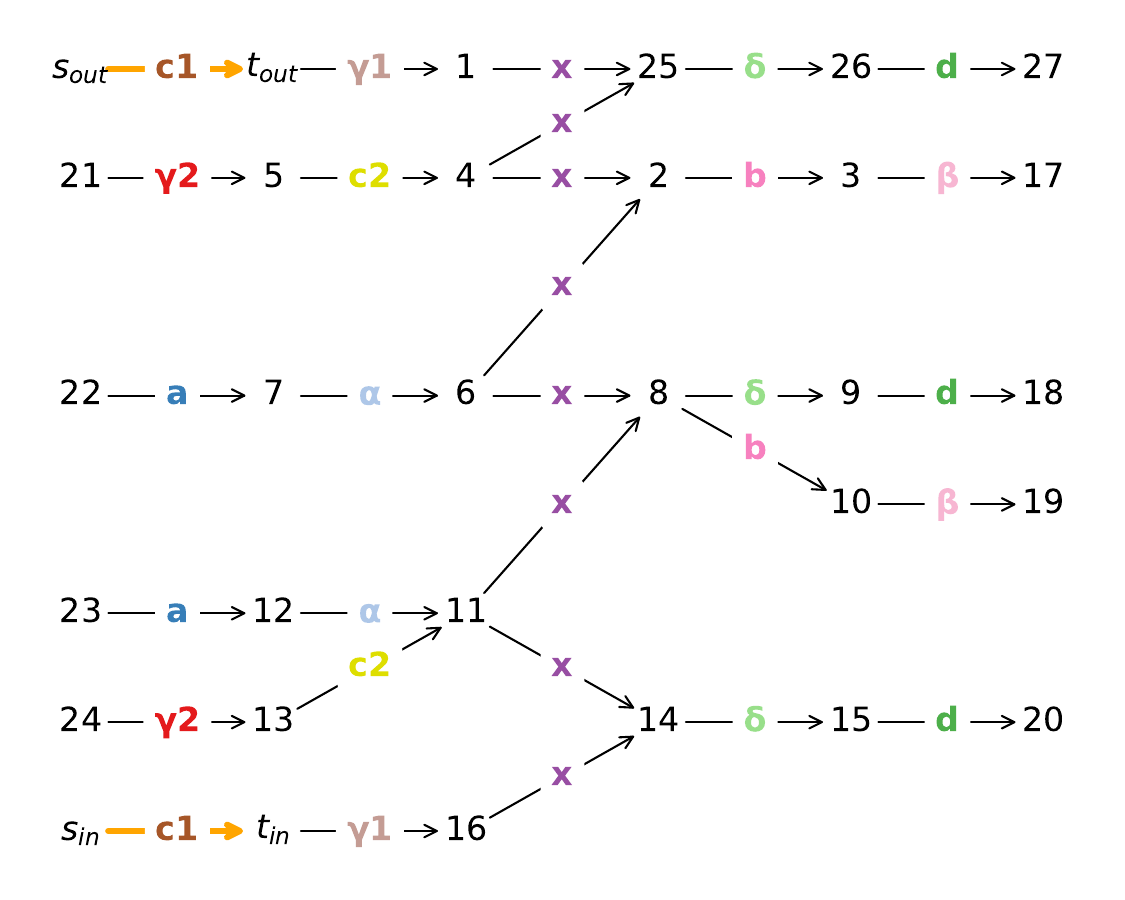}
  \end{subfigure}
  \begin{subfigure}[b]{0.25\linewidth}
    \centering
    \includegraphics[scale=0.4]{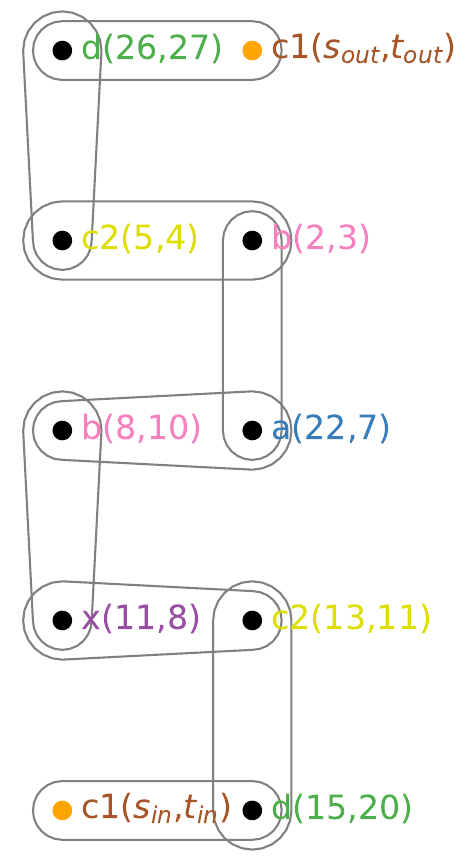}
  \end{subfigure}
  \caption{         Completed gadget for Case 2 of \cref{prp:four-legged}.}
  \label{fig:case2}
\end{figure}

We can now prove that resilience is hard for four-legged languages
(\cref{prp:four-legged}):

\begin{proof}[Proof of \cref{prp:four-legged}]
  Let $L$ be an infix-free and four-legged language. Using
  \cref{lem:stable}, let $x \in \Sigma$ be a body and
  $\alpha',\beta',\gamma',\delta' \in \Sigma^+$ be four stable
  legs, i.e., we have that $\alpha' x \beta' \in L$ and $\gamma' x
  \delta' \in L$ but no infix of $\alpha' x \delta'$ is in~$L$.
  We then distinguish two cases:

  \begin{description}
    \item[Case 1.] The first case is when, furthermore, no infix
      of $\gamma' x \beta'$ is in $L$. Then in this case, let us
      write $\alpha' = a \alpha$ (i.e., $a$ is the first letter of
      $\alpha'$, which is well defined since $\alpha' \neq
      \epsilon$) and likewise write $\beta' = \beta b$ and $\gamma'
      = c \gamma$ and $\delta' = \delta d$ (note that some of the
      letters $a$, $b$, $c$, and $d$ may be equal). In this case we
      claim that the (completed) gadget depicted in \cref{fig:case1} is a valid
      gadget.\footnote{We point out that this gadget is essentially the generalization of the gadget in \cref{fig:axbcxdgadget} for $L=axb|cxd$.} 
      Edges in the gadgets labeled with words $\alpha, \beta, \gamma, \delta \in \Sigma^*$ represent paths labeled with the letters in the words.
      If any of the words is empty, then the head node of the edge is merged with the tail node.
      Such a drawing allows us to show hardness of any language that can be obtained by replacing the $\alpha, \beta, \gamma, \delta$ with any words in $\Sigma^*$.
      This is possible since, for any language that meets the conditions of 
      case~1, a condensed hypergraph of matches can be obtained that does not
      contain 
      any vertices corresponding to a word (i.e., a Greek symbol). 
      We can then check exhaustively that all directed walks in the completed gadget are either accounted for by the condensed hypergraph of matches or are not in the language.
      Thus, we know that for any such language, a condensed hypergraph of matches will be an odd path as shown in \cref{fig:case1}, and hence the gadget is a valid gadget. 

\item[Case 2.] The second case is when there are some infixes of
  $\gamma' x \beta'$ that are in $L$. Observe that such infixes
  must intersect non-trivially both $\gamma'$ and $\beta'$. Indeed,
  infixes of~$\gamma' x \beta'$ which do not intersect $\gamma'$
  are strict infixes of $\alpha' x \beta' \in L$ so they are not
  in~$L$ because $L$ is infix-free, and likewise infixes of $\gamma' x
  \beta'$ which do not intersect $\beta'$ are strict infixes of
  $\gamma' x \delta' \in L$ so again they are not in~$L$. Let us
  write $\alpha' = a \alpha$ (i.e., $a$ is the first letter of
  $\alpha'$) and $\beta' = b \beta$ ($b$ is the first letter of
  $\beta'$) and $\gamma' = c_1 \gamma_1$ ($c_1$ is the first letter
  of $\gamma'$) and $\gamma' = \gamma_2 c_2$ ($c_2$ is the last
  letter of $\gamma'$) and $\delta' = \delta d$ ($d$ is the last
  letter of $\delta'$). Then we claim that the (completed) gadget depicted in
  \cref{fig:case2} is a valid gadget. 
  Indeed, we can again check exhaustively all directed walks in the gadget and
      observe that the following holds for all of them: they
  (1) either contain or are subsumed by $\alpha' x \beta'$ or $\gamma' x
      \delta'$ matches; (2) are an infix of $\alpha' x \delta'$ (and hence can
      never be a match); or (3) contain $c_2 x b$ as an infix. 
  We notice that any infix of $\gamma' x \beta'$ that is a match must contain $c_2 x b$ as an infix, and we know that such a match exists. 
  For any such match, the other letters of the match can be eliminated in the condensed
      hypergraph of matches, as they are subsumed (via the edge domination rule) by the $c_2 x b$ - thus the
      node domination rule 
      can remove the other letters of the $\gamma' x \beta'$-infix match.
  Thus, it suffices to consider $\alpha' x \beta'$, $\gamma' x \delta'$, and $c_2 x b$ as the only matches in the condensed hypergraph of matches.
  When we do so, we can see that there exists a condensed hypergraph of matches that is an odd path as shown in \cref{fig:case2}, and hence the gadget is valid.
\end{description}
This concludes the proof of \cref{prp:four-legged}.
\end{proof}

Note that \cref{prp:four-legged} is not restricted to regular languages and 
applies to arbitrary languages (even when resilience is not in NP).

\subsection{Implications of Hardness for Four-Legged Languages}
\label{subsec:implications}

Next, we spell out two consequences of \cref{prp:four-legged}. The first one
concerns \emph{non-star-free languages}, the second concerns languages with a
\emph{neutral letter}.

\begin{toappendix}
  \label{apx:starfreered}
  We formally show the following result stated in the main text:

\begin{claim}
  \label{clm:starfreered}
  Let $L$ be a star-free regular language. Then $\red(L)$ is star-free.
\end{claim}

Of course, the converse statement is not true, e.g., $(aa)^*$ is
not star-free but its infix-free language $\epsilon$ is star-free.
Let us prove the result:

\begin{proof}[Proof of \cref{clm:starfreered}]
  We use an alternative characterization of star-free languages~\cite{mpri}:
  they are those 
  languages that can be constructed inductively by applying the operations of
  concatenation and the Boolean operations (including complementation) on base
  languages consisting of the empty language, the singleton language containing
  only the empty word, and the singleton languages formed of
  single-letter words for each letter of the alphabet~$\Sigma$.

  We then use the fact that, for any language $L$, we can express $\red(L)$ as
  follows:
  \[
    \red(L) = L \setminus (\Sigma^+ L \Sigma^* \cup \Sigma^* L \Sigma^+)
  \]
  Note that $\Sigma^*$ is the complement of the empty set, and $\Sigma^+$ is the
  concatenation of $\Sigma^*$ with the union of all letters of~$\Sigma$, so both
  are star-free languages. Hence, if $L$ is a star-free language, then thanks
  to the fact that star-free languages are closed under concatenation and
  Boolean operations, so is $\red(L)$, which concludes the proof.
\end{proof}

\end{toappendix}

\myparagraph{Hardness for non-star-free languages.}
We say that a regular language $L$ is
\emph{star-free} if there exists a $k>0$ such that, for every $\rho,
\sigma, \tau \in \Sigma^*$, for every $m \geq k$, we have $\rho \sigma^k \tau \in L$
iff $\rho \sigma^m \tau \in L$. Star-free languages 
have many equivalent characterizations (see, e.g., \cite{mpri}), including being
\emph{counter-free}~\cite{mcnaughton1971counter}.
Further, if a
language~$L$ is star-free then so is $\red(L)$ (see \cref{apx:starfreered}). We can show
that being star-free is a necessary condition 
to enjoy tractable resilience, because non-star-free infix-free regular languages are four-legged and
hence resilience for them is NP-hard by \cref{prp:four-legged}:

\begin{lemma}[Hardness of non-star-free languages]
  \label{lemma:not aperiodic are hard}
  Let $L$ be a regular language which is infix-free and non-star-free.
  Then $L$ is four-legged.
\end{lemma}
\begin{proof}
  Let $A$ be a DFA which recognizes $L$, and 
  let $k> 0$ be greater than the number of states of~$A$.
  From the fact that $L$ is not star-free, applying the definition to this
  choice of~$k$, we obtain
  $\rho, \sigma, \tau \in \Sigma^*$ and $m \geq k$ such that exactly one of $\rho \sigma^k \tau$
  and $\rho \sigma^m \tau$
  is in the language. 
  In particular this implies that $\sigma \neq \epsilon$; let us
  write $\sigma = x \sigma'$.

  As $k$ is greater than the number of states
  of~$A$, by the pigeonhole principle, there must be $0 \leq i <
  j \leq k$ such that the partial runs of $\rho \sigma^i$ and $\rho \sigma^j$ finish at
  the same state when read by~$A$. Let 
  $\omega \coloneq  j-i$. We know that we have $\rho
  \sigma^k \tau \in L$ iff $\rho \sigma^{k + \omega} \tau \in L$, because
  exchanging the prefixes $\rho \sigma^i$ and $\rho \sigma^j$ does not change the state that we reach.
  Iterating the argument, we get the following fact, dubbed (*): for any $p
  \geq 0$ we have $\rho \sigma^k \tau \in L$ iff $\rho \sigma^{k+p \omega} \tau \in L$. For similar
  reasons, we get the fact dubbed (**): for any $p \geq 0$ we have
  $\rho \sigma^m \tau \in L$ iff $\rho \sigma^{m+p \omega} \tau \in L$.

  Let $(\phi,\psi) = (k,m)$ if $\rho \sigma^k \tau \in L$ and $\rho \sigma^m
  \tau \notin L$, and
  $(\phi,\psi) = (m,k)$ if it is the opposite.
  Recall that $m, k > 0$ so 
  $\phi> 0$ and $\psi> 0$.
  Further, by (*) and (**) we know the
  following fact, dubbed~($\dagger$): we have $\rho
  \sigma^{\phi + p\omega} \tau \in L$ for all $p \geq 0$, and $\rho \sigma^{\psi +
  p\omega} \tau
  \notin L$ for all $p \geq 0$. 
  Let $p$ be large
  enough so that $\phi + p\omega - 1 > \psi$. Now let us take:
  \begin{itemize}
    \item $\alpha = \rho \sigma^{2\omega-1}$,
    \item $\beta = \sigma' \sigma^\phi \tau$,
    \item $\gamma = \rho \sigma^{\phi + p\omega - 1 - \psi}$ and
    \item $\delta = \sigma' \sigma^\psi \tau$.
  \end{itemize}
  Note that all of $\alpha, \beta, \gamma, \delta$ are
  non-empty because they all contain $v \neq \epsilon$ at a power which is greater than
  0. We claim that they witness that $L$ is four-legged with body~$x$. Indeed:
  \begin{itemize}
    \item $\alpha x \beta = \rho \sigma^{2\omega-1} x \sigma' \sigma^\phi \tau =
      \rho \sigma^{2\omega+\phi} \tau$
    \item $\gamma x \delta = \rho \sigma^{\phi + p\omega - 1 - \psi} x \sigma'
      \sigma^\psi \tau =
      \rho v^{\phi +
      p\omega} \tau$
    \item $\alpha x \delta = \rho \sigma^{2\omega-1} x \sigma'  \sigma^\psi
      \tau = \rho \sigma^{2\omega+\psi} \tau$.
  \end{itemize}
  So by ($\dagger$) we have that:
  \begin{itemize}
    \item $\alpha x \beta \in L$ because $\rho \sigma^\phi \tau \in L$
    \item $\gamma x \delta \in L$ because $\rho \sigma^\phi \tau \in L$
    \item $\alpha x \delta \notin L$ because $\rho \sigma^\psi \tau \notin L$
  \end{itemize}
  This establishes that $L$ is four-legged. As it is infix-free, we can conclude by
  Proposition~\ref{prp:four-legged}.
\end{proof}

\myparagraph{Dichotomy for languages with a neutral letter.} 
A second consequence of \cref{prp:four-legged} concerns languages $L$ with a 
\emph{neutral} letter, i.e.,
a letter $e\in \Sigma$ such that inserting or deleting~$e$ anywhere 
does not change the membership of words 
to~$L$.
Formally, $e$ is neutral for~$L$ if for every $\alpha,\beta\in \Sigma^*$ we have
$\alpha\beta\in L$ iff $\alpha e \beta\in L$. In formal language theory, one  often
assumes the existence of a neutral letter as a technical assumption on
languages~\cite{koucky2005bounded,barrington2005first}.
Under this assumption, we can show:

\begin{proposition}[Dichotomy for languages with neutral letters]
  \label{prp:dicho}
  Let $L$ be a language with a neutral letter. 
    If $\red(L)$ is local then $\resbag(L)$ is PTIME,
    otherwise $\resset(L)$
  is NP-hard.
\end{proposition}

Note that we can read this result as implying that, in the presence of a neutral letter, there
are not many tractable cases left. Indeed, intuitively, when considering a language
$L$ with neutral letter whose infix-free sublanguage is local, then membership
to~$L$ must be guided only by the first and last non-neutral letters and the set
of non-neutral letters that occur in the word. 

We prove~\cref{prp:dicho} in the remaining of \cref{subsec:implications}. 
  To prove this result, we first show that, for languages featuring a neutral letter, whenever the
infix-free language is non-local then it must either be four-legged or contain a
word with a repeated letter, specifically a word of the form $aa$ for $a\in \Sigma$:

\begin{lemma}
\label{lem:neutral}
Let $L$ be a language with a neutral letter such that $\red(L)$ is not
local. Then one of the following two cases holds (possibly both):
\begin{itemize} 
  \item $\red(L)$ is four-legged; 
  \item there is $x\in \Sigma$ such that $xx \in \red(L)$.
\end{itemize}
\end{lemma}

To prove this lemma, let us prove the following intermediate result:

  \begin{claim}
    \label{clm:neutral}
    Let $L$ be a language with a neutral letter $e$ and let
    $L'=\red(L)$. Then for any non-empty words $\theta, \eta \in
    \Sigma^+$, if $\theta \eta \in L'$ then $\theta e \eta \in L'$.
  \end{claim}

  \begin{proof}
    We have $\theta \eta \in L'$ so $\theta \eta \in L$, and $e$ is
    neutral for~$L$ so we have $\theta e \eta \in L$. Let us
    proceed by contradiction and assume that $\theta e \eta \notin
    L'$. This means that there is a strict infix $\tau$ of $\theta
    e \eta$ which is in~$L$. 
    We first claim that $\tau$ must include the~$e$. Indeed, if it
    does not, then there are two cases:
        \begin{itemize}
        \item $\tau$ is an infix of $\theta$. Then it is a strict
          infix of $\theta \eta \in L'$, which contradicts the fact
          that $L'$ is infix-free, so this case is impossible.
          \item $\tau$ is an infix of $\eta$. This is impossible
            for the same reason as the previous case.
        \end{itemize}
    Now, we know that $\tau$ is a strict infix of $\theta e \eta$
    which includes the~$e$. Let $\tau'$ be obtained from~$\tau$ by
    removing the~$e$. We have $\tau' \in L$ because $\tau \in L$
    and $e$ is neutral for~$L$. What is more, it is easily seen
    that $\tau'$ is an infix of $\theta \eta$, and it is a strict
    infix because $\tau$ includes the~$e$ so from $|\tau| < |\theta
    e \eta|$ (because $\tau$ is a strict infix) we deduce $|\tau'|
    = |\tau| - 1 < |\theta e \eta| - 1 = |\theta \eta|$. So $\tau'
    \in L$ is a strict infix of $\theta \eta \in L'$, which
    contradicts the assumption that~$L'$ is infix-free.
  \end{proof}

  We can now prove \cref{lem:neutral}:
  \begin{proof}[Proof of \cref{lem:neutral}]
  Let $L' = \red(L)$ and $e\in \Sigma$ a neutral letter of $L$. By
\cref{prp:carac},~$L'$ is not letter-Cartesian, so there is $x\in \Sigma$ and
$\alpha',\beta',\gamma',\delta'\in \Sigma^*$ such that $\alpha' x \beta' \in L'$ and $\gamma' x \delta' \in L'$ and $\alpha' x \delta' \notin L'$.

  Now, because $L'$ is infix-free, it is easy to check that we must
  have the following claim, dubbed ($\star$): we have $\alpha'
  \beta' \neq \epsilon$ and $\gamma' \delta' \neq \epsilon$ and
  $\alpha' \gamma' \neq \epsilon$ and $\beta' \delta' \neq
  \epsilon$. Indeed:

\begin{itemize}
  \item If $\alpha' \beta' = \epsilon$ then $\alpha' x \beta' \in L'$
    is equal to~$x$ which is an infix of $\gamma' x \delta' \in
    L'$, so by infix-freeness it must be the case that $\gamma'
    \delta' = \epsilon$, but now $x = \alpha' x \delta' = x$ and we
    assumed $ x= \alpha' x \delta' \notin L'$, contradiction.
  \item If $\gamma' \delta' = \epsilon$, the reasoning is similar:
    in this case $\gamma' x \delta'$ is equal to  $x$, which is an
    infix of $\alpha' x \beta'$, so by infix-freeness we must have
    $\alpha' \beta' = \epsilon$, but then $x = \alpha' x \beta' \in
    L'$ and $x = \alpha' x \delta' \notin L'$ is a contradiction.
\end{itemize}
  Thus, we have established ($\star$).

We now start proving the statement. Let us first show 
  that (call this $\dagger$) if both $\alpha'$ and $\delta'$ are empty then the
result holds. Indeed, in that case we have $x \beta' \in L'$ and $\gamma' x \in
  L'$. By ($\star$) (first two points) we have that $\beta'$ and $\gamma'$ are
    both non-empty. 
  By \cref{clm:neutral} we deduce from $x \beta' \in L'$ that $x e  \beta' \in
  L'$, 
  and
  we deduce from $\gamma' x \in L'$ that $\gamma' e x \in L'$.
  Now, take $e$ as a candidate choice of body, and $\alpha = x$, $\beta = \beta'$, $\gamma = \gamma'$ and $\delta =
x$, as a candidate choice of legs: all of these are non-empty.
  We have $\alpha e \beta \in L'$ and $\gamma e \delta \in L'$. So,
  considering the word $\alpha e \delta = xex$, there are two
  possibilities:
  either $xex \notin L'$ or $xex \in L'$. In the first case, 
  we conclude with our choice of body and legs that $L'$ is four-legged, which
  allows us to conclude.
  In the second case,
  we have $xx \in L$ because $e$ is neutral for~$L$. Further, we have $xx \in L'$
  because the strict infixes of~$xx$ are the empty word and $x$ which are both
  strict infixes of the word $x \beta' \in L'$: we conclude because $L'$ is
  infix-free. The fact that $xx \in L'$ allows us to conclude: indeed, we can then do the exact same reduction from vertex cover as in \cref{prp:aa} (noticing that we do not use other letters than '$x$').

The same reasoning shows that (call this $\dagger'$) if both $\beta'$ and $\gamma'$ are empty then
the result holds. 
Indeed, in this case we have $\alpha' x \in L'$ and $x \delta' \in L'$, and we
have $\alpha'$ and $\delta'$ both non-empty by ($\star$) (first two points). Applying
\cref{clm:neutral}  to each gives us that $\alpha' e x \in L'$ and $x e \delta'
\in L'$, and again we conclude that $L'$ is four-legged or that $xex \in L'$
from which we show $xx \in L'$ in the same way.

We have ruled out the case of $\alpha'$ and $\delta'$ being both empty, and of
$\beta'$ and $\gamma'$ being both empty.
Combining with $(\star)$, one gets that either $\alpha'$ and
$\gamma'$ are both non-empty, or $\beta'$ and $\delta'$ are both non-empty.
Indeed, assume that the first term in the disjunction is false, i.e., that it
is not the case that $\alpha'$ and $\gamma'$ are both non-empty.
If we have $\alpha' = \epsilon$, then $\delta'$ is non-empty from ($\dagger$), and $\beta'$ is non-empty by $(\star)$ (first point). 
If we have $\gamma' = \epsilon$, then $\beta'$ is non-empty by ($\dagger'$), and $\delta'$ is non-empty by $(\star)$ (second
point).

We finish the proof by distinguishing these two cases:
\begin{itemize}
  \item Assume that $\alpha' \neq \epsilon$ and $\gamma' \neq \epsilon$.
    We have $\alpha' x \beta' \in L'$ so by \cref{clm:neutral} we have
    $\alpha' e x \beta' \in L'$, and we have
    $\gamma' x \delta' \in L'$ so by \cref{clm:neutral} we have
    $\gamma' e x \delta' \in L'$.
    We then claim that $\alpha' e x \delta' \notin L'$, implying that $L'$
is four-legged (taking $e$ as the body, $\alpha= \alpha'$, $\beta = x \beta'$, $\gamma =
\gamma'$ and $\delta = x\delta'$, which are all not empty). Indeed, assume by
    way of contradiction that $\alpha' e x \delta' \in L'$. We would then have
$\alpha' e x \delta' \in L$, so $\alpha' x \delta' \in L$ (as $e$ is neutral).
    As $\alpha' x \delta' \notin L'$ by assumption, there
exists a strict infix $\tau$ of $\alpha' x \delta'$ that is in $L'$. Now,
    adding $e$ to $\tau$ would give a strict infix $\tau'$ of $\alpha' e x
    \delta'$, with $\tau' \in L$ because $\tau \in L$ and $e$ is neutral
    for~$L$, 
a contradiction because $L'$ is infix-free.
\item Assume that $\beta' \neq \epsilon$ and $\delta' \neq \epsilon$. The
  reasoning is similar: using \cref{clm:neutral} we have $\alpha' x e \beta' \in
    L'$ and $\gamma' x e \delta' \in L'$
  and we obtain that $L'$ is four-legged by taking $e$ as the body, $\alpha =
\alpha' x$, $\beta = \beta'$, $\gamma = \gamma' x$ and $\delta = \delta'$.
\end{itemize}
This concludes the proof.
\end{proof}

For completeness we give here examples of languages for which only one of
  the two cases in the statement of \cref{lem:neutral} holds. Let $L_1 = e^*be^*ce^* |
  e^*de^*fe^*$ (for which $e$ is neutral), with $\red(L_1) = be^*c | de^*f$.
  Then $\red(L_1)$ is four-legged
  (taking $x=e$, $\alpha = b$, $\beta = c$,
  $\gamma = d$, $\delta = f$), hence $\red(L_1)$ is not local, but it does not have a word of the
  form $xx$ for $x\in \Sigma$.  Now take 
  $L_2 = e^* (a|c) e^* (a|d) e^*$, 
  with $\red(L_2)= (a|c) e^* (a|d)$.
  Then $\red(L_2)$ is not local (taking $x=a$, $\alpha = c$, $\beta =
  \epsilon$, $\gamma = \epsilon$, $\delta = d$), it contains the word $aa$, but
  one can check that it is not four-legged.\\

We can then use \Cref{lem:neutral} to prove our complexity characterization
of resilience for languages featuring a neutral letter (\cref{prp:dicho}):

\begin{proof}[Proof of \cref{prp:dicho}]
If $\red(L)$ is local then $\resbag(L)$ is PTIME by \cref{prp:ro-ptime}.
Otherwise, by \cref{lem:neutral} either $\red(L)$ is four-legged, in which case
$\resset(L)$ is NP-hard by \cref{prp:four-legged},
  or $aa\in
\red(L)$ for $a\in \Sigma$.
  In this case one can easily adapt the proof of
  \cref{prp:aa}: one can check that the reduction only builds instances using
  the letter $a$, and the restriction of $\red(L)$ to words using only the
  letter~$a$ must be exactly $aa$ given that $\red(L)$ is infix-free. This
  concludes the proof.
\end{proof}

The dichotomy above relies on the presence of a neutral letter, so it does
not apply to all languages. In particular, it does not apply at all to finite
languages. This motivates further study of finite languages, and in fact all the
languages studied in the remainder of the paper will be finite languages.

\section{Hardness of Finite Languages Having Repeated Letters}
\label{sec:repeated}
We now conclude the presentation of our main hardness results by showing that
resilience is hard for a class of languages that generalizes $aa$
(\cref{prp:aa}). We say that a
word $\alpha$ has a \emph{repeated letter} if we have $\alpha = \beta a \gamma a
\delta$ for some $a \in
\Sigma$ and $\beta, \gamma, \delta \in \Sigma^*$.
We show that, for infix-free finite languages,
resilience is NP-hard whenever the language contains a word with a repeated letter:

\begin{theorem}[Hardness with repeated letters]
  \label{thm:repeated-letter}
  Let $L$ be an infix-free finite language that contains a word with a repeated
  letter. Then $\resset(L)$ is NP-complete.
\end{theorem}

Before proving~\cref{thm:repeated-letter}, we discuss its relation
with existing work and with the other results we have obtained so far.

\myparagraph{Connections to UCQs and self-joins.}
Observe that
finite languages $L$ correspond to the case of RPQs that are in fact unions of
conjunctive queries (UCQs). The UCQs obtained from RPQs have a specific form:
they are posed on an arity-two signature corresponding to the
alphabet~$\Sigma$, and their constituent conjunctive queries (CQs) are directed
paths corresponding to words of the language~$L$. In this sense, repeated
letters in words correspond to self-joins in CQs. Thus,
\cref{thm:repeated-letter} implies that, for those UCQs that correspond to finite RPQs,
the resilience problem is NP-hard unless every CQ is self-join-free. This
contrasts with the setting of general UCQs, where resilience is sometimes
tractable even in the presence of
self-joins~\cite{DBLP:conf/pods/FreireGIM20}.
Specifically, in that setting,
the following Boolean CQ on an arity-two signature is known to be in PTIME~\cite{DBLP:conf/pods/FreireGIM20} despite having two self-joins: 
$Q \datarule A(x), B(x), R(x,y), R(z,y), A(z), C(z)$.

\myparagraph{Connections to other language classes.}
We now explain how 
\cref{thm:repeated-letter}
relates to the languages discussed so far. We first perform 
a sanity check: such languages are never local.

\begin{lemma}\label{lem:nonlocal}
  Let $L$ be a 
  finite language containing a word with a
  repeated letter. Then $L$ is not local.
\end{lemma}
\begin{proof}
  As $L$ is finite and contains a word with a repeated letter, we can take a word with a repeated letter whose length is
  maximum among words of $L$ with a repeated letter. 
  Let $\alpha = \beta a \gamma a \delta \in L$ be such a word.
  Assume by
  contradiction that $L$ is local.
  Then, by \cref{prp:carac}, $L$ is letter-Cartesian.
  Applying the definition to~$\alpha$ and~$\alpha$, we must have that the
  following word is also in~$L$:
  \[\beta a \gamma a \gamma a \delta \in L.\]
  This word is a word of~$L$ which contains a repeated letter and is strictly
  longer than~$\alpha$. This contradicts the maximality of~$\alpha$, and
  concludes the proof.
\end{proof}

Note that the finiteness hypothesis is essential for \cref{lem:nonlocal}, e.g., 
$L = ax^*b$
contains words with repeated letters but it is local (see
\cref{expl:locallang}). 
Further notice that there is no converse to \cref{lem:nonlocal}: 
there are finite infix-free non-local
languages that feature no words with repeated letters and for which resilience
is hard. This is for instance the
case of the four-legged language $axb|cxd$  of \cref{prp:axbcxd}.
Thus,
the languages covered by
\cref{thm:repeated-letter} and
 \cref{prp:four-legged}
are incomparable: they each generalize \cref{prp:aa} and
\cref{prp:axbcxd}, respectively, without implying the other result. (However, the proof of 
\cref{thm:repeated-letter} relies on \cref{prp:four-legged}.)
These two results also do not cover all intractable cases, as we will
see in the next sections.\\

We prove \cref{thm:repeated-letter} in the remaining of this section.
Because the proof is technical, we sketch it here
before giving the details.
We first make a preliminary observation: the tractability of resilience is
preserved under the \emph{mirror operation} on regular languages. We then
start from a word featuring a repeated letter, and pick it to optimize certain
criteria (maximizing the gap between the repeated letters, and then maximizing
the total length of the word): we call this a \emph{maximal-gap} word, and the
existence of such words is
the only place where we use the fact that $L$ is finite.

We then consider such a maximal-gap word $\beta a \gamma a \delta$, and consider
various cases. We first exclude the case where $\beta$ and $\delta$ are both
non-empty, by showing that $L$ is then four-legged. Then, thanks to the mirror operation, we assume without loss of
generality that $\delta$ is empty. We then consider the word $\gamma a \gamma$
and study whether it has infixes in~$L$. If it does not, we easily build a
gadget. If it does, we can now show that $\beta$ must be empty or otherwise the
language is four-legged. We then study the choice of infix $\alpha'$ of $\gamma a \gamma$
and distinguish two cases by considering the suffix of the left $\gamma$ and the
prefix of the right $\gamma$: either they overlap or they don't. We build
suitable gadgets for each case.

\myparagraph{The mirror operation.}
We first make a straightforward observation on the \emph{mirror} of
languages. The \emph{mirror} of a word $\alpha = a_1 \cdots a_n \in \Sigma^*$ is
$\mirror{\alpha} = a_n
\cdots a_1 \in \Sigma^*$ (in particular
$\mirror{\epsilon} = \epsilon$).
The \emph{mirror} of a language $L \subseteq \Sigma^*$ is $\mirror{L} \coloneq 
\{\mirror{\alpha} \mid \alpha \in L\}$.
Observe that $\mirror{(\mirror{L})} = L$.
It is immediate by symmetry that a language $L$ is infix-free iff $\mirror{L}$ is
infix-free. Further, it is also immediate that a language and its mirror have the same
complexity for resilience:

\begin{proposition}
  \label{prp:mirror}
  Let $L$ be any language. There are PTIME reductions between
  $\resset(\mirror{L})$ and $\resset(L)$, and between
  $\resbag(\mirror{L})$ and $\resbag(L)$.
\end{proposition}

The point of this claim will be to eliminate symmetric cases, up to replacing the
language by its mirror.
We only use the claim about set semantics in the sequel (given that we will show
hardness of resilience in set semantics, from which hardness for bag
semantics follows): the claim for bag semantics in \cref{prp:mirror} is given just for
completeness.

\begin{proof}[Proof of \cref{prp:mirror}]
  We give the proof for set semantics, the proof for bag semantics is the
  same mutatis mutandis.
  Given a database instance $D$ for $\resset(\mirror{L})$,
  simply create a database instance $\mirror{D}$ by reversing the direction of
  all arrows of~$D$. It is then obvious that a subset $D' \subseteq D$
  satisfies $\query_{\mirror{L}}$ iff the corresponding subset $\mirror{(D')}$
  of~$\mirror{D}$, which has the same cardinality, satisfies $\query_{L}$.
  Hence, the answer to $\resset(\mirror{L})$ on~$D$ and to $\resset(L)$ on
  $\mirror{D}$ are identical, giving the forward PTIME reduction. The
  converse reduction is obtained by applying the previous argument to
  reduce
  $\mirror{(\mirror{L})} = L$ to $\mirror{L}$.
\end{proof}

\myparagraph{Maximal-gap words}
Let $L$ be an infix-free finite language that contains a word with a repeated
letter, and let us show that $\resset(L)$ is NP-hard, to establish
\cref{thm:repeated-letter}.
We will pick a word with a repeated letter for the rest of the proof, but we
will choose such a word carefully. Here is
the precise notion we will need:

\begin{definition}
  \label{def:maximalgap}
  For $L$ a 
  finite language, we say that a word $\alpha = \beta a \gamma
  a \delta$ that features a repeated letter is a \emph{maximal-gap} word if:
  \begin{itemize}
    \item The gap between the repeated letters is maximal, i.e., there is no word $\alpha' = \beta' a' \gamma' a' \delta'$
in~$L$ such that $|\gamma'| > |\gamma|$
    \item Among such words, the length of~$\alpha$ is maximal, i.e., there is no word $\alpha' = \beta' a' \gamma' a' \delta'$
in~$L$ such that $|\gamma'| = |\gamma|$ and $|\alpha'| > |\alpha|$.
  \end{itemize}
\end{definition}

It is clear that any finite language that contains a word with a repeated
letter must have maximal-gap words. Hence, in the rest of the proof, we fix
$\alpha = \beta a \gamma a \delta$ to be an arbitrary such word. We will do case
disjunctions on this word in the rest of the proof.

\myparagraph{Reducing to the case $\beta = \epsilon$.}
Let us first assume that $\beta$ and $\delta$ are both non-empty. In this
case we conclude because $L$ is four-legged. Indeed:

\begin{claim}
  \label{clm:betadelta}
  Let $L'$ be an infix-free language containing a maximal-gap word
  $\beta a \gamma a \delta$ such that $\beta \neq \epsilon$ and $\delta \neq
  \epsilon$. Then $L'$ is four-legged.
\end{claim}

\begin{proof}
  Consider the two decompositions of the word $\alpha$ as $\alpha = (\beta a \gamma)
  a \delta$ and $\alpha = \beta a (\gamma a \delta)$. Note that $\beta a
  \gamma$, $\delta$, $\beta$, and $\gamma a \delta$ are all non-empty, so they
  can be used as legs.
  So consider the word $\alpha' = (\beta a \gamma) a (\gamma a \delta)$. The
  word~$\alpha'$
  contains two $a$'s separated by $\gamma a \gamma$, which is longer
  than~$\gamma$. Hence, $\alpha'$ is not 
  in~$L'$, as it would contradict the fact that $\alpha$ is maximal-gap (the
  first criterion of the definition of maximal-gap suffices).
  So we have a witness of the fact that $L'$ is four-legged.
\end{proof}

If $\beta$ and $\delta$ are both non-empty, then applying \cref{clm:betadelta} and \cref{prp:four-legged}
with $L' \coloneq  L$ concludes. 
Hence, in the rest of the argument, we assume that one of $\beta$ and
$\delta$ is empty. Further, without loss of generality, up to replacing $L$ by
$\mirror{L}$ using \cref{prp:mirror}, we assume that $\beta = \epsilon$.
Hence, in the rest of the proof, we
can assume that $\beta = \epsilon$: 
we have a maximal-gap word $\alpha = a \gamma a \delta \in L$ and this word is maximal-gap.

\myparagraph{Reducing to the case where some infix of $\gamma a \gamma$
is in~$L$.}
Continuing the proof, we will now consider the word $\gamma a \gamma$ and study
whether some infixes of this word are in~$L$. (Intuitively, when designing a
gadget for~$L$, such paths will appear, and so the construction of the gadget
will be different depending on whether these paths constitute query matches or
not.) Let us first assume that no infix of $\gamma a \gamma$ is in~$L$: this
case can easily be ruled out by designing a gadget for~$L$ based on the word
$\alpha = a \gamma a \delta$:

\begin{lemma}
  \label{lem:aatype}
  Let $L'$ be an infix-free language containing a word $a \gamma a \delta$ where
  no infix of~$\gamma a \gamma$ is in~$L'$. Then $\resset(L')$ is NP-hard.
\end{lemma}

\begin{figure}
  \null\hfill
  \begin{subfigure}[b]{0.4\linewidth}
    \centering
    \includegraphics[scale=0.4]{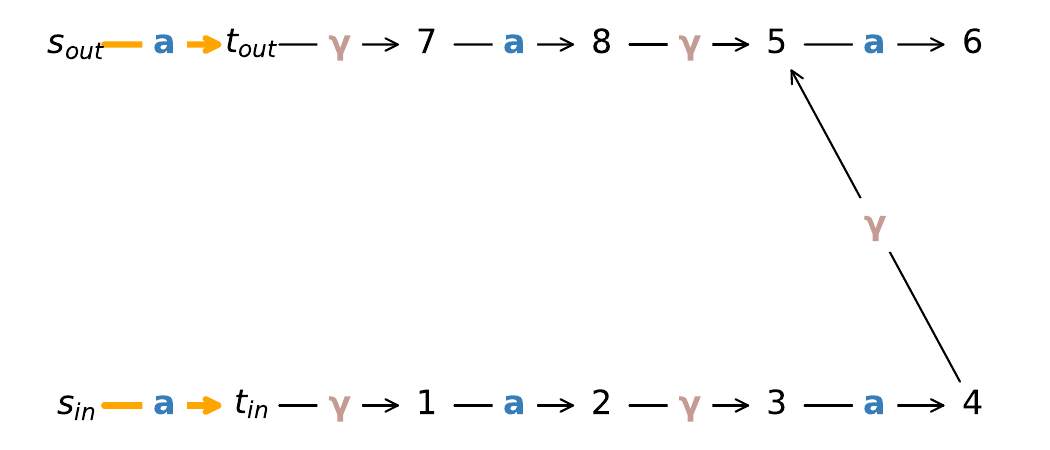}
  \end{subfigure}
  \hfill
  \begin{subfigure}[b]{0.4\linewidth}
    \centering
    \includegraphics[scale=0.4]{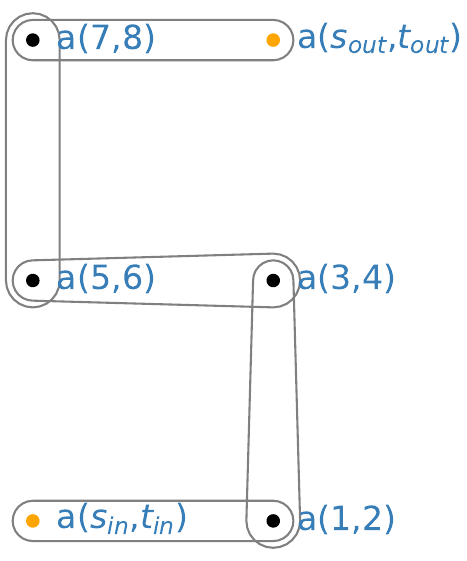}
  \end{subfigure}\hfill\null
  \caption{Completed gadget for the proof of \cref{lem:aatype} in the case
  when $\delta=\epsilon$ (left), and condensed hypergraph of matches (right).}
  \label{fig:aatype-nobeta}
\end{figure}
\begin{figure}
  \null\hfill
  \begin{subfigure}[b]{0.4\linewidth}
    \centering
    \includegraphics[scale=0.4]{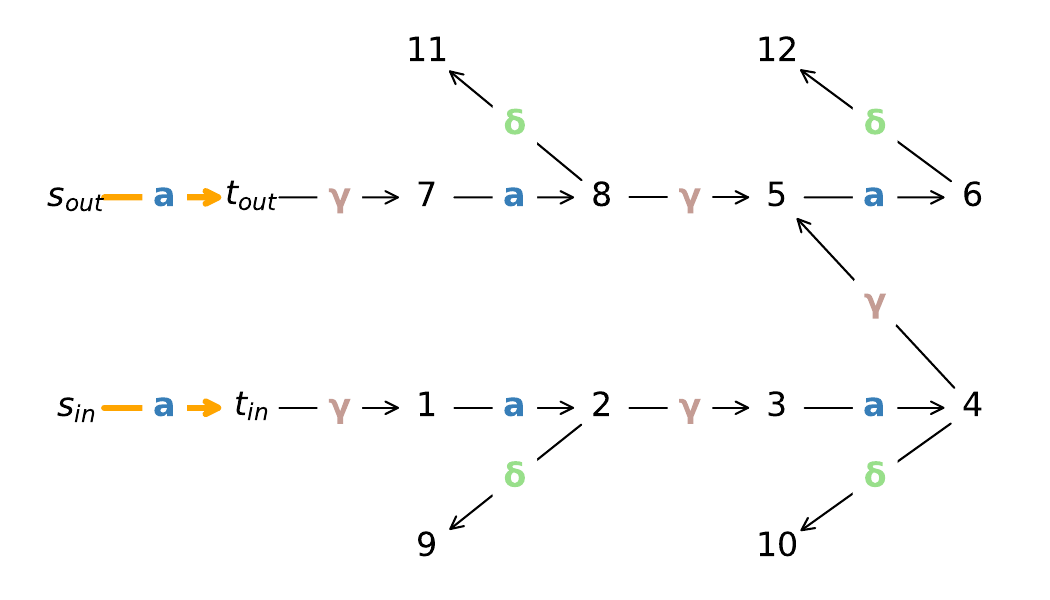}
  \end{subfigure}
  \hfill
  \begin{subfigure}[b]{0.4\linewidth}
    \centering
    \includegraphics[scale=0.4]{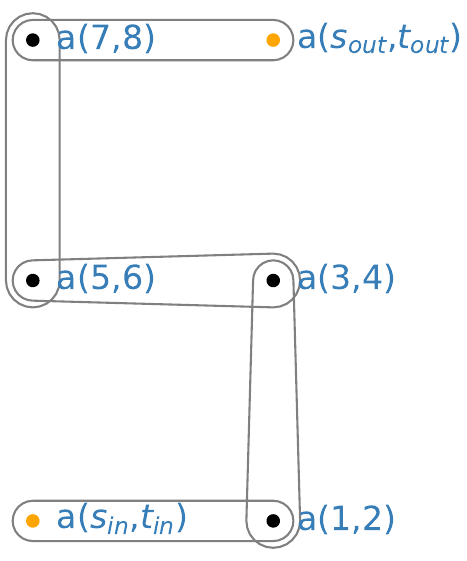}
  \end{subfigure}\hfill\null
  \vspace{-2mm}
  \caption{Completed gadget for \cref{lem:aatype} in the case where $\delta 
  \neq \epsilon$ (left), and condensed hypergraph of matches (right).
  }
  \label{fig:aatype-beta}
\end{figure}

\begin{proof}
  We first present the case where $\delta=\epsilon$, and then present the case
  where $\delta \neq
  \epsilon$.

  In the case where $\delta=\epsilon$, we use the completed gadget pictured on
  \cref{fig:aatype-nobeta} (left), where the edge labeled $\gamma$ stands for
  a (possibly empty) path.
  Note that, no matter what $\gamma$ stands for, there will be no facts having
  $t_u$ and $t_v$ as their head (except the two facts of the gadget
  completion, pictured in orange), so indeed it is a pre-gadget.
  The condensed hypergraph of matches is shown
  on \cref{fig:aatype-nobeta} (right), and it is an odd path, so indeed the
  instance pictured on \cref{fig:aatype-nobeta} (left) is (the completion of) a
  gadget and we can conclude by \cref{prp:gadgethard}.
  Note how we use
  the assumption that the matches on the completed gadget are the ones
  indicated: directed walks including two $a$'s must include $a\gamma a$ as an
  infix and so they are dominated, and directed walks including only one
  $a$ cannot be a match of the query. 

  In the case where $\delta\neq\epsilon$, we use the
  completed gadget pictured on \cref{fig:aatype-beta} (left), again with the
  edge labeled $\gamma$ standing for a possibly empty path. Again, no matter
  what $\gamma$ stands for, there will be no facts having $t_u$ and $t_v$ as
  head (except for the facts added in the completion), so again we have a
  pre-gadget. 
  The condensed
  hypergraph of matches is shown on \cref{fig:aatype-beta} (right). Now, to
  show that it is actually like this, first observe that the facts of the
  $\delta$-walks are
  always dominated by the preceding $a$: there is no match of the query within
  $\delta$ (as it is a strict infix of $a \gamma a \delta$), so all matches
  using $\delta$ must use the preceding $a$, and we can remove the facts of
  $\delta$-walks in the condensation via the node domination rule. We are then
  essentially back to the preceding case. More precisely, again, 
  no directed walk containing
  only one copy of the $a$-edges of the gadget can be a word of the language,
  because these paths are either infixes of $\gamma a \delta$ (hence strict
  infixes of $a \gamma a \delta \in L$) or they are infixes of $\gamma a \gamma$
  and so they are not in the language by assumption. Further, for any two
  consecutive $a$'s in the gadget, there is a $a \gamma a \delta$ match
  containing them.
  Further, after observing that $\delta$ facts are eliminated by node
  domination, we note that all query matches that are infixes of a word of the
  form $a (\gamma a)^*$ must contain two consecutive $a$'s, so they are subsumed
  by the matches corresponding to $a \gamma a \delta$, which after node
  domination contain just two consecutive $a$'s and the $\gamma$ between them.
  (The facts corresponding to $\gamma$ are then eliminated by node domination.)

  Hence, we indeed have a condensed hypergraph of matches corresponding to
  \cref{fig:aatype-beta} (right), 
  and we can conclude by \cref{prp:gadgethard}.
\end{proof}

Applying this lemma with $L' \coloneq  L$ concludes.
So in the rest of the proof we can assume that there is an infix of $\gamma a
\gamma$ which is in~$L$.

\myparagraph{Reducing to the case $\delta = \epsilon$.}
The maximal-gap word $\alpha = a \gamma a \gamma \delta$, together with the infix
of $\gamma a \gamma$ which is in~$L$, now allows us to show that $L$ is
four-legged unless $\delta = \epsilon$. To show this, we need to discuss the
structure of the infix of~$\gamma a \gamma$:

\begin{claim}
  \label{clm:cacinfix}
  Let $L'$ be an infix-free language containing the word $\alpha = a \gamma a
  \delta$
  and containing an infix of $\gamma a \gamma$. Then $L'$ contains a word of the
  form $\alpha' = \gamma_1 a \gamma_2$ where $\gamma_1$ is a non-empty suffix of~$\gamma$
  and $\gamma_2$ is a non-empty prefix of~$\gamma$.
\end{claim}

\begin{proof}
  Consider the infix of $\gamma a \gamma$ which exists by hypothesis.
This infix must include the middle $a$, as otherwise
it is an infix of~$\gamma$, which is a strict infix of~$\alpha \in L'$, and 
this contradicts the assumption that $L'$ is infix-free. 

So write the infix as
$\alpha' = \gamma_1 a \gamma_2$ with $\gamma_1$ being a (not necessarily strict) suffix
of~$\gamma$ and $\gamma_2$ being a (not necessarily strict) prefix of~$\gamma$.
All that remains is to show that $\gamma_1$ and $\gamma_2$ are non-empty.

  First, we must have $\gamma_1 \neq \epsilon$,
  otherwise $\alpha'$ is an infix of $a
\gamma$ which is a strict infix of $a \gamma a \delta \in L'$, and this contradicts
the assumption that $L'$ is infix-free.
  
  Second, we must have $\gamma_2 \neq \epsilon$
otherwise $\alpha'$ is an infix of $\gamma a$ which is a strict infix of
$a \gamma a \delta$ and we conclude in the same way.
\end{proof}

Hence, applying \cref{clm:cacinfix} to $L' \coloneq  L$, we know that $L$ contains a
maximal-gap word $\alpha = a \gamma a \delta$ and a word $\alpha' = \gamma_1 a
\gamma_2$ with $\gamma_1$ and $\gamma_2$ being respectively a non-empty suffix
of~$\gamma$ and a non-empty prefix of~$\gamma$.
We now claim that, if $\delta \neq \epsilon$, then $L$ is four-legged, allowing
us to conclude:

\begin{claim}
  \label{clm:beta}
  Let $L'$ be an infix-free language containing a word
  $a \gamma a \delta$ such that $\delta \neq \epsilon$ and a word $\gamma_1 a
  \gamma_2$ such that $\gamma_1 \neq \epsilon$, $\gamma_2 \neq \epsilon$,
  and $\gamma_1$ is a suffix of~$\gamma$.
  Then $L'$ is four-legged.
\end{claim}

\begin{proof}
  Indeed, consider
  $\alpha' = \gamma_1 a \gamma_2$ and
  $\alpha = (a \gamma) a \delta$. The words $\delta$, $a \gamma$, $\gamma_1$, and $\gamma_2$
are all non-empty, so they can be used as legs.
  So consider the word $\alpha'' = \gamma_1 a \delta$
  and show that $\alpha'' \notin L'$, so that $L'$ is four-legged.

  Indeed, the word $\alpha''$ is an infix of $\gamma a \delta$ (because
  $\gamma_1$ is a suffix of~$\gamma$), which is a strict infix
  of $a \gamma a \delta$ which is in~$L'$. Thus, as $L'$ is infix-free, $\alpha'' \notin
  L'$, so we have a witness of the fact that $L'$ is four-legged.
\end{proof}

Applying \cref{clm:beta} with $L' \coloneq  L$, we can conclude unless $\delta =
\epsilon$, which we assume in the rest of the proof.

\myparagraph{Distinguishing two cases depending on whether $\gamma_1$ and
$\gamma_2$ overlap.}
To summarize,
we now know that $L$ contains a maximal-gap word $\alpha = a \gamma a
\in L$ and a word $\alpha' = \gamma_1 a \gamma_2 \in L$ for $\gamma_1$ a non-empty suffix of
$\gamma$ and $\gamma_2$ a non-empty prefix of~$\gamma$. We distinguish two
cases, intuitively based on whether $\gamma_1$ and $\gamma_2$ cover $\gamma$ in
a way that overlaps or not:

\begin{description}
  \item[Non-overlapping case.]
    The non-overlapping case is when $|\gamma_1| + |\gamma_2| \leq |\gamma|$,
    so that we can write $\gamma
    = \gamma_2 \eta \gamma_1$ for some $\eta \in \Sigma^*$.
  \item[Overlapping case.]
    The overlapping case is when we have $|\gamma_1| +
    |\gamma_2| > |\gamma|$, so that we can write $\gamma = \eta'' \eta \eta'$
    such
    that $\gamma_1 = \eta \eta'$ and $\gamma_2 = \eta'' \eta$, with 
    the ``overlap'' $\eta$ being non-empty.
\end{description}

\myparagraph{Showing hardness in the overlapping case.}
Let us first deal with the overlapping case. We
have $\alpha' = \gamma_1 a \gamma_2 \in L$, meaning $\alpha' = \eta \eta' a \eta'' \eta
\in L$, and we know that the ``overlap'' $\eta$ is non-empty. We then show:

\begin{claim}
  \label{clm:etaeta}
  If the words $\eta'$ and $\eta''$ are not both empty, then $L$ is four-legged.
\end{claim}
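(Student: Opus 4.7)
The plan is to exhibit a witness of four-leggedness. The setting is symmetric in $\eta'$ and $\eta''$: replacing $L$ by $\mirror{L}$ preserves reducedness and four-leggedness (four-leggedness is a mirror-invariant property), and the mirror of $\alpha = a \eta''\eta\eta' a$ and $\alpha' = \eta\eta' a \eta''\eta$ swaps the roles of $\eta'$ and $\eta''$ (while keeping $\eta$ nonempty). So I may assume without loss of generality that $\eta' \neq \epsilon$ and it suffices to exhibit witnesses in that case.

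Write $\eta' = \eta'_l y$ where $y$ is the last letter of $\eta'$. I propose taking $y$ as the body and the four legs to be
\begin{equation*}
\hat{\alpha} = \eta \eta'_l, \quad \hat{\beta} = a \eta'' \eta, \quad \hat{\gamma} = a \eta'' \eta \eta'_l, \quad \hat{\delta} = a.
\end{equation*}
All four are non-empty: $\hat{\alpha}$ contains the (nonempty) factor $\eta$, the words $\hat{\beta}$ and $\hat{\gamma}$ each contain the letter $a$, and $\hat{\delta} = a$.

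Next I would verify the three membership conditions required by the definition of four-legged: $\hat{\alpha}\, y\, \hat{\beta} = \eta \eta' a \eta'' \eta = \alpha' \in L$; $\hat{\gamma}\, y\, \hat{\delta} = a \eta'' \eta \eta' a = \alpha \in L$; and finally the cross word $\hat{\alpha}\, y\, \hat{\delta} = \eta \eta' a$ is a \emph{strict} infix of $\alpha' = \eta \eta' a \eta'' \eta$ (strict because $\eta'' \eta \neq \epsilon$, which follows from $\eta \neq \epsilon$), so $\hat{\alpha}\, y\, \hat{\delta} \notin L$ by reducedness of $L$. This exhibits that $L$ is four-legged with body $y$.

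The main obstacle is identifying the body. The obvious candidate $x = a$ fails: the only $a$'s in $\alpha = a\eta''\eta\eta' a$ lie at its two endpoints, so every split of $\alpha$ at an $a$ yields an empty leg. The key idea is to exploit the overlap structure by picking $y$ inside $\eta'$ (or symmetrically inside $\eta''$): this letter appears both in $\alpha$ (inside $\gamma$, away from the endpoints) and in $\alpha'$ (inside the trailing $\eta''\eta$-block or the leading $\eta\eta'$-block), and one can then arrange the splits so that all four legs are nonempty while the cross word $\hat{\alpha}\, y\, \hat{\delta}$ falls strictly inside $\alpha'$, which rules it out of $L$ by reducedness.
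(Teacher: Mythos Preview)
Your proof is correct and follows essentially the same approach as the paper: both split $\alpha$ and $\alpha'$ at a letter inside $\eta'$ so that the cross word is $\eta\eta' a$, which is then ruled out of~$L$ as a strict infix of a word of~$L$. The only cosmetic differences are that the paper takes the \emph{first} letter of $\eta'$ as the body (you take the last), shows the cross word is a strict infix of $\alpha$ (you use $\alpha'$; either works since $\eta\eta' a$ is both a suffix of $\alpha$ and a prefix of $\alpha'$), and handles the case $\eta''\neq\epsilon$ by a separate explicit argument rather than by your mirror-symmetry reduction.
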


\begin{proof}
  Let us first assume that $\eta'$ is non-empty and show that $L$ is
  four-legged in this case. Let us write $\eta' = x \sigma$.
  Substituting this into $\alpha'$ and $\alpha$,
  we have $\alpha' = \eta x (\sigma a \eta'' \eta) \in L$ and
  $\alpha = (a \eta'' \eta) x (\sigma a) \in L$. Now, taking $x$ as body, let
  us consider the cross-product word $\alpha'' = \eta x \sigma a = \eta \eta'
  a$. The
  word $\alpha''$ is a  strict infix of $\alpha = a \eta'' \eta \eta' a
  \in L$. Hence, as $L$ is infix-free, $\alpha'' \notin L$, so that we can conclude
  that $L$ is four-legged unless $\eta'$ is empty.

  Let us substitute $\eta' = \epsilon$ in $\alpha$ and $\alpha'$.
  We know that $\alpha' = \eta a \eta'' \eta \in L$ and $\alpha = a \eta'' \eta a \in
  L$. Let us now assume that $\eta''$ is non-empty and show that $L$ is
  four-legged in that case as well. Let us write $\eta'' = x \sigma$. We have 
  $\alpha = a x (\sigma \eta a) \in L$ and $\alpha' = (\eta a) x (\sigma \eta)
  \in L$. Now, taking $x$ as body again, let us consider the cross-product word
  $\alpha'' = a x \sigma \eta = a \eta'' \eta$. The word $\alpha''$ is a strict
  infix of $\alpha = a \eta'' \eta a
  \in L$. Hence, as $L$ is infix-free, $\alpha'' \notin L$, so we can conclude
  again that $L$ is four-legged unless $\eta''$ is empty.

  We conclude the claim: $L$ is four-legged unless $\eta' = \eta'' = \epsilon$.
\end{proof}

From \cref{clm:etaeta}, we can conclude unless $\eta' = \eta'' = \epsilon$, so
we assume this in the rest of the proof of hardness for the overlapping case. 
We now know that $\alpha' = \eta a \eta \in L$ and $\alpha = a \eta a \in
  L$.

  Let us now re-use the fact that $\alpha$ was chosen to be a maximal-gap word
  of~$L$. Now, as $\eta$ is non-empty, observe that $\alpha' = \eta a \eta$
  contains repeated letters across the two copies of~$\eta$. These letters are
  spaced at least by the length of~$\eta$ (and possibly more, if $\eta$ contains
  repeated letters -- even though this cannot happen by the maximal-gap
  condition).
  In any case, $\alpha'$ achieves a gap between two repeated letters which is at
  least that of~$\alpha$: so $\alpha$ and $\alpha'$ are tied for the first
  criterion in the definition of maximal-gap. Hence, by the second criterion,
  since $\alpha$ is maximal-gap, we know that $\alpha'$ cannot be strictly
  longer than~$\alpha$. So from $|\alpha'| \leq |\alpha|$, as $|\alpha'| =
  2|\eta|+1$ and $|\alpha| = |\eta|+2$, we conclude $|\eta| \leq 1$. As $\eta$
  is non-empty, we have $|\eta| = 1$. This letter may be different from~$a$ or
  not. So we can conclude the overlapping case by showing two hardness results:

  \begin{claim}
    \label{clm:ababab}
    Let $L'$ be any infix-free language containing $aba$ and $bab$. Then
    $\resset(L')$ is hard.
  \end{claim}

  \begin{figure}
    \null\hfill
    \begin{subfigure}[b]{0.4\linewidth}
      \centering
      \includegraphics[scale=0.4]{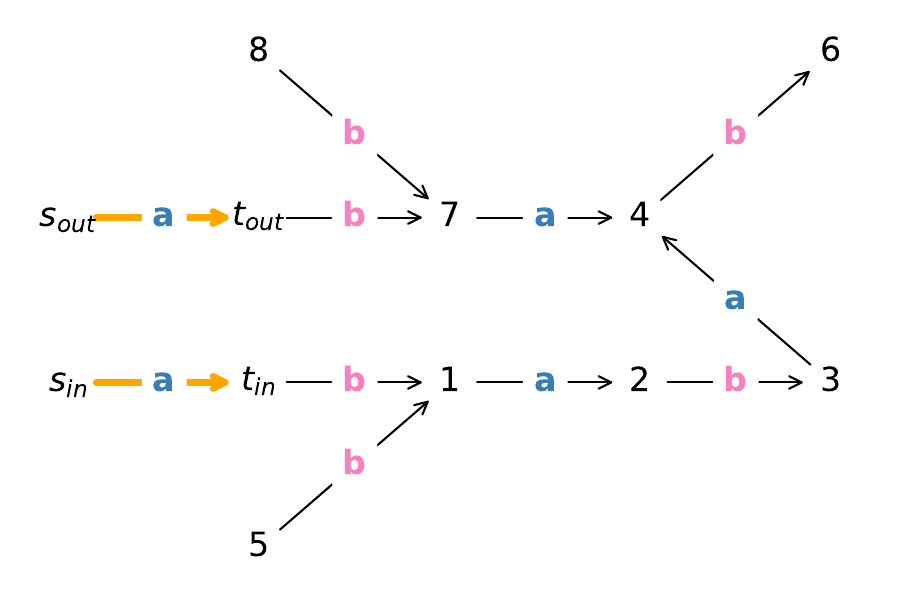}
      \vspace{-3mm}
    \end{subfigure}\hfill
    \begin{subfigure}[b]{0.4\linewidth}
      \centering
      \includegraphics[scale=0.4]{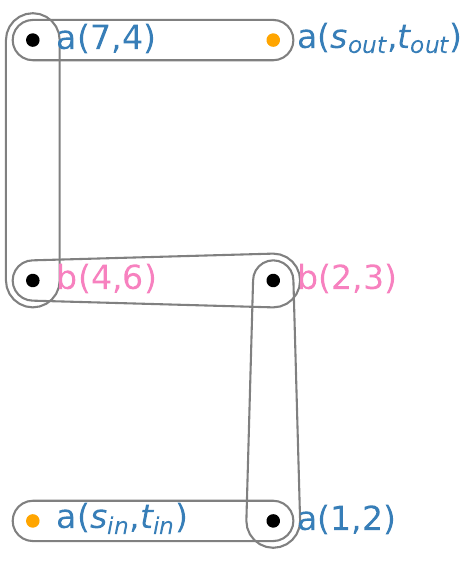}
    \end{subfigure}    \hfill\null
    \vspace{-3mm}
    \caption{Completed gadget used in the proof of \cref{clm:ababab}}
    \label{fig:ababab}
  \end{figure}

  \begin{proof}
    We use the gadget whose completion is pictured on \cref{fig:ababab} (left).
    It is obviously a pre-gadget, and one can check that the hypergraph of
    matches can be reduced to the odd path of \cref{fig:ababab} (right), so
    that it is indeed a gadget and we can conclude by \cref{prp:gadgethard}.
    More precisely, observe that there are three kinds of directed walks in the
    gadget:
    \begin{itemize}
      \item Paths of length 2 or less: all of these form a word which is a
        strict infix of $aba$ or $bab$, so they are not in the language by
        infix-freeness.
      \item Paths of length 3: all of these form the word $aba$ or $bab$, so
        they are in the language.
      \item Paths of length 4 or more: all of these have a path of length 3 as
        infix, and these always form words of the language by the previous
        point, so they are all subsumed by edge domination.
    \end{itemize}
    Hence, the hypergraph of matches of the language $L'$ on the completion of
    the gadget corresponds precisely to the hyperedges given by the paths $aba$
    and $bab$. One can then obtain the condensation of \cref{fig:ababab} (right)
    by observing that $5 \xrightarrow{b} 1$ is subsumed by the next $a$-fact ($1
    \xrightarrow{a} 2$) and that $8 \xrightarrow{b} 7$ is subsumed by the next $a$-fact ($7
    \xrightarrow{a} 4$). The matches $\{1 \xrightarrow{a} 2, 2 
    \xrightarrow{b} 3\}$ and $\{7 \xrightarrow{a} 4, 4 
    \xrightarrow{b} 6\}$ then eliminate by edge domination the matches $\{t_u \xrightarrow{b} 1, 1
    \xrightarrow{a} 2, 2 \xrightarrow{b} 3\}$ and $\{1 \xrightarrow{a} 2, 2 
    \xrightarrow{b} 3, 3 \xrightarrow{a} 4\}$ and $\{t_v \xrightarrow{b} 7, 7 \xrightarrow{a} 4, 4 
    \xrightarrow{b} 6\}$, and we can then see that $3 \xrightarrow{a} 4$ is
    eliminated by node domination (e.g., by $2 \xrightarrow{b} 3$) and that $t_u
    \xrightarrow{b} 1$ is eliminated by node domination with  $1 \xrightarrow{a}
    2$ and $t_v \xrightarrow{b} 7$ is eliminated by node domination with $7
    \xrightarrow{a} 4$, leading to the pictured hypergraph. This allows us to
    conclude by \cref{prp:gadgethard}.
  \end{proof}
  
  \begin{claim}
    \label{clm:aaa}
    Let $L$ be any infix-free language containing the word $aaa$. Then
    $\resset(L)$ is NP-hard.
  \end{claim}

  \begin{figure}
    \null\hfill
    \begin{subfigure}[b]{0.4\linewidth}
      \centering
      \includegraphics[scale=0.4]{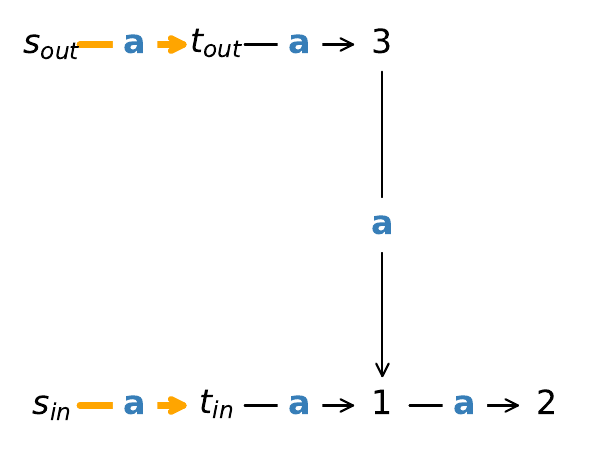}
    \end{subfigure}
    \hfill
    \begin{subfigure}[b]{0.4\linewidth}
      \centering
      \includegraphics[scale=0.4]{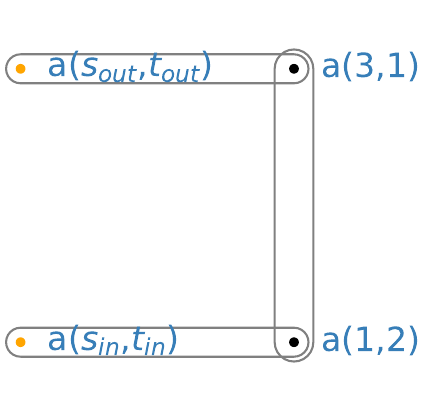}
    \end{subfigure}    
    \hfill\null
    \vspace{-3mm}
    \caption{Completed gadget used in the proof of \cref{clm:aaa}}
    \label{fig:aaa}
  \end{figure}

  \begin{proof}
    We use the gadget whose completion is pictured on \cref{fig:aaa} (left)
    (which incidentally turns out to be identical to that of \cref{fig:aagadget} in
    \cref{prp:aa}). This is indeed a pre-gadget, and the hypergraph of matches
    can be condensed to \cref{fig:aaa} (right), so it is indeed a gadget. Note
    that we use the fact that all directed walks in the gadget form words of the
    form $a^*$: by infix-freeness of the language, the matches precisely correspond
    to the words of length~3.
  \end{proof}

\myparagraph{Showing hardness in the non-overlapping case.}
We conclude the proof of \cref{thm:repeated-letter} by covering the
non-overlapping case. Recall that, in this case, we know that the language $L$
contains a maximal-gap word 
$\alpha = a \gamma_2 \eta \gamma_1 a \in L$ and a word $\alpha' = \gamma_1 a
\gamma_2 \in L$, and recall that we already know that $\gamma_1$ and $\gamma_2$
are both non-empty.

We show:

\begin{claim}
  \label{clm:gammagamma}
  The language $L$ is four-legged unless $\gamma_1$ and $\gamma_2$ both have
  length~1.
\end{claim}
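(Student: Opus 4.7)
The plan is to prove the contrapositive: if $|\gamma_1| \geq 2$ or $|\gamma_2| \geq 2$, then $L$ is four-legged. The two subcases are symmetric in that they both work by peeling off a letter from the outside of $\gamma_2$ (resp. $\gamma_1$), so it suffices to describe one in detail.

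First I would treat the case $|\gamma_2| \geq 2$: write $\gamma_2 = x \sigma$ with $\sigma \neq \epsilon$, and re-parse the two known words of $L$ around this letter $x$. The word $\alpha$ becomes $a \cdot x \cdot (\sigma \eta \gamma_1 a)$ and $\alpha'$ becomes $(\gamma_1 a) \cdot x \cdot \sigma$. I then take $x$ as the candidate body and the legs $(a,\, \sigma \eta \gamma_1 a,\, \gamma_1 a,\, \sigma)$. All four are nonempty: $a$ and $\sigma$ are trivially nonempty, while the other two contain the letter $a$. The relevant cross-product word is $a \cdot x \cdot \sigma = a \gamma_2$, which is a prefix of $\alpha$ and is strict because the remaining part $\eta \gamma_1 a$ is nonempty (since $\gamma_1 \neq \epsilon$). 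Reducedness of $L$ then gives $a \gamma_2 \notin L$, witnessing that $L$ is four-legged.

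The symmetric case $|\gamma_1| \geq 2$ I would do by writing $\gamma_1 = \sigma' x'$ with $\sigma' \neq \epsilon$, re-parsing $\alpha$ as $(a \gamma_2 \eta \sigma') \cdot x' \cdot a$ and $\alpha'$ as $\sigma' \cdot x' \cdot (a \gamma_2)$, and taking legs $(a \gamma_2 \eta \sigma',\, a,\, \sigma',\, a \gamma_2)$ with body $x'$. All four legs are nonempty, and the cross-product $\sigma' \cdot x' \cdot a = \gamma_1 a$ is a suffix of $\alpha$, strict because $a \gamma_2 \eta$ is nonempty. Hence $\gamma_1 a \notin L$ by reducedness, and $L$ is four-legged again.

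The only real thing to verify in each case is that the chosen cross-product word is a \emph{strict} infix of $\alpha$, which boils down to noting that $\gamma_1, \gamma_2$, and the letter $a$ on each side of $\alpha$, are nonempty. This is the main (very mild) obstacle; the rest is just a choice of body and bookkeeping on the legs.
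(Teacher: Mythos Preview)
Your argument is correct and matches the paper's proof almost verbatim: the paper also peels off the outer letter of $\gamma_1$ (resp.\ $\gamma_2$), uses it as the body, and observes that the resulting cross-product $\gamma_1 a$ (resp.\ $a\gamma_2$) is a strict infix of~$\alpha$. One cosmetic point: in your $|\gamma_1|\geq 2$ case, with the leg tuple $(a\gamma_2\eta\sigma',\,a,\,\sigma',\,a\gamma_2)$ the word $\sigma' x' a$ you check is actually the $\gamma x\beta$ cross-product rather than $\alpha x\delta$; this is harmless since the two pairs of legs can be swapped, but you may want to list the legs in the order $(\sigma',\,a\gamma_2,\,a\gamma_2\eta\sigma',\,a)$ to match the definition literally.
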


\begin{proof}

  As $\gamma_1$ is non-empty, let us write $\gamma_1 = \chi' x$.
  If $\gamma_1$ has length 2 or more, then $\chi'$ is non-empty, and we have
  $\alpha' = \chi' x (a \gamma_2) \in L$ and
  $\alpha = (a \gamma_2 \eta \chi') x a \in L$.
  Let us take $x$ as body and consider the cross-product word $\alpha'' = \chi'
  x a = \gamma_1 a$. The word $\alpha''$ is a strict infix of
  $\alpha = a \gamma_2 \eta \gamma_1 a
  \in L$. So, as $L$ is infix-free, we have $\alpha'' \notin L$, witnessing that
  $L$ is four-legged. Hence, it suffices to focus on the case where
  $\gamma_1 = x$ has length 1.

  Now, as $\gamma_2$ is non-empty, let us write $\gamma_2 = y \chi''$.
  If $\gamma_2$ has length 2 or more, then $\chi''$ is non-empty, and we have
  $\alpha = a y (\chi'' \eta \gamma_1 a) \in L$ and $\alpha' = (\gamma_1 a)
  y \chi'' \in L$.
  Let us take $y$ as body and consider the cross-product word $\alpha'' = a y
  \chi'' = a \gamma_2$. The word $\alpha''$ is a strict infix of~$\alpha \in L$,
  so as $L$ is infix-free we know that $\alpha'' \notin L$, witnessing that~$L$ is
  four-legged.

  Hence, we conclude that $L$ is four-legged unless $\gamma_1$ and $\gamma_2$
  both have length 1.
\end{proof}

  Thanks to \cref{clm:gammagamma}, we know that $\alpha = a x \eta y a \in L$
  and $\alpha' = y a x \in L$, for some letters $x, y \in \Sigma$ and some word
  $\eta \in \Sigma^*$. 
  So all that remains is to show:

  \begin{claim}
    \label{clm:abcacab}
    Let $L'$ be an infix-free language that contains words $a x \eta y a$ 
    and $y a x$
    for some $a, x, y \in \Sigma$ and $\eta
    \in \Sigma^*$. Then $\resset(L')$ is NP-hard.
  \end{claim}

  \begin{figure}
    \null\hfill
    \begin{subfigure}[b]{0.4\linewidth}
      \centering
      \includegraphics[scale=0.4]{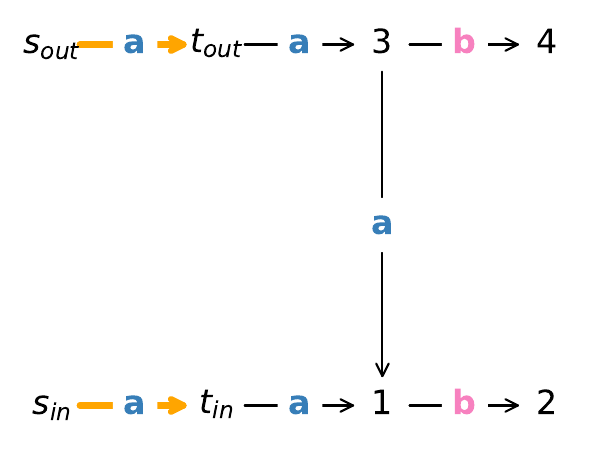}
    \end{subfigure}\hfill
    \begin{subfigure}[b]{0.4\linewidth}
      \centering
      \includegraphics[scale=0.4]{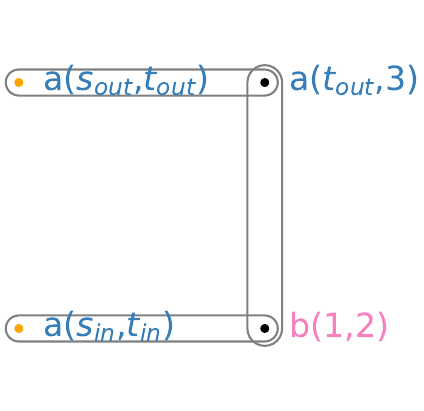}
    \end{subfigure}    \hfill\null
    \vspace{-3mm}
    \caption{Completed gadget used in the proof of \cref{clm:aab}}
    \label{fig:aab}
  \end{figure}

  To show this claim, we make an auxiliary claim:

  \begin{claim}
    \label{clm:aab}
    Let $L'$ be an infix-free language that contains a word of the form $aab$ for
    $a,b \in \Sigma$ with $a \neq b$. Then $\resset(L')$ is NP-hard.
  \end{claim}

  \begin{proof}
    We use the gadget whose completion is pictured on \cref{fig:aab} (left).
    Let us now discuss the condensed hypergraph of matches. First observe that
    there are three matches forming the word $aab$, and walks of length at most
    2 must form strict infixes of $aab$ so they do not correspond to words
    of~$L'$.
    Note that, as $ab$ is not in~$L'$ (as a strict infix of~$aab \in L'$), then
    we know that every match using $3 \xrightarrow{a} 1$ must also use the fact
    $t_v \xrightarrow{a} 3$; hence $3 \xrightarrow{a} 1$ can be eliminated by
    node-domination. Likewise, $b$ is not in~$L'$, so every match using $3
    \xrightarrow{b} 4$ must also use $t_v \xrightarrow{a} 3$, and $3
    \xrightarrow{b} 4$ is eliminated by node domination.
    This ensures that the only walk of length greater than 3
    which does not contain $aab$ as an infix, namely, the $aaa$ path from $s_v$
    to $1$, is subsumed in the condensation in the case where $aaa \in L'$:
    namely, after eliminating $3 \xrightarrow{a} 1$ by node elimination, the
    remaining facts $\{s_v \xrightarrow{a} t_v, t_v \xrightarrow{a} 3 \}$
    correspond to the match $\{s_v \xrightarrow{a} t_v, t_v \xrightarrow{a} 3,
    3 \xrightarrow{b} 4\}$ after we have used node domination to eliminate $3
    \xrightarrow{b} 4$. Thus, we can build a condensed hypergraph of matches by
    considering only the walks that form the word $aab$, and then applying
    condensation we obtain \cref{fig:aab} (right). This establishes that we have
    a gadget, so we can conclude by \cref{prp:gadgethard}.
  \end{proof}

  We can now conclude the proof of \cref{clm:abcacab}:

  \begin{figure}
    \centering
    \begin{subfigure}[b]{0.7\linewidth}
      \includegraphics[scale=0.4]{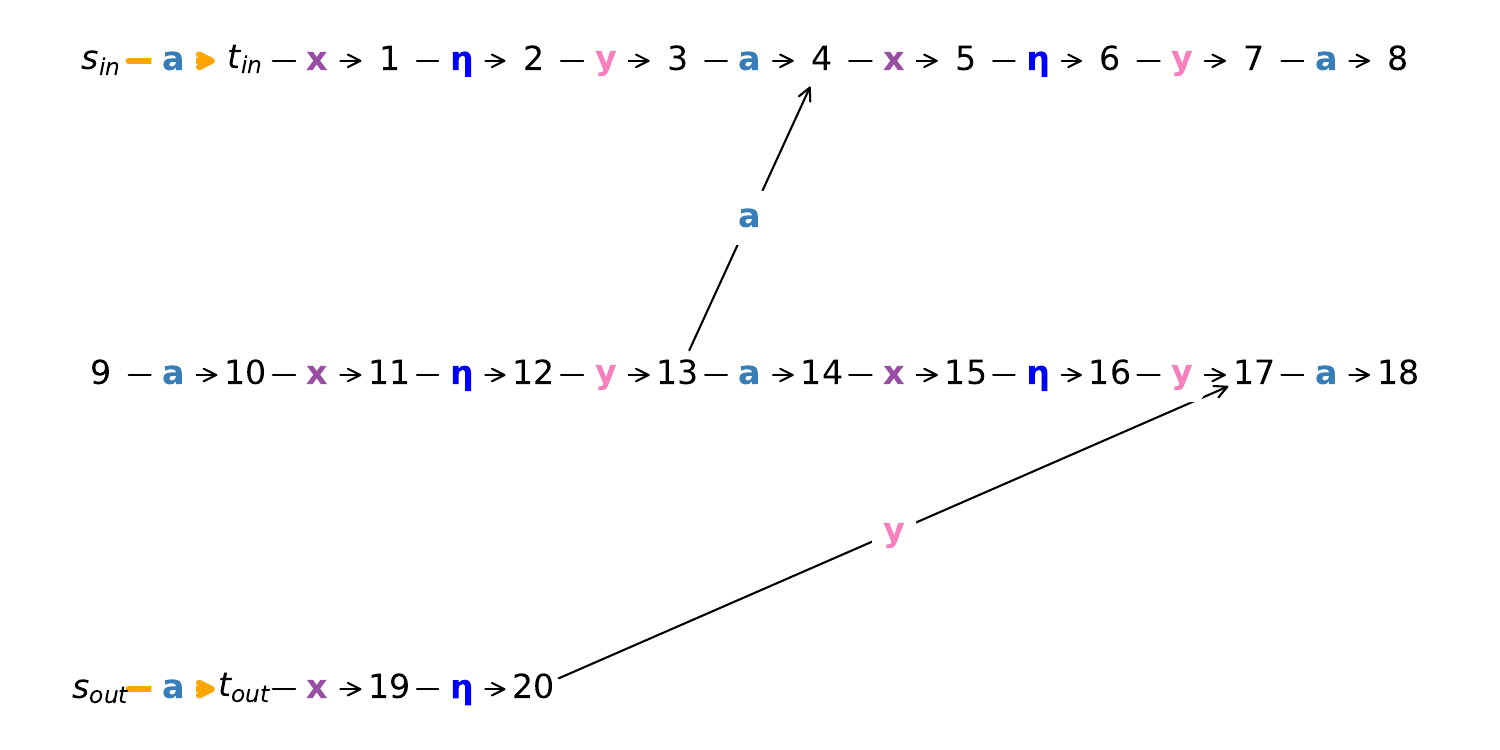}
    \end{subfigure}
    \hfill
    \begin{subfigure}[b]{0.23\linewidth}
      \centering
      \includegraphics[scale=0.4]{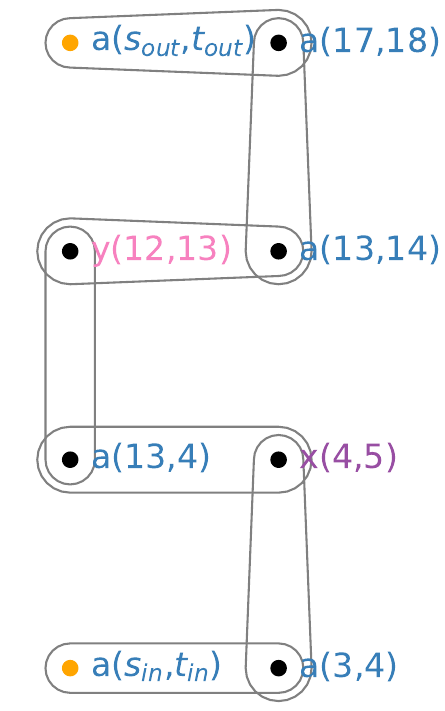}
    \end{subfigure}   \hfill\null
    \vspace{-3mm}
    \caption{Completed gadget used in the proof of \cref{clm:abcacab}}
    \label{fig:abcacab}
  \end{figure}

  \begin{proof}[Proof of \cref{clm:abcacab}]
    We first rule out the case where $y=a$. Then $L'$ contains $yax=aax$, and
    we conclude directly by \cref{clm:aab} or \cref{clm:aaa} depending on
    whether $x=a$ or not.

    We next rule out the case where $x=a$. Then, using \cref{prp:mirror}, let us
    show hardness for $\mirror{L'}$. This language contains $x a y$, and as $x =
    a$ we can apply \cref{clm:aab} or \cref{clm:aaa} again depending on whether $y=a$ or not, and conclude.

    Hence, in the sequel, we assume that $x$ and $y$ are both distinct from $a$.
    (We may have $x=y$ or not.)

    We then use the gadget whose completion is pictured in
    \cref{fig:abcacab} (left), with a condensed hypergraph of matches shown on
    figure \cref{fig:abcacab} (right). The same construction, and the same
    condensed hypergraph of matches, is used no matter whether $x$ and $y$ are
    the same letter or not.

    To see why the condensed hypergraph of matches is as pictured on
    \cref{fig:abcacab}, let us describe the directed walks in the
    completion of the gadget. The ones forming words of the form $ax\eta ya$, and
    $yax$, are guaranteed to be matches: the ones forming strict infixes of
    these are guaranteed not be matches because the language is infix-free; and the
    ones containing one of these as an infix are guaranteed to be eliminated by
    node domination. In particular, walks that start within an $\eta$-edge, or
    end within an $\eta$-edge, or both, are not relevant. Indeed, observe that
    $\eta$-edges are always followed by (some prefix of) $yax\eta$. So a match
    starting within an $\eta$-edge (possibly using all facts of the edge) must
    include the subsequent $yax$ (because
    $\eta ya$ is a strict infix of $ax \eta ya$ and so is not in the language),
    and the match is then subsumed by the $yax$-match. Likewise, $\eta$-edges
    are always preceded by (some suffix of $\eta yax$), so matches ending within
    an $\eta$-edge (possibly using all facts of the edge) must include the
    preceding $yax$ (because $ax\eta$ is a strict infix of $ax \eta ya$) and
    then they are subsumed by the $yax$ match.

    For this reason, to build the condensed hypergraph of matches, it suffices
    to consider walks that start and end at the edges labeled $a$ and $x$ and
    $y$. One can check that such walks either are infixes of $ax\eta ya$ or
    contain $yax$ as an infix, so their membership to~$L$ is determined by the
    fact that $L$ contains $yax$ and $ax\eta ya$ and is infix-free. Building the
    corresponding condensed hypergraph of matches and further applying
    condensation rules gives \cref{fig:abcacab} (right), which is an odd path,
    so we conclude by \cref{prp:gadgethard}.
  \end{proof}

  Applying \cref{clm:abcacab} with $L' \coloneq  L$, we conclude that $\resset(L)$ is
  NP-hard in the remaining cases of the non-overlapping case. This concludes the
  case distinction and concludes the proof of \cref{thm:repeated-letter}.

\section{Additional Complexity Results}
\label{sec:other}
We have shown the tractability of resilience for local languages
(\cref{prp:ro-ptime}),
and shown hardness for four-legged languages (\cref{prp:four-legged}) and for
finite infix-free languages featuring a word with a repeated letter
(\cref{thm:repeated-letter}).
Our results imply a dichotomy 
over languages with a neutral letter
(\cref{prp:dicho}), but not in general.
In this section, we 
classify some languages not covered by these results, and highlight remaining
open cases.
We only consider
languages that are infix-free, and mostly study finite languages.

We first illustrate two cases of languages that are tractable but 
not local.
One case are the languages we call \emph{bipartite chain languages}, such
as 
$ab|bc$, which are finite languages intuitively obtained from a bipartite graph on letter endpoints. For them, we can compute resilience
via a variant of the product construction 
of \cref{prp:ro-ptime},
simply by ``reversing'' some matches. 
Another case are the languages we call \emph{one-dangling languages}, such as $abc|be$,
which are intuitively obtained by extending a (possibly infinite) infix-free local
language with a single two-letter word with which it shares at most one letter.
For them, we can again compute resilience by a more involved encoding to a flow
problem.
We conclude by showing examples of queries not covered by our previous
hardness results but for which resilience can be shown intractable.

\subsection{Tractability of Bipartite Chain Languages}
Let us define the
\emph{chain languages}:

\begin{definition}[Chain languages]
  A language $L$ is a \emph{chain language} if:
    \begin{enumerate}
    \item No word in $L$ contains a repeated letter.
    \item Each word of $L$ of length at least~2 may only share its first or last
      letter with other words in~$L$:
        formally,
        for each word $\alpha = a \gamma b$ in~$L$, no letter of~$\gamma$
        occurs in another word of~$L$.
    \end{enumerate}
\end{definition}

Note that chain languages are always finite, because they contain at most
$|\Sigma|$ words of length $>2$: indeed, each word of length 3 or more contains
intermediate letters which cannot occur in another word. We will focus on chain
languages satisfying an extra condition, called being \emph{bipartite}:

\begin{definition}[Endpoint graph, BCL]
  \label{def:BCL}
The \emph{endpoint graph} of a language~$L$ is the undirected graph
$(\Sigma, E)$ having as vertices the letters of~$\Sigma$, and
having an edge $\{a, b\}$ for $a \neq b$ in $\Sigma$ iff $a$ and $b$ are
the endpoints of some word of~$L$ of length at least~$2$ (i.e., $L$ contains a
word of the form $a \alpha b$ or $b \alpha a$).

  A chain language is a \emph{bipartite chain language} (BCL) if its endpoint
  graph is bipartite.
\end{definition}

\begin{example}[\cref{Fig_Bipartite_Chain}]
  \label{ex:bipartite_chain}
  The languages $ab|bc$ (from \cref{expl:four-legged}) and $axyb|bztc|cd|dea$
  are BCLs; the  language $ab|bc|ca$ is a chain language but not a BCL. None of
  these languages are local.
\end{example}

Thus, there are two kinds of chain languages: those that are bipartite, and
those which are not. For 
chain languages~$L$ that are not bipartite, we conjecture that
$\resset(L)$ is always NP-hard. We leave this open, but show
hardness for the specific non-bipartite chain language of \cref{ex:bipartite_chain}:

\begin{proposition}
  The resilience problem for the language $ab|bc|ca$ is NP-hard.
  \label{prp:abbcca}
\end{proposition}
\begin{proof}

  \begin{figure}
    \null\hfill
    \begin{subfigure}[b]{0.4\linewidth}
      \centering
      \includegraphics[scale=0.4]{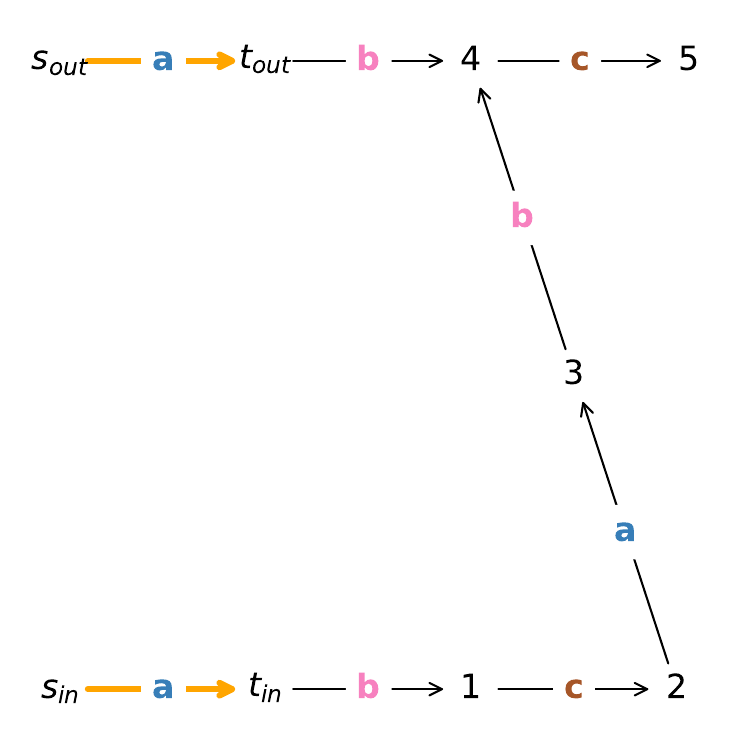}
    \end{subfigure}
    \hfill
    \begin{subfigure}[b]{0.4\linewidth}
      \centering
      \includegraphics[scale=0.4]{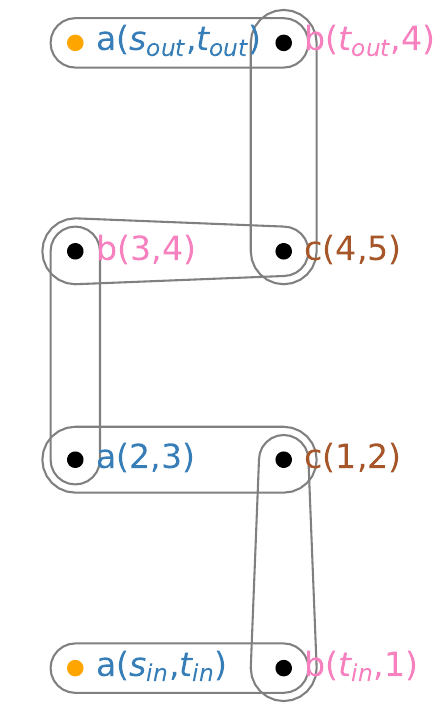}
    \end{subfigure}    
    \hfill\null
    \caption{Completed gadget used in the proof of \cref{prp:abbcca}}
    \label{fig:abbcca}
  \end{figure}

  We claim that the gadget shown in \cref{fig:abbcca} is valid for $ab|bc|ca$.
  Indeed, the gadget has the following properties:
  We can see that it meets all the criteria of being a pre-gadget.
  We can also generate a condensed hypergraph of matches that forms an odd path, and hence this gadget is a valid hardness gadget for $ab|bc|ca$.
\end{proof}

\begin{figure}
\centering
\includegraphics[scale=0.4]{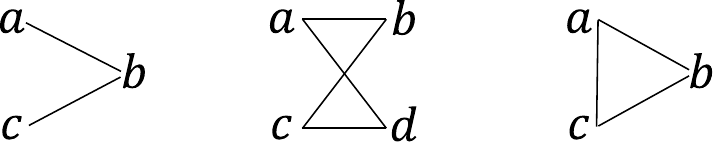}
\caption{Illustrations of the endpoint graphs for the three languages mentioned
  in \cref{ex:bipartite_chain}.}
\label{Fig_Bipartite_Chain}      
\end{figure}

We now focus on chain languages that are bipartite, and we will show
tractability of resilience for such languages in
the rest of this section.
Notice that the definition of BCLs is \emph{symmetric}:
any word $\alpha$ of~$L$ can be replaced by the mirror
word of~$\alpha$ and the resulting language is still a BCL.
Also notice that the class of BCLs is
\emph{incomparable} to local languages (even finite local languages): indeed,
$a x^* b$ and $axb|axc$ are local languages but
not BCLs, and \cref{ex:bipartite_chain} shows BCLs that are not local.
Further note that BCLs containing no word of length less than 2 are always
infix-free.
Further, we have the immediate analogue of \cref{lemma:qwerty}, which we
state here and prove in \cref{apx:redbcl2}:

\begin{toappendix}
  \subsection{Proof of \cref{lem:redbcl22}}
  \label{apx:redbcl2}
\end{toappendix}

\begin{lemmarep}
  \label{lem:redbcl22}
  For any BCL $L$, the language $\red(L)$ is also a BCL.
\end{lemmarep}

\begin{toappendix}
  To prove \cref{lem:redbcl22}, we prove the following more general claim:

\begin{lemma}
  \label{lem:redbcl}
  For any BCL $L$, any $L' \subseteq L$ is also a BCL.
\end{lemma}

To establish this, we  first show that being a chain language is closed under taking subsets:

\begin{lemma}
  \label{lem:redbcl2}
  For any chain language $L$, any $L' \subseteq L$ is also a chain language.
\end{lemma}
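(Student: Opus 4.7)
The plan is to verify directly that the two defining conditions for chain languages are preserved under taking subsets, since both conditions are inherently monotone (removing words can only make the conditions easier to satisfy).

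First, I would fix a chain language $L$ and an arbitrary subset $L' \subseteq L$, and check condition~(1): since every word of~$L'$ is also a word of~$L$, and no word of~$L$ contains a repeated letter by assumption, the same holds for~$L'$. This step is immediate.

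Second, I would check condition~(2). Take any word $\alpha = a \gamma b \in L'$ with $|\alpha| \geq 2$. Since $\alpha \in L' \subseteq L$, and $L$ satisfies condition~(2), no letter appearing in~$\gamma$ occurs in any other word of~$L$. But every word of $L'$ is a word of~$L$, so \emph{a fortiori} no letter of~$\gamma$ occurs in another word of~$L'$. This establishes condition~(2) for~$L'$.

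There is no real obstacle here: the argument is just the observation that the chain language conditions are universal statements quantified over words of the language, so restricting to a sublanguage only weakens the hypotheses under which these conditions need to hold. The proof is essentially a two-line check and will serve as a preparatory lemma used in the proof of \cref{lem:redbcl} to additionally handle the bipartite-endpoint-graph requirement (since removing words can only remove edges from the endpoint graph, preserving bipartiteness).
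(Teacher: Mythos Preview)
Your proposal is correct and takes essentially the same approach as the paper: both simply check the two defining conditions of chain languages and note that they are trivially preserved when passing to a sublanguage. The paper's proof is slightly terser but the reasoning is identical.
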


\begin{proof}
  Let $L' \subseteq L$. We check the conditions of chain languages:
  \begin{itemize}
      \item $L'$ cannot contain a word with a repeated letter

      \item There
        cannot be a word $\alpha$ of $L'$ sharing an intermediate letter with
        another word $\beta$ of~$L'$ without $\alpha$ and $\beta$ being
        counterexamples for~$L$ being a chain language.
  \end{itemize}

  This establishes that $L'$ is a chain language.
\end{proof}

  Then we can prove \cref{lem:redbcl}:

  \begin{proof}[Proof of \cref{lem:redbcl}]
    Let $L$ be a BCL and take $L' \subseteq L$. By \cref{lem:redbcl2}, we know
    that $L'$ is a chain language. Further, if the endpoint graph
  of~$L$ is bipartite, then the same is true of~$L'$, because the endpoint graph
  of~$L'$ is a subgraph of that of~$L$. Hence, $L'$ is a BCL.
\end{proof}

  This implies in particular \cref{lem:redbcl2}, which concludes.
\end{toappendix}

We can now state our tractability result: the resilience problem is tractable for all BCLs, even in combined
complexity and in bag semantics:

\begin{proposition}[Bipartite chain languages are tractable]
    \label{prp:rpn}
    If $L$ is a BCL then $\resbag(L)$ is in PTIME. Moreover, the problem is also
    tractable in combined complexity: given an
    $\epsilon$NFA $A$ that recognizes a BCL language $L$, and a database $D$, we
    can compute $\resbag(\query_L,D)$ in $\tilde{O}(|A| \times |D|^2 \times
    |\Sigma|^2)$.
  \end{proposition}

We present the proof in the rest of this subsection. At a high level, we proceed by partitioning the letters of~$\Sigma$ based on the bipartition of the
  endpoint graph. We then build a flow network like for \cref{prp:ro-ptime}, but
  we reverse some words depending on the sides of the partition in which their
  endpoints fall. Before doing all this, however, in order to establish our precise runtime
  bounds we must show a preliminary claim (\cref{lem:nfachain}):
  the input $\epsilon$NFA recognizing the BCL language can be tractably
  converted to an explicit representation of its language. We state this below,
  and as the proof is routine we defer to \cref{apx:nfachain}:

\begin{toappendix}
  \subsection{Proof of \cref{lem:nfachain}}
  \label{apx:nfachain}
\end{toappendix}

  \begin{lemmarep}
    \label{lem:nfachain}
    Given an $\epsilon$NFA $A$ such that the language $\LL(A)$ recognized
    by~$A$ is a chain language, we can compute
    in time $O(|\Sigma|^2 \times |A|)$ the finite language $L$ as an explicit list
    of words.
  \end{lemmarep}

  \begin{toappendix}
  To prove this result, we first need some standard definitions on
  automata~\cite[Section 1.3.1]{sakarovitch2009elements}:

  \begin{definition}
    \label{def:trim}
    Let $A = (S,I,F,\Delta)$ be an $\epsilon$NFA.
    We say that a state $s \in S$ is \emph{accessible} if there is a path
    in~$A$ from a state of~$I$ to~$s$ using transitions of~$\Delta$. We say
    that $s$ is \emph{co-accessible} if there is a path in~$A$ from~$s$ to a
    state of~$F$ using transitions of~$\Delta$. We say that $s$ is
    \emph{useful} if it is both accessible and coaccessible. We say that $A$ is
    \emph{trimmed} (or \emph{trim}~\cite{sakarovitch2009elements}) if every state of~$A$ is useful.
  \end{definition}

  A standard result on automata~\cite{sakarovitch2009elements} is that they can be trimmed in linear time:

  \begin{claim}
    \label{clm:trim}
    Given an $\epsilon$NFA $A$, we can compute in time $O(|A|)$ an $\epsilon$NFA
    $A'$ which is trimmed and such that $\LL(A) = \LL(A')$.
  \end{claim}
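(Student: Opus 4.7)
The plan is to carry out the standard two-pass trimming procedure on $A = (S, I, F, \Delta)$ and argue that each pass is linear in~$|A|$, and that language equivalence is preserved.

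First, I would compute the set of accessible states by performing a graph search (BFS or DFS) on the underlying directed graph of~$A$, starting from the set $I$ and following all transitions (both $\epsilon$-transitions and letter transitions) in the forward direction. Since each state and each transition is touched only a constant number of times, this runs in time $O(|A|)$. Second, I would compute the set of co-accessible states analogously, by running a graph search on the \emph{reverse} transition graph starting from the set~$F$. This also runs in time $O(|A|)$.

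Next, I would let $U \subseteq S$ be the intersection of the accessible and the co-accessible sets, i.e., the set of useful states. I would then define $A' \coloneq (U, I \cap U, F \cap U, \Delta')$ where $\Delta'$ consists of those transitions of $\Delta$ whose source and target both belong to~$U$. Building $A'$ from these two precomputed sets takes time $O(|A|)$ as well, since it suffices to iterate once over~$S$ and once over~$\Delta$. By construction, $A'$ is an $\epsilon$NFA in which every state is both accessible (via a path inside~$A'$ from $I \cap U$) and co-accessible (via a path inside~$A'$ to~$F \cap U$), so $A'$ is trimmed as required.

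Finally, I would verify that $\LL(A) = \LL(A')$. The inclusion $\LL(A') \subseteq \LL(A)$ holds because every run of~$A'$ is, by construction, a run of~$A$. For the reverse inclusion, consider any accepting run of~$A$ on a word $w$: every state it visits lies on a path from some state of~$I$ to some state of~$F$ in~$A$, so every such state belongs to~$U$, and hence the run is also a valid run in~$A'$. No step of this argument presents a real obstacle; the only small subtlety is the corner case where $U = \emptyset$ (e.g., when $\LL(A) = \emptyset$), which the construction handles uniformly by producing an automaton with no states and hence with empty language.
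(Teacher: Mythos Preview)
Your proposal is correct and follows essentially the same approach as the paper's proof: two linear-time graph traversals (forward from $I$, backward from $F$) to compute the useful states, followed by restriction of the automaton to those states, with the language-equivalence argument based on the fact that accepting runs only traverse useful states. Your treatment is slightly more detailed (you mention the $U=\emptyset$ corner case), but the argument is the same.
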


  \begin{proof}
    We first do one graph traversal in~$A$ from~$I$ to compute the accessible
    states in $O(|A|)$. Then, we do one graph traversal from~$F$ following the transitions
    of~$\Delta$ backwards to compute the co-accessible states in~$O(|A|)$. Then,
    we obtain $A'$ by simply removing all states of~$A$ that are not useful, and
    removing all transitions that involve these states. The computation of
    $A'$ from~$A$ is in time $O(|A|)$ overall.

    It is clear that $A'$ is
    then trimmed, because every remaining state $s$ is a state which was useful
    in~$A$, and the paths witnessing it still witness that $s$ is useful
    in~$A'$. Further, $A'$ recognizes the same language as $A$: the construction
    clearly ensures $\LL(A') \subseteq \LL(A)$, and conversely there are no
    accepting runs
    in~$A$ involving states that are not useful, so every accepting run in~$A$
    is also an accepting run of~$A'$. This concludes the proof.
  \end{proof}

  We then need an auxiliary claim that intuitively corresponds to the
  setting where we remove the first and last letters from a chain language:

  \begin{claim}
    \label{lem:nfauniq}
    Let $A$ be an $\epsilon$NFA recognizing a finite language where no words
    contain a repeated letter and where no words
    share the same letter, i.e., for every word $\alpha \in \LL(A)$ all
    letters of~$\alpha$ are distinct, and for any two words $\alpha$ and $\beta$
    with $\alpha \neq \beta$ the set of letters of~$\alpha$ and of~$\beta$ are
    distinct. Then $\LL(A)$ is finite, and we can compute in $O(|A|)$ the finite
    language $\LL(A)$ as an explicit list of words.
  \end{claim}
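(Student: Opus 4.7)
The plan is to trim the automaton, contract the $\epsilon$-transition SCCs, and extract the words by following deterministic chains of letter-transitions in the resulting condensation.

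First, I would establish that $\LL(A)$ is necessarily finite and has small total description size: since every word of~$\LL(A)$ has pairwise distinct letters, each word has length at most $|\Sigma|$, and since distinct words use disjoint letter sets, there are at most $|\Sigma|$ words. More importantly, the sum of the lengths of all words in $\LL(A)$ is bounded by the number of distinct letters that label transitions of~$A$, hence by $|A|$. So the output fits in $O(|A|)$ space, which matches the target running time.

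For the algorithm, I would first trim $A$ in $O(|A|)$ using \cref{clm:trim}, so that every remaining state lies on some accepting run. Then I would compute the strongly connected components of the subgraph induced by $\epsilon$-transitions in $O(|A|)$ by Tarjan's algorithm, and contract each SCC to obtain a condensed automaton $\hat A$ whose states are $\epsilon$-closure equivalence classes and whose letter-edges carry the original letter labels. In $\hat A$, the set of initial super-states is the set of classes containing some initial state, and the set of final super-states is defined analogously.

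The key structural observation driving the extraction is the following: for any non-initial super-state $\hat s$ of $\hat A$, every accepting run reaching $\hat s$ has already read at least one letter, say $a$; by the uniqueness hypothesis this $a$ determines the unique word containing it, and hence determines the letter that must be read next (if any). Therefore, all letter-edges leaving $\hat s$ carry the same label. At an initial super-state, several letter-edges with distinct labels may leave, but each such label is the first letter of a distinct word of $\LL(A)$. From each letter-edge $\hat s_0 \xrightarrow{a_1} \hat s_1$ leaving an initial super-state, I would then simply follow the forced chain $\hat s_1 \xrightarrow{a_2} \hat s_2 \xrightarrow{a_3} \cdots$ until no letter-edge leaves the current super-state, and emit $a_1 \cdots a_k$ as a word of $\LL(A)$ provided the final super-state reached is a final super-state (note that by trimming, if this super-state has no outgoing letter-edge, it must contain a final state). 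Since each letter labels edges of a single chain, the total work across all chains is $O(|A|)$.

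The main obstacle will be pinning down the structural claim precisely: I must verify that after trimming and $\epsilon$-SCC contraction, no non-initial super-state has two outgoing letter-edges with different labels. This reduces to showing that two such edges would yield two accepting runs sharing a nontrivial common prefix but reading different continuations, contradicting the hypothesis that no letter appears in two different words. Once this is established, the chain-extraction procedure is straightforward and the $O(|A|)$ bound follows from the output-size analysis above.
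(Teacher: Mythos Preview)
Your high-level plan is sound, and the output-size bound is correct, but the key structural claim as you state it is false, and the algorithm consequently fails on simple inputs.

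The problem is that contracting $\epsilon$-SCCs does \emph{not} eliminate all $\epsilon$-transitions: it only removes $\epsilon$-cycles, leaving a DAG of $\epsilon$-edges between the resulting super-states. So a super-state that does not contain an initial state may still be reachable from one by $\epsilon$-edges alone, meaning no letter has been read when it is reached. Concretely, take $q_0$ initial, $q_2,q_3$ final, transitions $q_0 \xrightarrow{\epsilon} q_1$, $q_1 \xrightarrow{a} q_2$, $q_1 \xrightarrow{b} q_3$; the language is $\{a,b\}$ and satisfies the hypotheses. All $\epsilon$-SCCs are singletons, $\{q_1\}$ is non-initial by your definition yet has outgoing letter-edges with two different labels, so your structural claim fails. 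Worse, the only initial super-state $\{q_0\}$ has no outgoing letter-edge at all, so your chain-extraction starts zero chains and outputs the empty language. Your proposed verification (``two such edges would yield two accepting runs sharing a nontrivial common prefix'') breaks precisely here: the shared prefix reads $\epsilon$, which is trivial.

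The fix is to take as ``initial'' the full forward $\epsilon$-closure of the initial states, not just the super-states containing one; this is exactly the set $S_l$ the paper computes. You must also let your chain-following traverse the remaining inter-SCC $\epsilon$-edges between consecutive letter reads (or else precompute, for each super-state, a letter-edge reachable by $\epsilon$-moves). Once patched, your argument goes through. The paper avoids SCC contraction entirely: it does a single graph traversal from $S_l$ over all transitions, maintaining for each visited state a pointer into a trie encoding the unique word read so far (unique by a claim analogous to yours, phrased for states outside the backward $\epsilon$-closure $S_r$ of the final states), and then reads the accepted words off the trie's marked leaves. The two approaches are close in spirit; the paper's is a bit more direct because it never has to reason about residual $\epsilon$-edges.
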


  \begin{proof}
    The finiteness of the language is simply because it has cardinality at most
    $|\Sigma|+1$: except for the empty word, every word of the language
    eliminates a letter that cannot be used elsewhere.
    We assume that $A$ is trimmed (see \cref{def:trim}); this can be enforced in
    linear time by \cref{clm:trim}.
    We perform two graph traversals to compute in linear time the set $S_l$ of
    states having a path from some initial state via $\epsilon$-transitions, and
    the set $S_r$ of states having a path to some final state via
    $\epsilon$-transitions.

    We first check if $S_l \cap S_r \neq \emptyset$: if so we add
    $\epsilon$ to the list of recognized words.

    Now, we claim the following (*): for any state $s$ of~$A$ which is
    not in $S_r$, there is at most one word $\alpha$ labeling the runs
    from the initial states of~$A$ to~$s$. Indeed, assume by way of
    contradiction that there are two such words $\alpha \neq \beta$. As $A$ is trimmed, let $\gamma$ be a word
    labeling a run from $s$ to a final state of~$A$: as $s \notin S_r$, we know
    that $\gamma$ is non-empty. So $A$ accepts $\alpha \gamma$ and
    $\beta \gamma$, which are different words because $\alpha \neq \beta$, and
    which share letters (the ones in $\gamma$), contradicting the assumption
    on~$\LL(A)$.

    Hence, we compute the list of the non-empty words accepted by~$A$ by doing a
    graph exploration from $S_l$, maintaining for each state a word
    that corresponds to its witness word for the claim (*). The witness word of the
    states of~$S_l$ is~$\epsilon$. For any other state $s$ that is reached in
    the traversal, then either it is reached by an $\epsilon$-transition
    from a state $s'$ and its
    witness word is the same as that of~$s'$, or it is reached by an
    $x$-transition for some letter $x$ from a state~$s'$ and its witness word is $\alpha x$ where
    $\alpha$ is the witness word of~$s'$. We store the witness words as
    pointers to a trie data structure:
    \begin{itemize}
      \item The root of the trie corresponds to the
    word $\epsilon$ and the nodes of $S_l$ point to the root;
  \item If we traverse a
    transition $s' \xrightarrow{\epsilon} s$ then the pointer for $s$ is the
    same as the pointer for~$s'$;
  \item If we traverse a transition $s'
    \xrightarrow{x} s$ then the pointer for $s$ leads to an $x$-labeled child
        of the node which is the pointer for~$s'$ (creating such a child if it does
        not yet exist).
    \end{itemize}
    When we reach a state of $S_r$
    then we add a mark the corresponding node in the trie. Note that, as $A$ is
    trimmed, all leaves of the trie are marked.

    At the end of this process, the trie has linear size because the graph
    traversal is in linear time. Further, let us show that the trie obeys a very
    strong structural restriction: the least
    common ancestor of any two marked nodes is the root (**). The claim (**)
    immediately implies that the trie has the following structure (***): a union of disjoint
    branches (i.e., the only node with more than one child is the root), with
    the marked nodes being precisely the set of leaves.
    Let us now show claim (**). Assume by way of contradiction that there are
    two marked nodes $n_1$ and $n_2$ whose
    lowest common ancestor $n$ is not the root. Then this witnesses that
    there is a non-empty word $\alpha$ (corresponding to the label of the path
    from the root to~$n$) such that $\alpha \beta$ and
    $\alpha \gamma$ are accepted by the automaton, for two distinct $\beta$ and
    $\gamma$ corresponding to paths from $n$ to $n_1$ and $n_2$ respectively.
    However, $\alpha \beta$ and $\alpha \gamma$ are two distinct words which
    share the letters of~$\alpha$, a contradiction again.

    Thanks to the structure of the trie (***), we can finish by computing
    explicitly the labels of the paths from the root of
    the trie to the leaves. Claim (***) implies that this has size linear in the
    size of the trie, hence the overall process runs in
    $O(|A|)$. This concludes the proof.
  \end{proof}

  We can now prove our claim on $\epsilon$NFA recognizing chain languages
  (\cref{lem:nfachain}) from \cref{lem:nfauniq}:

  \begin{proof}[Proof of \cref{lem:nfachain}]
    Let $A$ be the $\epsilon$NFA recognizing a language $L \coloneq \LL(A)$ which is asserted to be
    a chain language. We assume that $A$ is trimmed (see \cref{def:trim}); this can be enforced in
    linear time by \cref{clm:trim}.

  We first deal in a special way with words of length $0$ and $1$ that
    are accepted by~$A$. We first test in linear time as in the proof of
    \cref{lem:nfauniq}
     whether $A$ accepts~$\epsilon$, and if so, add $\epsilon$ to the list of
    accepted words.

  We then compute
  in linear time in~$A$ the set $\Sigma_1$ of letters $a$ such that the
  single-letter word $a$ is accepted by~$L$. We compute these by first computing
  in linear time the states $S_l$ accessible from initial states via
  $\epsilon$-transitions, computing in linear time the states $S_r$ accessible
  from final states by traversing $\epsilon$-transitions backwards, then going
  over all transitions with source in $S_l$ and target in $S_r$ and letting
  $\Sigma_1$ be the set of words which label such transitions. 
  We add all the words of $\Sigma_1$ to the list of accepted words.

  Hence, after this preprocessing in $O(|A|)$, we can focus on the words of
    length 2 or more that are accepted by~$A$. We now consider every pair of letters
    $(a,b) \in
    \Sigma^2$ (possibly $a = b$), and compute explicitly the list of words
    accepted by~$A$ that start with~$a$ and end with~$b$. Let $S_{l,a}$ be the set
    of states defined in the
    following way: we take all $a$-labeled
    transitions whose tail is in $S_l$, and let $S_{l,a}$ be the set of all states
    that are heads of such transitions. Further, let $S_{r,b}$  be the set of states
    defined in the following way: we take all $b$-labeled transitions whose head
    in is $S_r$, and let $S_{r,b}$ be the set of all states that are tails of
    such transitions. The sets $S_{l,a}$ and $S_{r,b}$ are computed in
    time $O(|A|)$. We will now compute explicitly the set of words $\alpha$ such
    that $a \alpha b \in L$. This clearly amounts to computing the words
    accepted by the $\epsilon$NFA $A_{a,b}$ obtained from~$A$ by changing the
    initial states to be the states of~$S_{l,a}$ and changing the final states
    to be the states of~$S_{r,b}$.
    Hence, we can compute the list of words
    in time $O(|A|)$ by \cref{lem:nfauniq}. The algorithm has an overall
    complexity of $O(|A| \times |\Sigma|^2)$, which concludes the proof.
  \end{proof}
  \end{toappendix}
  
  We can then prove \cref{prp:rpn} from \cref{lem:nfachain}:

  \begin{inlineproof}[Proof of \cref{prp:rpn}]
    Thanks to \cref{lem:nfachain}, we compute in time $O(|A| \times |\Sigma|^2)$
    the language $L \coloneq \LL(A)$ as an explicit list of words.

    We first process $L$ in linear time to eliminate the words of length
    $0$ or $1$. If $L$ contains $\epsilon$, then resilience is trivial; so in
    the sequel we assume that $L$ does not contain $\epsilon$. For every letter
    $a \in \Sigma$ such that $L$ contains the single-letter word~$a$, then we
    know that in the input database $D$ we must remove every $a$-fact for the
    query to be false; so we can preprocess $D$ to remove all such facts and
    remove the corresponding one-letter words from~$L$. Hence, at the end of
    this preprocessing, in time $O(|A|+|D|)$, we can assume that $L$ consists of
    words of length 2 or more.

    We now point out that $L$ is infix-free. This is by
  definition of a chain language and from the assumption that the words have
  length at least 2. Indeed, if $\alpha \in L$ is a strict infix of $\beta \in L$ then
  $\beta$ must have length at least 3, so there are letters in $\beta$ that are neither
  the first nor the last letter, by definition of $L$ being a chain language these letters
  cannot occur in a word of $L \setminus \{\beta\}$, so they do not occur in
    $\alpha$, so $\alpha$ being an infix
  of~$\beta$ implies that $\alpha$ has length $1$, contradicting the assumption that the
  words of~$L$ have length 2 or more.

  We call an \emph{endpoint letter} a letter of $\Sigma$ which occurs as the
  first or last letter of a word of~$L$, i.e., has some incident edges in the
  endpoint graph $G$ of~$L$. The set of such letters can be computed in linear
  time from the explicit list. We can also build explicitly the endpoint graph
    $G$ of~$L$ from the explicit list, in linear time in the list.
  We know by assumption that $G$ is bipartite: we compute a witnessing
  bipartition in linear time by the obvious greedy algorithm. 
  Let us assign each endpoint letter of~$L$ to either the source or
  target partition. 
  Now, the words of $L$ can be partitioned between the \emph{forward words}, in
  which the first letter is assigned to the source partition and the
  last letter to the target partition;
  and the \emph{reversed words}, where the opposite occurs. 

  Then we construct a flow network as follows:
  \begin{itemize}
    \item We add two fresh vertices $t_{\text{source}}$ and $t_{\text{target}}$ which are the source and the target respectively of the network.
    \item For every fact $a(v,v') \in D$ we define two vertices $\start_{a,v,v'}$ and $\eend_{a,v,v'}$.
    \item For every letter $a$, there is an edge with capacity
      $\mult(a(v,v'))$ from $\start_{a,v,v'}$ to
      $\eend_{a,v,v'}$ for every $v,v'$ such that 
      we have $a(v,v')\in D$.

    \item For every forward word containing an infix $ab$, where $a,b$ are letters in $\Sigma$,
      and $a(v,v')$ and $b(v',v'')\in D$, there is an edge with capacity
      $+\infty$ from $\eend_{a,v,v'}$ to $\start_{b,v',v''}$  

    \item For every reversed word containing an infix $ab$, where $a,b$ are
      letters in $\Sigma$, and $a(v,v')$ and $b(v',v'')\in D$, there is an edge with capacity $+\infty$ 
      from $\eend_{b,v',v''}$  to
      $\start_{a,v,v'}$.

    \item There is an edge with capacity $+\infty$ from $t_{\text{source}}$ to
      every vertex of the form $\start_{a,v,v'}$ where $a$ is an endpoint
      assigned to the source partition, and from every vertex of the form
      $\eend_{a,v,v'}$ to~$t_{\text{target}}$ where $a$ is an endpoint in the target partition.
  \end{itemize}
  The explicit list of words has size $O(|A| \times |\Sigma|^2)$, and the
  construction runs in time $O(|D|^2 \times |A| \times |\Sigma|^2)$.

  We can see that this construction contains a single edge for every fact,
  whose capacity is the multiplicity of said fact. 
  Every witness of the instance corresponds to a source-to-target walk by
  construction. 
  Indeed, for each forward word, we have a path from $t_{source}$
  to the edge corresponding to its first letter, then following each letter in
  succession until we reach the last letter and then $t_{target}$.
  Further, for each backward word, we have a path from $t_{source}$
  to the edge corresponding to its last letter, then following each letter in
  succession in reverse order, until we reach the first letter and $t_{target}$.

  We must also show that there is no path from source to target in the network that does not correspond to a witness.
  Any path from $t_{source}$ to $t_{target}$ must start with a vertex of the
  form $\start_{a,v,v'}$ where $a$ is assigned to the source partition (in
  particular it is not a letter that occurs in the middle of any word). We
  then move to a vertex $\start_{b,v',v''}$ and there are two cases: either
  $a$ was the first letter of a forward word $\alpha$ and $ab$ occurs as an infix in that word,
  necessarily at the beginning because no middle letters are repeated, or $a$ was the
  last letter of a backward word $\alpha$ and $ba$ occurs in that word, again necessarily at
  the end because no middle letters are repeated. We describe the first case
  (corresponding to forward words), the second (corresponding to backward words)
  is symmetric. From the condition that intermediate letters of words do not
  occur in other words, whenever we reach a vertex, then
  the next vertex must correspond to the next letter of~$\alpha$. Eventually we
  reach the last letter of~$\alpha$ and the corresponding node has no outgoing
  edge in the flow network except the edge to~$t_{\text{target}}$, so we
  terminate.
  Thus, all paths in this network correspond to witnesses. 
  
  Thus, this construction maintains the one-to-one
  correspondence between the contingency sets of~$D$
  for~$\query_L$ and the cuts of the network that
  have finite cost: selecting facts to remove in the
  contingency set corresponds to cutting edges in the
  network, and ensuring that the contingency set does
  not satisfy the query corresponds to ensuring that
  no source to target paths remain. Furthermore, this is
  the sum of multiplicities of the facts in a
  contingency set is equal to the cost of the
  corresponding cut, so that minimum cuts correspond
  to minimum contingency sets. We can then, as in
  the case of local languages, simply solve the
  MinCut problem for this network in $\tilde{O}(|A| \times |D|^2 \times
  |\Sigma|^2)$,
  concluding the proof.
\end{inlineproof}

\subsection{One-Dangling Languages}
We now present another example of tractable languages, dubbed
\emph{one-dangling languages}, for which tractability
is shown by a more involved reduction to a flow problem.

\begin{definition}[One-dangling language]
  A \emph{one-dangling language} is a language that can be written as $L \cup
  \{xy\}$ where $L$ is a local language over some alphabet $\Sigma$
  and where $x$ and $y$ are distinct letters
  such that at least one of them is not in $\Sigma$. 
\end{definition}

Note that one-dangling languages cover in particular the more restricted class of languages for which tractability had been shown in
the conference version of this paper
\cite[Proposition~7.7]{amarilli2025resilience}.
They also cover the two languages $abcd|be$ and $ax^*b|xd$ for which the
complexity of resilience had been left open 
in~\cite{amarilli2025resilience}.
So we can subsume that result by
showing the following:

\begin{proposition}
  \label{prp:submod}
  Let $L \cup \{xy\}$ be a one-dangling language where $L$
  is recognized by an RO-$\epsilon$NFA $A$. 
  The problem $\resbag(L\cup \{xy\})$ is solvable in $\tilde{O}(|A| \times |D| \times |\Sigma|)$ time.  
\end{proposition}

In addition to covering more languages, this result is a combined
tractability result instead of a data complexity result, and it
gives a better polynomial exponent than
\cite[Proposition~7.7]{amarilli2025resilience}.
This is because our proof will
proceed via a reduction to a
flow problem, instead of leveraging the more complicated
technique of submodular function minimization used in~\cite{MCCORMICK2005321}.

We prove \cref{prp:submod} in the rest of this
subsection, before finally turning to other NP-hard cases in the rest of the
section.
We note that an argument used the proof was first posted online~\cite{cstheory_martin}.
We do a slight abuse of notation throughout the proof for convenience: for $D''$ a bag database and $L''$ a
language, we will talk of a \emph{contingency set of~$D''$ for~$L''$} simply to mean a
contingency set of~$D''$ for $\query_{L''}$.
  
Due to Proposition~\ref{prp:mirror}, if there is a PTIME solution for $\resbag(L\cup \{xy\})$,
then there is a PTIME solution for $\resbag(L^R\cup \{yx\})$, and the mirror of
a local language is local (as can easily be seen from~\cref{prp:carac}). Hence, we can assume without loss of generality that
$y\not\in\Sigma$.

Let $z$ be a fresh letter which is not in $\Sigma \cup \{y\}$.
Our strategy will be to rewrite the input bag database $D$ over $\Sigma \cup
\{y\}$
to a modified bag database $D'$ over  $\Sigma \cup \{z\}$,
and create a local language $L'$ over $\Sigma \cup \{z\}$
such that $\resbag(L\cup \{xy\}, D) = \resbag(L',D') + \kappa$,
where $\kappa$ is a value that we can tractably compute from $D$. 
We first define the language~$L'$, which is intuitively obtained by
replacing every appearance of the letter $x$ in~$L$ by the word $xz$. 
Formally, let $A = (S,I,F,\Delta)$ be an RO-$\epsilon$NFA for $L$. 
Let $A'$ be the automaton that results from adding one fresh state $s'$ to $S$,
and replacing the one transition $(s,x,t)$ labeled by~$x$ by two transitions $(s,x,s')$ and $(s',z,t)$.
Note that the new state $s'$ is not final, so there are no 
words in $\LL(A')$ that end in $x$.
Clearly, the automaton $A'$ is also an RO-$\epsilon$NFA, so that $L' \coloneq
\LL(A')$ is also a local language. As $L$ is a language over $\Sigma$, the
language $L'$ is defined over $\Sigma \cup \{z\}$, and further for any
$m \in \mathbb{N}$ and $\alpha_1, \ldots, \alpha_m \in \Sigma^*$ we have
$\alpha_1 x \alpha_2 x \cdots \alpha_{m-1} x \alpha_m \in L$ iff 
$\alpha_1 xz \alpha_2 xz \cdots \alpha_{m-1} xz \alpha_m \in L'$.

Let us now define how the input bag database $D$ over $\Sigma \cup \{y\}$ is rewritten to a
bag database~$D'$ over $\Sigma \cup \{z\}$.
To do this, we will extend the semantics of bag graph databases to what we
call the \emph{extended bag semantics}, where we allow 
the function $\mult$ to give negative multiplicities, or multiplicity zero,
to edges of the database.
The rest of the semantics
remains unchanged, and the definition of resilience is unchanged as well --
noting that, when computing resilience over database in extended bag semantics,
one can always assume that facts with a non-positive multiplicity are removed.
For this reason, it is easy to see that 
$\resbag(L)$ under the extended semantics
reduces in linear time to $\resbag(L)$
under the normal bag semantics.
Let us use the notation $\resbag^{\sf ex}(L)$ for a fixed language $L$
to denote the resilience
problem when posed over databases in
extended bag semantics.

We can now describe the transformation. Given a bag database $D$, let $V$ be the domain of $D$. For every node $v\in V$, let $D^-_x(v)$ be the set of 
edges of the form $u\xrightarrow{x}v$ for some $u\in V$, and
let $D^+_y(v)$ be the set of edges of the form $v\xrightarrow{y}u$ for some $u\in V$.
We rewrite $D$ by doing the following for each $v\in V$:
\begin{enumerate}
  \item add a fresh node $(v,{\sf in})$ into $V$;
  \item replace all edges of the form $u\xrightarrow{x}v \in D^-_x(v)$ by
    the edge $u\xrightarrow{x}(v,{\sf in})$, keeping the same multiplicity;
\item add an edge $(v,{\sf in}) \xrightarrow{z} v$ into $D$ and define
 $\mult((v,{\sf in}) \xrightarrow{z} v) \coloneq \sum_{e\in D^-_x(v)}\mult(e) -
    \sum_{e\in D^+_y(v)}\mult(e)$ (which may be negative);
 and 
    \item erase any edges in $D^+_y(v)$ from $D$.
\end{enumerate}
 Let us call $D'$ the resulting bag database: it is a bag database in extended
 semantics, and it is over the alphabet $\Sigma \cup \{z\}$ because applying
 the last point of the list above over all $v \in V$ will remove all $y$-facts.
 Let us define 
 the value $\kappa \coloneq \sum_{v\in V}\sum_{e\in D^+_y(v)}\mult(e)$
 (evaluating this expression in $D$).  Note that $\kappa$ can be computed in
 linear time from~$D$ as it is simply the sum of multiplicities of all edges
 labeled $y$ in $D$.

We now claim the following:
\begin{equation}
  \tag{$\dagger$} %
  \resbag(L\cup \{xy\}, D) = \resbag^{\sf ex}(L',D') + \kappa \label{eq:res}
\end{equation}
which suffices to conclude, because we have computed~$\kappa$ and we can compute
the resilience expression in the right-hand side using
\cref{prp:ro-ptime}: the language $L'$ is local, and as we explained the
complexity of computing resilience is unchanged when working over
extended bag semantics databases. To show Equation~\eqref{eq:res}, we will show the
following claim, and explain after its proof how the lemma implies
Equation~\eqref{eq:res}:

\begin{claim}
  \label{clm:flow}
	(i) For any minimal contingency set $C$ of~$D$ for $L\cup\{xy\}$, there
        exists a contingency set $C'$ of~$D'$ for~$L'$ such that $\sum_{e\in C}\mult(e) = \kappa + \sum_{e\in C'}\mult(e)$; and 
	(ii) for any minimal contingency set $C'$ of~$D'$ for~$L'$, there exists
        a contingency set $C$ of~$D$ for $L\cup\{xy\}$ such that  $\sum_{e\in C}\mult(e) = \kappa + \sum_{e\in C'}\mult(e)$.
\end{claim}
\begin{proof}

To prove (i), consider a minimal contingency set $C$ of~$D$ for $L\cup \{xy\}$. 
We will describe a set $C'\subseteq D'$ and prove that it is a contingency set of~$D'$ for~$L'$, and
that $\sum_{e\in C}\mult(e) = \kappa + \sum_{e\in C'}\mult(e)$.

\myparagraph{Constructing $C'$} The set we describe is given by inspecting every node $v\in V$, and noting that 
each satisfies at least one of two cases: (1) either every $e\in D^-_x(v)$ is in $C$, or (2) every $e\in D^+_y(v)$ is in $C$.
If neither holds, then there would be edges $u_1\xrightarrow{x}v$ and
  $v\xrightarrow{y}u_2$ in $D$ and not in $C$, and so the database $D\setminus
  C$ would contain a walk labeled $xy$, contradicting the fact that $C$ is a
  contingency set. 
We also note that if (1) holds for a node $v$, then there is no edge $e\in D^+_y(v)$ in $C$: none of those edges can be part of a walk labeled $xy$ since in $D\setminus C$ there are no remaining edges labeled $x$ that reach $v$, so if any of those edges (which have positive multiplicity) were in $C$, they could be removed and $C$ would still be a contingency set, contradicting the fact that $C$ is minimal.

Then $C'$ is defined as follows: for each $v\in V$ that satisfies (1), the edge $(v,{\sf in})\xrightarrow{z}v$ is added to $C'$; and for each $v$ that satisfies (2), and each $u\in V$, the edge $u\xrightarrow{x}(v,{\sf in})$ is added to $C'$ iff we have $u\xrightarrow{x}v\in C$. Every other edge in $C$ with a label different from $x$ and $y$ is kept into $C'$ as is.

\myparagraph{ $C'$ is a contingency set} Towards a contradiction, let us assume that $C'$ is not a contingency set of~$D'$ for $L'$, so there is a walk $w'$ labeled $\alpha'\in L'$ in $D'\setminus C'$. 
If the walk $w'$ does not visit any vertex of the form 
$(v,{\sf in})$ for some $v\in V$, then $w'$ would be present as is in $D\setminus C$ and form a word of~$L$, contradicting the fact that $C$ is a contingency set. Otherwise, whenever $w'$
  contains $u\xrightarrow{x}(v,{\sf in})$ for some $u\in V$, then it also contains $(v,{\sf in})\xrightarrow{z}v$: indeed these are the only outgoing edges of vertices of the form $(v,{\sf in})$, and $w'$ cannot finish on such a vertex because the transition labeled $x$ in $A'$ does not reach a final state. 
For every such $v$, we obtain that $v$ could not have satisfied (1) in~$D$, because $(v,{\sf in})\xrightarrow{z}v$ would have been added to $C'$ and could not have been traversed by~$w'$, and so $v$ satisfies (2). 
In this case, we have that $u\xrightarrow{x}v \not\in C$, so we
  can form a walk $w$ in $D\setminus C$ by replacing all edges of the form $u\xrightarrow{x}(v,{\sf in})$ and $(v,{\sf in})\xrightarrow{z}v$, for each $v \in V$ such that $(v,{\sf in})$ is traversed by~$w'$, by $u\xrightarrow{x}v$ in~$D \setminus C$. 
  The label of $w$ is obtained from that of~$w'$ by replacing every $xz$ by~$x$ in~$\alpha'$: this yields a word $\alpha \in L$, so that the $\alpha$-walk $w$ in $D \setminus C$ contradicts the fact that $C$ is a contingency set.

\myparagraph{Equivalence} Let us now show that $\sum_{e\in C}\mult(e) = \kappa + \sum_{e\in C'}\mult(e)$. For this, let us inspect the value $\zeta = \sum_{e\in C}\mult(e) - \sum_{e\in C'}\mult(e)$. Recall that every edge not labeled $x$ or $y$ is present in~$C'$ iff it is present in $C$, so we only consider edges labeled $x$ or $y$ in~$C$, 
and edges labeled $x$ or $z$ in~$C'$. We can decompose $\zeta$ by each $v\in V$ by looking at the edges that are added to $C'$ depending on whether $v$ satisfied (1) or (2). First, if $v$ satisfied (1), 
then none of the edges in $D^-_x(v)$ are added to $C'$, and instead we add $(v,{\sf in})\xrightarrow{z}v$. So the contribution of~$v$ to $\zeta$ is $\sum_{e\in D^-_x(v)}\mult(e) - \mult((v,{\sf in})\xrightarrow{z}v)$, which, from the definition of $\mult((v,{\sf in})\xrightarrow{z}v)$, is equal to $\sum_{e\in D^+_y(v)}\mult(e)$.
Second, if $v$ satisfied (2), then the edges from $D^-_x(v)$ that are added to $C'$ are exactly the ones in $C$ so they do not contribute to $\zeta$. 
However, we have that all edges $e\in D^+_y(v)$ are in $C$, 
and they are not added into $C'$, so these do contribute, and so the total contribution of $v$ to $\zeta$ is also $\sum_{e\in D^+_y(v)}\mult(e)$. 
We obtain that $\zeta = \sum_{v\in V}\sum_{e\in D^+_y(v)}\mult(e)$ which matches the definition of $\kappa$, and this proves that $\sum_{e\in C}\mult(e) = \kappa + \sum_{e\in C'}\mult(e)$.

\medskip

Now we move on to prove (ii). Towards this, let $C'$ be a minimal contingency set of~$D'$ for $L'$. We will proceed analogously to (i).

\myparagraph{Constructing $C$} For this, we will start by characterizing $C'$ and noting that the following property must hold for each $v\in V$: if $C'$ contains $(v,{\sf in})\xrightarrow{z}v$ then $C'$ should not contain $u\xrightarrow{x}(v,{\sf in})$ for any $u\in V$. This is because $L'$ does not contain any word ending in $x$, and the only edge that starts at $(v,{\sf in})$ is $(v,{\sf in})\xrightarrow{z}v$. Thus, any edge $u\xrightarrow{x}(v,{\sf in})$ (which has positive multiplicity) could be removed from $C'$, which would contradict the minimality of~$C'$.

Now, let us build $C$ from $C'$. For each node $v\in V$, we consider two cases: (a) $C'$ contains $(v,{\sf in})\xrightarrow{z}v$ or (b) it does not. In case (a), we add all edges in $D^-_x(v)$ to $C$, and in case (b) we add all edges in $D^+_y(v)$ to $C$, and also for each edge $u\xrightarrow{x}(v,{\sf in})$ of~$D'$,
  we add $u\xrightarrow{x}v$ to $C$ iff $u\xrightarrow{x}(v,{\sf in})\in C'$. Any other edge in $C'$ not labeled $x$ or $z$ is also added in $C$.

\myparagraph{$C$ is a contingency set}
To see that $C$ is a contingency set of~$D$ for~$L$, we note that since for each $v\in V$, we have added either all edges in $D^-_x(v)$ to $C$ or all edges in $D^+_y(v)$ to $C$, so there is indeed no walk labeled $xy$ in $D\setminus C$. Further, using a reasoning analogous to the above, if any walk in $D\setminus C$ is labeled $\alpha\in L$,
  then a walk 
  could be found in $D'\setminus C'$ whose label is the result of replacing all occurrences of~$x$ in~$\alpha$ by~$xz$, yielding a walk in $D' \setminus C'$ that achieves a word of~$L'$ and contradicting the fact that $C'$ is a contingency set.

\myparagraph{Equivalence} To show $\sum_{e\in C}\mult(e) = \kappa + \sum_{e\in C'}\mult(e)$ we refer to the paragraph {\em Equivalence} from point (i) and note that the reasoning is analogous.
\end{proof}

We end 
the proof of \cref{prp:submod}
by explaining why Equation~\eqref{eq:res}
follows from \cref{clm:flow}. First, suppose that $\resbag(L\cup \{xy\}, D) <
\resbag^{\sf ex}(L',D') + \kappa$. Let $C^*$ be a minimal contingency set of~$D$ for
$L\cup \{xy\}$, and let $C'^*$ be a minimal contingency set of~$D'$ for $L'$. We have that $\sum_{e\in C^*}\mult(e) < \kappa + \sum_{e\in
C'^*}\mult(e)$. However, due the to point (i) in \cref{clm:flow}, we know that
there exists a contingency set $C'$ of~$D$ for $L'$ such that $\sum_{e\in
C^*}\mult(e) = \kappa + \sum_{e\in C'}\mult(e)$, and so $\sum_{e\in C'}\mult(e)
< \sum_{e\in C'^*}\mult(e)$, which contradicts the fact that $C'^*$ is minimal.
Now suppose that $\resbag(L\cup \{xy\}, D) > \resbag^{\sf ex}(L',D') + \kappa$.
We reach a contradiction using an analogous argument and using point (ii) in
\cref{clm:flow}.
Thus, we have concluded the proof of \cref{prp:submod}.

\subsection{Additional NP-Hard Cases} 
Already in the case of finite infix-free languages, there are gaps remaining
between the tractable cases (local languages, and languages covered by
\cref{prp:rpn}) and the hard cases (languages containing a word with a repeated letter, and four-legged
languages) that we have identified so far. We show below that an unclassified
languages is NP-hard by building a specific gadget:

\begin{proposition}
    \label{prp:otherhard}
    $\resset(abcd|be|ef)$ and $\resset(abcd|bef)$ are NP-hard.
\end{proposition}
\begin{proof}
  \begin{figure}
    \null\hfill
    \begin{subfigure}[b]{0.4\linewidth}
      \centering
      \includegraphics[scale=0.4]{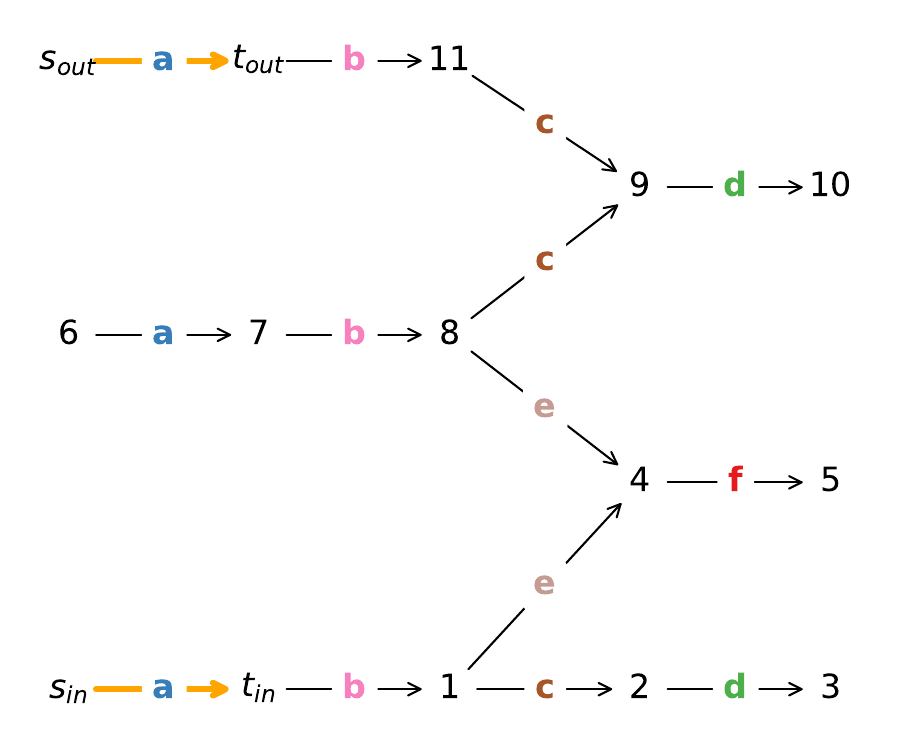}
    \end{subfigure}
    \hfill
    \begin{subfigure}[b]{0.4\linewidth}
      \centering
      \includegraphics[scale=0.4]{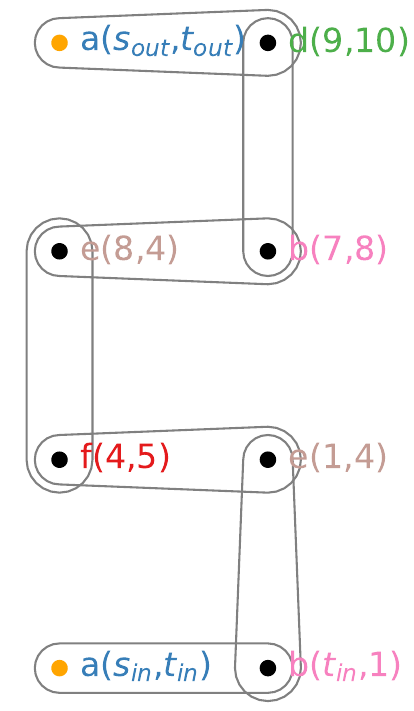}
    \end{subfigure}    \hfill\null
    \caption{Completed gadget used in the proof of \cref{prp:otherhard} regarding $\resset(abcd|be|ef)$}
    \label{fig:abcdbeef}
  \end{figure}

  \begin{figure}
    \null\hfill
    \begin{subfigure}[b]{0.4\linewidth}
      \centering
      \includegraphics[scale=0.4]{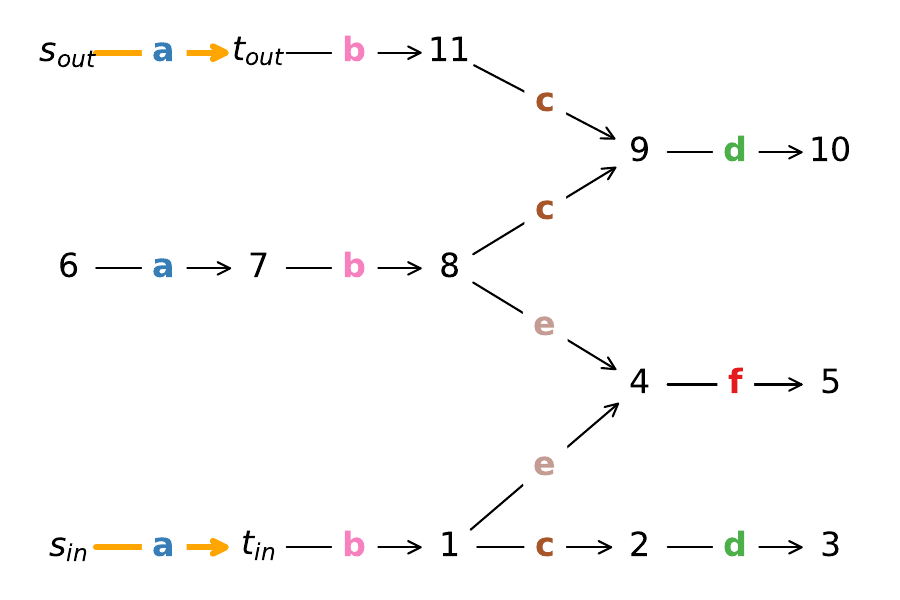}
    \end{subfigure}\hfill
    \begin{subfigure}[b]{0.4\linewidth}
      \centering
      \includegraphics[scale=0.4]{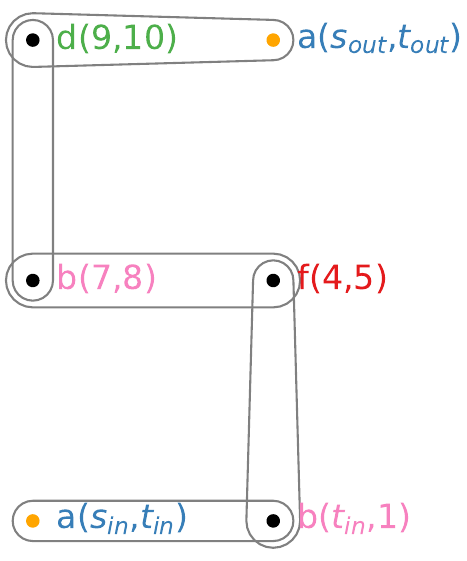}
    \end{subfigure}    \hfill\null
    \caption{Completed gadget used in the proof of \cref{prp:otherhard} regarding $\resset(abcd|bef)$}
    \label{fig:abcdbef}
  \end{figure}

  We claim that the gadget shown in \cref{fig:abcdbeef} is valid for $abcd|be|ef$ and the gadget shown in \cref{fig:abcdbef} is valid for $abcd|bef$ (also, they happen to be identical).
  We can see that the gadgets meet all the criteria of a pre-gadget,
  and show that there exists a condensed hypergraph of matches of the
  pre-gadgets for these queries that is an odd path. 
  These properties can be checked by exhaustively considering all walks in the
  gadget, and we conclude that these gadgets are valid hardness gadgets for $abcd|be|ef$ and $abcd|bef$.
\end{proof}

\section{Conclusion and Future Work} 
\label{sec:conclusion}
We have studied the complexity of resilience for RPQs, towards the goal of
characterizing the languages for which the problem is tractable. We study 
bag and set semantics, but
our results so far show no difference between the two settings,
unlike for 
CQs~\cite{makhija2024unified}. 
Refer back to \cref{fig:results} on page~\pageref{fig:results} for a summary of
our results and of some remaining open cases.
We leave open the general question of whether there are
languages $L$, regular or not, for which $\resset(L)$ is PTIME but $\resbag(L)$
is NP-hard.
We also leave open the question of classifying the complexity of resilience
in intermediate settings, e.g., bag semantics where only weights $1$ and
$\infty$ are allowed; or the setting where some relations are indicated as
\emph{exogenous} in the signature, which amounts to giving a weight of $\infty$
to all their facts. Other settings left open by our work are those where we
delete vertices (or just vertex labels) instead of edges, or where vertices and/or 
edges can be labeled by words or by non-empty words instead of single letters.

Our work does not fully characterize the tractability
boundary and leaves some cases open.
In particular, at a high level, we still do not fully understand the non-local languages that are not
four-legged, i.e., non-letter-Cartesian languages featuring counterexamples
where some of the ``legs'' are empty.

Another open question concerns \emph{non-Boolean RPQs}, i.e., where
endpoints are fixed (or given as input).
For instance, the RPQ $aa$ then seems
tractable as possible contingency sets have a simple structure; but
longer RPQs may stay hard. This problem may relate to 
\emph{length-bounded cuts}~\cite{itai1982complexity,10.1145/1868237.1868241},
or
multicuts~\cite{garg1994multiway} and skew multicuts~\cite{kratsch2015fixed}. 
Note that this non-Boolean setting corresponds to the problem of \emph{deletion
propagation with source side effects}~\cite{buneman2002propagation}, which was studied before
resilience: this problem asks about finding a minimum contingency set to remove
a specific query answer. The case of removing multiple tuples has also been
investigated~\cite{kimelfeld2013multituple}. We do not know whether a dichotomy
in one of these settings implies a dichotomy for our problem, or vice-versa.
We also do not understand how our results relate to 
VCSPs~\cite{bodirsky2024complexity}, e.g., whether classes of VCSPs with better
algorithms (in combined complexity, or in terms of the polynomial degree) can
imply tractable algorithms for some RPQ classes; or whether VCSP dichotomies
hold in set semantics (or in bag semantics
with weights restricted to $1$ and $\infty$).

Last, resilience can be studied for 
recursive queries beyond RPQs, in particular two-way RPQs or Datalog, though
this would require new techniques (unlike RPQs, these queries are not
``directional''). We note that for two-way RPQs, and more generally for queries
expressible as monadic Datalog, the existence of finite duals implies a
dichotomy on the resilience problem in bag semantics via the VCSP techniques of~\cite{bodirsky2024complexity}.

\section{Acknowledgments}

The authors are grateful to Charles Paperman for help about local languages,
to Rémi de Pretto for pointing out a minor problem, 
to Blaz Korecic and Martín Andrighetti for insightful discussions 
towards the proof of \cref{prp:submod}, 
and to the reviewers of the conference version for their valuable feedback.
Part of this work was conducted while four of the authors were participating
in the semester-long program 
\href{https://simons.berkeley.edu/programs/logic-algorithms-database-theory-ai}{Logic and Algorithms in Database Theory and AI}
at the Simons Institute for the Theory of Computing.
Amarilli was partially supported by the ANR project EQUUS ANR-19-CE48-0019, by the Deutsche Forschungsgemeinschaft (DFG, German Research Foundation) - 431183758, and by the ANR project ANR-18-CE23-0003-02 (“CQFD”).
Gatterbauer and Makhija were partially supported 
by the National Science Foundation (NSF) under award number IIS-1762268.

\bibliographystyle{plainurl}
\bibliography{BIB/propagation.bib}

\begin{thebibliography}{10}

\bibitem{ahuja1997computational}
Ravindra~K Ahuja, Murali Kodialam, Ajay~K Mishra, and James~B Orlin.
\newblock
  \href{https://citeseerx.ist.psu.edu/document?repid=rep1&type=pdf&doi=e5579a220aa778e60c5fea472af2104ab96daf52}{Computational
  investigations of maximum flow algorithms}.
\newblock {\em European Journal of Operational Research}, 97(3):509--542, 1997.
\newblock \href {https://doi.org/10.1016/S0377-2217(96)00269-X}
  {\path{doi:10.1016/S0377-2217(96)00269-X}}.

\bibitem{cstheoryepsro}
Antoine Amarilli.
\newblock Regular languages accepted by an automaton with at most one
  transition per letter.
\newblock Theoretical Computer Science Stack Exchange.
\newblock (version: 2020-11-02).
\newblock URL: \url{https://cstheory.stackexchange.com/q/47810}.

\bibitem{gadgetrepo}
Antoine Amarilli, Wolfgang Gatterbauer, Neha Makhija, and Mika{\"e}l Monet.
\newblock Code repository for implementation, 2024.
\newblock
  \url{https://osf.io/6kfj3/?view_only=b894b3ce8f0540f39233d761087ed92d}.

\bibitem{amarilli2025resilience}
Antoine Amarilli, Wolfgang Gatterbauer, Neha Makhija, and Mika{\"e}l Monet.
\newblock \href{https://dl.acm.org/doi/pdf/10.1145/3725245}{Resilience for
  regular path queries: Towards a complexity classification}.
\newblock {\em PACMMOD (PODS)}, 3(2):1 -- 18, 2025.
\newblock \href {https://doi.org/10.1145/372524} {\path{doi:10.1145/372524}}.

\bibitem{amarilli2025confluence}
Antoine Amarilli, Mika{\"e}l Monet, and R{\'e}mi~De Pretto.
\newblock \href{https://arxiv.org/abs/2510.09286} {Confluence of the
  node-domination and edge-domination hypergraph rewrite rules}.
\newblock Preprint: \url{https://arxiv.org/abs/2510.09286}, 2025.

\bibitem{amarilli2025locality}
Antoine Amarilli, Mika{\"e}l Monet, and R{\'e}mi~De Pretto.
\newblock \href{https://arxiv.org/abs/2511.07361} {Locality testing for NFAs is
  PSPACE-complete}.
\newblock Preprint: \url{https://arxiv.org/abs/2511.07361}, 2025.

\bibitem{amarilli2025approximating}
Antoine Amarilli, Timothy van Bremen, Octave Gaspard, and Kuldeep~S. Meel.
\newblock \href{https://arxiv.org/abs/2309.13287} {Approximating queries on
  probabilistic graphs}, 2024.
\newblock \href {https://arxiv.org/abs/2309.13287} {\path{arXiv:2309.13287}}.

\bibitem{10.1145/1868237.1868241}
Georg Baier, Thomas Erlebach, Alexander Hall, Ekkehard K\"{o}hler, Petr Kolman,
  Ond\v{r}ej Pangr\'{a}c, Heiko Schilling, and Martin Skutella.
\newblock
  \href{https://www.cs.le.ac.uk/people/te17/papers/talg2010.pdf}{Length-bounded
  cuts and flows}.
\newblock {\em ACM Trans. Algorithms}, 7(1):1--27, 2010.
\newblock \href {https://doi.org/10.1145/1868237.1868241}
  {\path{doi:10.1145/1868237.1868241}}.

\bibitem{barcelo2019boundedness}
Pablo Barcel{\'o}, Diego Figueira, and Miguel Romero.
\newblock Boundedness of conjunctive regular path queries.
\newblock In {\em ICALP}, volume 132, pages 104:1--104:15, 2019.
\newblock \href {https://doi.org/10.4230/LIPIcs.ICALP.2019.104}
  {\path{doi:10.4230/LIPIcs.ICALP.2019.104}}.

\bibitem{barrington2005first}
David A~Mix Barrington, Neil Immerman, Clemens Lautemann, Nicole Schweikardt,
  and Denis Th{\'e}rien.
\newblock First-order expressibility of languages with neutral letters or: The
  {C}rane {B}each conjecture.
\newblock {\em Journal of Computer and System Sciences}, 70(2):101--127, 2005.
\newblock \href {https://doi.org/10.1016/j.jcss.2004.07.004}
  {\path{doi:10.1016/j.jcss.2004.07.004}}.

\bibitem{berstel1996local}
Jean Berstel and Jean-Eric Pin.
\newblock Local languages and the {B}erry-{S}ethi algorithm.
\newblock {\em Theoretical Computer Science}, 155(2):439--446, 1996.
\newblock \href {https://doi.org/10.1016/0304-3975(95)00104-2}
  {\path{doi:10.1016/0304-3975(95)00104-2}}.

\bibitem{bodirsky2024complexity}
Manuel Bodirsky, {\v{Z}}aneta Semani{\v{s}}inov{\'a}, and Carsten Lutz.
\newblock \href{https://arxiv.org/abs/2309.15654}{The complexity of resilience
  problems via valued constraint satisfaction problems}.
\newblock In {\em {LICS}}, pages 14:1--14:14, 2024.
\newblock \href {https://doi.org/10.1145/3661814.3662071}
  {\path{doi:10.1145/3661814.3662071}}.

\bibitem{bodirsky2025complexity}
Manuel Bodirsky, {\v Z}aneta Semani{\v s}inov{\'a}, and Carsten Lutz.
\newblock The complexity of resilience problems via valued constraint
  satisfaction, 2025.
\newblock Full version with proofs.
\newblock \href {https://arxiv.org/abs/2309.15654v6}
  {\path{arXiv:2309.15654v6}}.

\bibitem{buneman2002propagation}
Peter Buneman, Sanjeev Khanna, and Wang-Chiew Tan.
\newblock \href{https://homepages.inf.ed.ac.uk/opb/papers/PODS2002.pdf}{On
  propagation of deletions and annotations through views}.
\newblock In {\em PODS}, pages 150--158, 2002.
\newblock \href {https://doi.org/10.1145/543613.54363}
  {\path{doi:10.1145/543613.54363}}.

\bibitem{Cormen:2009dz}
Thomas~H. Cormen, Charles~E. Leiserson, Ronald~L. Rivest, and Clifford Stein.
\newblock {\em Introduction to algorithms}.
\newblock MIT Press, Cambridge, Mass., 3rd ed edition, 2009.
\newblock URL: \url{https://dl.acm.org/doi/book/10.5555/1614191}.

\bibitem{ford1956maximal}
Lester~Randolph Ford and Delbert~R Fulkerson.
\newblock
  \href{https://web.archive.org/web/20170809102956id_/http://bioinfo.ict.ac.cn/~dbu/AlgorithmCourses/Lectures/FordFulkerson1956.pdf}{Maximal
  flow through a network}.
\newblock {\em Canadian journal of Mathematics}, 8:399--404, 1956.
\newblock \href {https://doi.org/10.4153/cjm-1956-045-5}
  {\path{doi:10.4153/cjm-1956-045-5}}.

\bibitem{DBLP:journals/pvldb/FreireGIM15}
Cibele Freire, Wolfgang Gatterbauer, Neil Immerman, and Alexandra Meliou.
\newblock The complexity of resilience and responsibility for self-join-free
  conjunctive queries.
\newblock {\em {PVLDB}}, 9(3):180--191, 2015.
\newblock \href {https://doi.org/10.14778/2850583.2850592}
  {\path{doi:10.14778/2850583.2850592}}.

\bibitem{DBLP:conf/pods/FreireGIM20}
Cibele Freire, Wolfgang Gatterbauer, Neil Immerman, and Alexandra Meliou.
\newblock \href{https://arxiv.org/abs/1907.01129}{New results for the
  complexity of resilience for binary conjunctive queries with self-joins}.
\newblock In {\em {PODS}}, pages 271--284, 2020.
\newblock \href {https://doi.org/10.1145/3375395.3387647}
  {\path{doi:10.1145/3375395.3387647}}.

\bibitem{garg1994multiway}
Naveen Garg, Vijay~V Vazirani, and Mihalis Yannakakis.
\newblock
  \href{https://citeseerx.ist.psu.edu/document?repid=rep1&type=pdf&doi=ce18a622d6cfc38ff739f850f01a750d33587fd4}{Multiway
  cuts in directed and node weighted graphs}.
\newblock In {\em ICALP}, pages 487--498. Springer, 1994.
\newblock \href {https://doi.org/10.1007/3-540-58201-0_92}
  {\path{doi:10.1007/3-540-58201-0_92}}.

\bibitem{gottlob2004hypergraph}
Georg Gottlob.
\newblock Hypergraph transversals.
\newblock In {\em FoIKS}, pages 1--5. Springer, 2004.
\newblock \href {https://doi.org/10.1007/978-3-540-24627-5_1}
  {\path{doi:10.1007/978-3-540-24627-5_1}}.

\bibitem{henzinger2024deterministic}
Monika Henzinger, Jason Li, Satish Rao, and Di~Wang.
\newblock \href{https://arxiv.org/abs/2401.05627}{Deterministic near-linear
  time minimum cut in weighted graphs}.
\newblock In {\em SODA}, pages 3089--3139. SIAM, 2024.
\newblock \href {https://doi.org/10.1137/1.9781611977912.111}
  {\path{doi:10.1137/1.9781611977912.111}}.

\bibitem{hirai2021minimum}
Hiroshi Hirai and Ryuhei Mizutani.
\newblock \href{https://arxiv.org/abs/2006.00153}{Minimum 0-extension problems
  on directed metrics}.
\newblock {\em Discrete Optimization}, 40:100642, 2021.
\newblock \href {https://doi.org/10.1016/j.disopt.2021.100642}
  {\path{doi:10.1016/j.disopt.2021.100642}}.

\bibitem{itai1982complexity}
Alon Itai, Yehoshua Perl, and Yossi Shiloach.
\newblock The complexity of finding maximum disjoint paths with length
  constraints.
\newblock {\em Networks}, 12(3):277--286, 1982.
\newblock \href {https://doi.org/10.1002/net.3230120306}
  {\path{doi:10.1002/net.3230120306}}.

\bibitem{ito1991outfix}
Masami Ito, Helmut J{\"u}rgensen, Huei-Jan Shyr, and Gabriel Thierrin.
\newblock Outfix and infix codes and related classes of languages.
\newblock {\em Journal of Computer and System Sciences}, 43(3):484--508, 1991.
\newblock \href {https://doi.org/10.1016/0022-0000(91)90026-2}
  {\path{doi:10.1016/0022-0000(91)90026-2}}.

\bibitem{karp1972reductibility}
Richard~M Karp.
\newblock Reductibility among combinatorial problems.
\newblock In {\em Complexity of Computer Computations. The IBM Research
  Symposia Series}, pages 85--103. Springer, 1972.
\newblock \href {https://doi.org/10.1007/978-1-4684-2001-2_9}
  {\path{doi:10.1007/978-1-4684-2001-2_9}}.

\bibitem{kimelfeld2013multituple}
Benny Kimelfeld, Jan Vondr{\'a}k, and David~P Woodruff.
\newblock
  \href{https://vldb.org/archives/website/2014/program/http://www.vldb.org/pvldb/vol6/p1558-kimelfeld.pdf}{Multi-tuple
  deletion propagation: Approximations and complexity}.
\newblock {\em PVLDB}, 6(13):1558--1569, 2013.
\newblock \href {https://doi.org/10.14778/2536258.25362}
  {\path{doi:10.14778/2536258.25362}}.

\bibitem{kolmogorov2019testing}
Vladimir Kolmogorov.
\newblock Testing the complexity of a valued {CSP} language.
\newblock In {\em {ICALP}}, volume 132, pages 77:1--77:12.
  Schloss-Dagstuhl-Leibniz Zentrum f{\"u}r Informatik, 2019.
\newblock \href {https://doi.org/10.4230/LIPIcs.ICALP.2019.77}
  {\path{doi:10.4230/LIPIcs.ICALP.2019.77}}.

\bibitem{koucky2005bounded}
Michal Kouck{\`y}, Pavel Pudl{\'a}k, and Denis Th{\'e}rien.
\newblock \href{https://iuuk.mff.cuni.cz/~koucky/papers/wires.ps}{Bounded-depth
  circuits: Separating wires from gates}.
\newblock In {\em STOC}, pages 257--265, 2005.
\newblock \href {https://doi.org/10.1145/1060590.1060629}
  {\path{doi:10.1145/1060590.1060629}}.

\bibitem{kratsch2015fixed}
Stefan Kratsch, Marcin Pilipczuk, Micha{\l} Pilipczuk, and Magnus Wahlstr\"om.
\newblock
  \href{https://citeseerx.ist.psu.edu/document?repid=rep1&type=pdf&doi=4f7b341a8150b2ea893f446616ec851870991112}{Fixed-parameter
  tractability of multicut in directed acyclic graphs}.
\newblock {\em SIAM Journal on Discrete Mathematics}, 29(1):122--144, 2015.
\newblock \href {https://doi.org/10.1137/120904202}
  {\path{doi:10.1137/120904202}}.

\bibitem{makhija2024unified}
Neha Makhija and Wolfgang Gatterbauer.
\newblock A unified approach for resilience and causal responsibility with
  {Integer} {Linear} {Programming} ({ILP}) and {LP} relaxations.
\newblock {\em Proc. ACM Manag. Data}, 1(4):228:1 -- 228:27, 2023.
\newblock \href {https://doi.org/10.1145/3626715} {\path{doi:10.1145/3626715}}.

\bibitem{martens2019dichotomies}
Wim Martens and Tina Trautner.
\newblock Dichotomies for evaluating simple regular path queries.
\newblock {\em ACM Transactions on Database Systems (TODS)}, 44(4):1--46, 2019.

\bibitem{MCCORMICK2005321}
S.~Thomas McCormick.
\newblock Submodular function minimization.
\newblock In {\em Discrete Optimization}, volume~12 of {\em Handbooks in
  Operations Research and Management Science}, pages 321--391. Elsevier, 2005.
\newblock \href {https://doi.org/10.1016/S0927-0507(05)12007-6}
  {\path{doi:10.1016/S0927-0507(05)12007-6}}.

\bibitem{mcnaughton1971counter}
Robert McNaughton and Seymour~A Papert.
\newblock {\em Counter-Free Automata (MIT research monograph no. 65)}.
\newblock The MIT Press, 1971.
\newblock URL: \url{https://dl.acm.org/doi/book/10.5555/1097043}.

\bibitem{DBLP:journals/siamcomp/MendelzonW95}
Alberto~O. Mendelzon and Peter~T. Wood.
\newblock
  \href{https://citeseerx.ist.psu.edu/document?repid=rep1&type=pdf&doi=10287acfb08806de28881919978a3da673584b69}{Finding
  regular simple paths in graph databases}.
\newblock {\em {SIAM} J. Comput.}, 24(6):1235--1258, 1995.
\newblock \href {https://doi.org/10.1137/S009753979122370X}
  {\path{doi:10.1137/S009753979122370X}}.

\bibitem{menger:1927}
Karl Menger.
\newblock Zur allgemeinen {K}urventheorie.
\newblock {\em Fundamenta Mathematicae}, 10:96--115, 1927.
\newblock \href {https://doi.org/10.4064/fm-10-1-96-115}
  {\path{doi:10.4064/fm-10-1-96-115}}.

\bibitem{cstheory_martin}
Mika{\"e}l Monet.
\newblock Is this variant of min-cut {PTIME}?
\newblock Theoretical Computer Science Stack Exchange.
\newblock URL: \url{https://cstheory.stackexchange.com/q/54790}.

\bibitem{mpri}
Jean-\'Eric Pin.
\newblock {\em \href{https://www.irif.fr/~jep/PDF/MPRI/MPRI.pdf}{Mathematical
  foundations of automata theory}}.
\newblock Online book preprint, 2025.
\newblock URL: \url{https://www.irif.fr/~jep/PDF/MPRI/MPRI.pdf}.

\bibitem{sakarovitch2009elements}
Jacques Sakarovitch.
\newblock {\em Elements of automata theory}.
\newblock Cambridge University Press, 2009.
\newblock \href {https://doi.org/10.1017/CBO9781139195218}
  {\path{doi:10.1017/CBO9781139195218}}.

\end{thebibliography}

\end{document}